\documentclass[a4paper]{article}

\usepackage{color}

\usepackage[margin=40mm]{geometry}

\usepackage{amsmath,amsthm,amssymb,latexsym,eufrak}

\newtheorem{theorem}{Theorem}[section]
\newtheorem{proposition}[theorem]{Proposition}
\newtheorem{lemma}[theorem]{Lemma}

\newtheorem{remark}[theorem]{Remark}
\newtheorem{definition}[theorem]{Definition}
\newtheorem{example}[theorem]{Example}
\newtheorem{assumption}[theorem]{Assumption}

\newcommand{\VB}{V(B)}
\newcommand{\vb}{v(B)}
\newcommand{\hb}{\bar{h}(B)}

\begin{document}

\title{On the existence of personal equilibria\thanks{The first author gratefully acknowledges the support of Universit\'{e} Paris-Saclay Springboard PIA Excellences, ANR 21-EXES-0003.  The second author gratefully 
acknowledges the support of the National Research, Development and Innovation Office (NKFIH) through grants K 143529,
KKP 137490 
and also within the framework of the Thematic Excellence Program 2021 (National Research subprogramme 
``Artificial intelligence, large networks, data security: mathematical foundation and applications'').}}

\author{Laurence Carassus\thanks{
Universit\'{e} Paris-Saclay, Centrale-Sup\'{e}lec, Math\'{e}matiques et Informatique pour la Complexit\'{e} et les Syst\`{e}mes and CNRS FR-3487, 91190, Gif-sur-Yvette, France;  laurence.carassus@centralesupelec.fr} 
\and Mikl\'os R\'asonyi\thanks{HUN-REN Alfr\'ed R\'enyi Institute of Mathematics and E\"otv\"os Lor\'and University, Budapest, 
Hungary; rasonyi@renyi.hu}}

\date{\today}

\maketitle{}

\begin{abstract}
We consider an investor who, while maximizing his/her expected utility, also compares the outcome to a reference entity.
We recall the notion of personal equilibrium and show that, in a multistep, generically incomplete financial market model
such an equilibrium indeed exists, under appropriate technical assumptions.
\end{abstract}

\noindent\textbf{JEL Classification:} G11, G12.

\smallskip{}

\noindent\textbf{AMS Mathematics Subject Classification (2020):} 91G10, 91G80.


\section{Introduction}

It was first suggested by \cite{markowitz} that utility for an economic agent should be
defined not on wealth itself but on gains and losses relative to some reference point (present wealth in \cite{markowitz}).
Prospect theory, introduced in \cite{kt}, is also based on comparison to a reference point. Becoming a cornerstone of 
behavioural economics, this theory led to further developments, involving probability distortions, see \cite{tk}.
 
The papers \cite{bell,loomes} treated models with ``disappointment aversion'' where the actual outcome of
an investment is compared to an expected outcome via a gain-loss function. In \cite{kr2006} the
outcome of the investment and the reference point are compared pointwise. 
See \cite{review} for a review of reference-based 
preferences.

Our starting point is the model of \cite{kr2006}, further investigated in
\cite{kr2007,kr2008,kr2009}. A significant novelty of these papers is that the authors define the notion of \emph{personal equilibrium}: 
investors should rationally choose an action that is optimal when played against a reference point that is 
an independent copy of \emph{itself}.

The first natural question is whether such equilibria are realizable at all.
In one-step models, personal equilibria have been characterized in \cite{paolo-andrea1}, under mild conditions.
Characterization has been given for complete markets, too, in \cite{paolo-andrea2}. 

Most markets of interest, however, are incomplete and have multiple time steps. 
It is thus a fundamental question whether personal equilibria exist in such market models, too. 
As far as we know, this problem has not been addressed elsewhere yet.

In order to demonstrate that the notion of
personal equilibrium has a bearing on practically relevant situations, we will prove its existence
in a fairly general setting of multi-step (generically incomplete) models. In this way, we provide a reassuring theoretical
guarantee that this notion is non-void for sufficiently complex models of financial markets.  
As usual, our equilibrium considerations will be based on the existence of fixed points which, in the present case, requires
rather involved arguments. We show continuous dependence of the strategies ``on the past'' which allows to apply Schauder's 
fixed point theorem in a Banach space of continuous functions. We further rely on results of
\cite{cr2007b} where continuity of strategies with respect to preferences was established.

Section \ref{sec2} rigorously formulates our assumptions and main theorem. Proofs are given in Section \ref{sec3}, while
Section \ref{sec4} presents certain technical results that are used in our main line of arguments.

\section{Model assumptions and results}\label{sec2}
Throughout this paper, we will be working on a probability space 
$(\Omega,\mathcal{F},\mathbb{P})$. All sigma-algebras will
be assumed to be completed with respect to $\mathbb{P}$, without further mention. 
Expectation under $\mathbb{P}$ will be denoted by $\mathbb{E}[\cdot]$, $\mathbb{R}_{+}:=\{x\in\mathbb{R}:x\geq 0\}$,  $|x|$ is for the Euclidean norm of $x\in \mathbb{R}^k$, whatever $k\geq1$, 
and $\mathcal{B}(X)$ designates the Borel sigma-algebra on any topological space $X$. 

\subsection{Hypotheses on the financial market model}

We first elaborate on the information structure.
We postulate that the filtration is generated by a sequence of bounded independent random variables, and the probability space is large enough to support an auxiliary random variable that will be used in the statements
of our results below.

\begin{assumption}\label{filtra} Let $m\geq 1$ be an integer. 

\begin{enumerate}

\item Let
$\varepsilon_{t}$, $1\leq t\leq T$ be $\mathbb{R}^{m}$-valued independent random variables. The investor's decisions 
are based on the (completed) natural filtration $\mathcal{F}^{\varepsilon}_{t}:=\sigma(\varepsilon_{1},
\ldots,\varepsilon_{t})$, $1\leq t\leq T$. 
($\mathcal{F}^{\varepsilon}_{0}$ coincides with the $\mathbb{P}$-null sets.)

\item Moreover, the $\varepsilon_{t}$ are bounded, say, $|\varepsilon_{t}|\leq C_{\varepsilon}$, $1\leq t\leq T$ for some
constant $C_{\varepsilon}$. 

\item There is a random variable $\hat{\varepsilon}$ which is 
independent of $\mathcal{F}^{\varepsilon}_{T}$ and is uniformly distributed on $[0,1]$. 

\end{enumerate}
\end{assumption}

An element of $\mathbb{R}^{T\times m}$ will be denoted most often by $e$, where $e=(e_{1},\ldots,e_{T})$ with $e_{t}\in\mathbb{R}^{m}$
for $1\leq t\leq T$. If $e\in\mathbb{R}^{T\times m}$ then $e^{t}$ will refer to $(e_{1},\ldots,e_{t})$, for $1\leq t\leq T$.

A risky asset with price $S_{t}$ at time $t$ will be considered.
We stipulate that
price increments are H\"older-continuous, bounded functions of the factors generating the investor's filtration $(\mathcal{F}^{\varepsilon}_{t})_{0\leq t\leq T}$.


\begin{assumption}\label{price1} The initial price $S_{0}$ is constant.
For $1\leq t\leq T$ there exist functions $f_{t}:\mathbb{R}^{t\times m}\to \mathbb{R}$ such that 
$$
\Delta S_{t}:=S_{t}-S_{t-1}=f_{t}(\varepsilon_{1},\ldots,\varepsilon_{t}).
$$
For all $1\leq t\leq T$ and for all $e^{t},\bar{e}^{t}\in\mathbb{R}^{t\times m}$,
\begin{eqnarray}
|f_{t}(e^{t})-f_{t}(\bar{e}^{t})|\leq C_{f}|e^{t}-\bar{e}^{t}|^{\chi},
\label{lipf}
\end{eqnarray}	
for some $C_{f}>0$, $0< \chi \leq 1,$ and for all $1\leq t\leq T$,
\begin{eqnarray}
\sup_{e^{t}\in \mathbb{R}^{t\times m}}|f_{t}(e^{t})|\leq C_{f}.
\label{unifbornef}
\end{eqnarray}	
\end{assumption}
In particular, Assumption \ref{price1} implies that each $f_{t}$ is Borel measurable and the process $(S_{t})_{0\leq t\leq T}$ is adapted to 
the filtration $(\mathcal{F}_{t}^{\varepsilon})_{0\leq t\leq T}$.
\begin{remark}\label{compact} {\rm It is enough to postulate \eqref{lipf} and \eqref{unifbornef} on the compact set $K_t:=[-C_{\varepsilon},C_{\varepsilon}]^t$ for all $1\leq t\leq T$, see Proposition \ref{lipcomp} below.
}
\end{remark}

The next ``uniform no-arbitrage'' assumption has already been used multiple times in optimal investment problems,
see \cite{cr2007a,cr2007b}.
It expresses that future price movements conditioned to the past make a move of at least a prescribed size both up and down, with at least a fixed positive probability. The word ``uniform'' comes from the fact that
$\alpha$ does not depend on $e^{t-1}$.

\begin{assumption}\label{una} There is $0<\alpha\leq 1$ such that, for all $1 \leq t \leq T$, for all $e^{t-1}\in\mathbb{R}^{(t-1)\times m}$,
\begin{equation}\label{ep}
\mathbb{P}[f_{t}(e^{t-1},\varepsilon_{t})\geq \alpha]\geq \alpha,\quad 
\mathbb{P}[f_{t}(e^{t-1},\varepsilon_{t})\leq -\alpha]\geq \alpha.
\end{equation}
(In the case $t=1$ we mean $f_{1}(e^{0},\varepsilon_{1}):=f_{1}(\varepsilon_{1})$.)
\end{assumption}

\begin{remark}\label{conpr} {\rm Translated into the language of conditional probabilities, \eqref{ep} means} 
$$
\mathbb{P}[\Delta S_{t}\geq \alpha\vert\mathcal{F}^{\varepsilon}_{t-1}]\geq \alpha,\quad 
\mathbb{P}[\Delta S_{t}\leq -\alpha\vert\mathcal{F}^{\varepsilon}_{t-1}]\geq \alpha  \quad \mbox{a.s.}
$$
\end{remark}

\begin{example}\label{eex}{\rm  
Let for all $1 \leq t \leq T$, $\mu_{t},\sigma_{t}:\mathbb{R}^{t-1}\to\mathbb{R}$
be bounded H\"older continuous functions (we mean that $\mu_{1},\sigma_{1}$ are constants). 
Let $0<\delta\leq 1$, $C>0$ be such that $|\sigma_{t+1}|+|\mu_{t+1}|\leq C$ and 
$$
|\mu_{t+1}(e^{t})-\mu_{t+1}(\bar{e}^{t})|+|\sigma_{t+1}(e^{t})-\sigma_{t+1}(\bar{e}^{t})|\leq C|e^{t}-\bar{e}^{t}|^{\delta},
$$ 
for all $0 \leq t \leq T-1$ and for all $e^{t},\bar{e}^{t}\in\mathbb{R}^{t}$. We assume that $\sigma_t\geq c>0$. 
Let $\varepsilon_{t}$, $1 \leq t \leq T$ be $\mathbb{R}$-valued  bounded independent random variables (that is, $m=1$) such that there exists 
$\beta>0$ with 
$\mathbb{P}\bigl[\varepsilon_{t}\leq\frac{-C-\beta}{c}\bigr]\geq\beta$ and 
$\mathbb{P}\bigl[\varepsilon_{t}\geq \frac{C+\beta}{c}\bigr]\geq\beta$ for all $1 \leq t \leq T$.  
Let $S_{0}$ be constant and define the price process for all $1 \leq t \leq T-1$, recursively as
$$
\Delta S_{t+1}=\mu_{t+1}(\varepsilon_{1},\ldots,\varepsilon_{t})+\sigma_{t+1}(\varepsilon_{1},\ldots,\varepsilon_{t})\varepsilon_{t+1}.
$$
Then Assumptions \ref{price1} and \ref{una} hold, see Lemma 
\ref{vege} below.
}
\end{example}

Now we describe the investor's activities in the market.
Fix initial capital $x_{0}\in\mathbb{R}$. Predictable trading strategies
$\phi=(\phi_{1},\ldots,\phi_{T})$ are such that $\phi_{t}$ is a $\mathcal{F}^{\varepsilon}_{t-1}$-measurable
random variable, representing the investor's position in the risky asset at time $t$, for all $1\leq t\leq T$.
The set of all such strategies is denoted by $\Phi$. We assume that there is a bank account with $0$ interest rate in the market.

Trading according to $\phi\in\Phi$ in a self-financing way, portfolio value from initial capital $x_{0}$ at times $0 \leq t \leq T$ is then defined as 
$$
W_{t}(x_{0},\phi):=x_{0}+\sum_{j=1}^{t}\phi_{j}\Delta S_{j}.
$$

We remark that, by Doob's theorem, for each $\phi\in\Phi$ one can find Borel measurable 
functions $\varphi_{t}:\mathbb{R}^{(t-1)\times m}\to\mathbb{R}$,
$1 \leq t \leq T$ such that 
\begin{equation}\label{doob}
\phi_{t}=\varphi_{t}(\varepsilon_{1},\ldots,\varepsilon_{t-1}).	
\end{equation} (In the case $t=1$ we mean that $\phi_{1}=\varphi_{1}$ is
a constant). Keeping this in mind, we now introduce the reference points corresponding to possible portfolio strategies.

The ``extra randomness'' provided by $\hat{\varepsilon}$ in Assumption \ref{filtra} enables us to fabricate an
independent copy $(\hat{\varepsilon}_{1},\ldots,\hat{\varepsilon}_{T})$ of $(\varepsilon_{1},\ldots,\varepsilon_{T})$ in the next lemma,
whose proof is relegated to Section \ref{sec4}.

\begin{lemma}\label{epsilontilde}
There is a Borel measurable function $\Upsilon:[0,1]\to\mathbb{R}^{T\times m}$ such that $\Upsilon(\hat{\varepsilon})$ has the
same law as $(\varepsilon_{1},\ldots,\varepsilon_{T})$.	We shall write $(\hat{\varepsilon}_{1},\ldots,\hat{\varepsilon}_{T})$
for $\Upsilon(\hat{\varepsilon})$. 
\end{lemma}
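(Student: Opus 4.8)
The plan is to realize the law of $(\varepsilon_{1},\ldots,\varepsilon_{T})$ as the pushforward of the uniform law under a Borel map, via the classical inverse-distribution-function construction after transporting the problem to $[0,1]$. Denote by $\nu$ the law of $(\varepsilon_{1},\ldots,\varepsilon_{T})$ on $\mathbb{R}^{T\times m}$; this is a Borel probability measure on a Polish space. Since $\mathbb{R}^{T\times m}$ is an uncountable standard Borel space, the Borel isomorphism theorem provides a Borel isomorphism $\psi:\mathbb{R}^{T\times m}\to[0,1]$, that is, a bijection with $\psi$ and $\psi^{-1}$ both Borel measurable.

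Next I would push $\nu$ forward to the Borel probability measure $\mu:=\psi_{*}\nu$ on $[0,1]$ and let $F(x):=\mu([0,x])$, $x\in[0,1]$, be its cumulative distribution function. Define the quantile function $Q:[0,1]\to[0,1]$ by $Q(u):=\inf\{x\in[0,1]:F(x)\geq u\}$ for $u\in(0,1]$ and $Q(0):=0$. As $Q$ is nondecreasing it is Borel measurable, and the standard inverse-transform argument shows that $Q(\hat{\varepsilon})$ has law $\mu$ whenever $\hat{\varepsilon}$ is uniformly distributed on $[0,1]$. Setting
$$
\Upsilon:=\psi^{-1}\circ Q,
$$
the map $\Upsilon:[0,1]\to\mathbb{R}^{T\times m}$ is Borel measurable as a composition of Borel measurable maps, and the law of $\Upsilon(\hat{\varepsilon})$ is $\psi^{-1}_{*}\mu=\psi^{-1}_{*}\psi_{*}\nu=\nu$, which is precisely the law of $(\varepsilon_{1},\ldots,\varepsilon_{T})$. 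Finally, since $\hat{\varepsilon}$ is independent of $\mathcal{F}^{\varepsilon}_{T}$ by Assumption \ref{filtra}, the measurable function $(\hat{\varepsilon}_{1},\ldots,\hat{\varepsilon}_{T}):=\Upsilon(\hat{\varepsilon})$ is also independent of $\mathcal{F}^{\varepsilon}_{T}$, so it is a genuine independent copy of $(\varepsilon_{1},\ldots,\varepsilon_{T})$.

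The only points requiring care are the two classical inputs: the existence of the Borel isomorphism $\psi$ (for which one may instead use a Borel isomorphism onto a Borel subset $B\subseteq[0,1]$, noting that $\mu(B)=1$, so that modifying $\psi^{-1}$ arbitrarily on the $\mu$-null set $[0,1]\setminus B$ is harmless), and the measurability together with the distributional identity for the generalized inverse $Q$. Neither is a genuine obstacle. Alternatively, a fully constructive argument is available: expand $\hat{\varepsilon}$ in binary digits, regroup them into $T$ independent uniform variables $U_{1},\ldots,U_{T}$, and apply to each $U_{t}$ the inverse distribution function of $\varepsilon_{t}$ composed with a Borel isomorphism $[0,1]\to\mathbb{R}^{m}$; this route uses the independence of the $\varepsilon_{t}$ but leads to the same conclusion.
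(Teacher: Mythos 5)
Your proposal is correct and follows essentially the same route as the paper: transport the law of $(\varepsilon_{1},\ldots,\varepsilon_{T})$ to a one-dimensional space via a Borel isomorphism (the paper uses $\zeta:\mathbb{R}\to\mathbb{R}^{T\times m}$ and the pullback measure, you use $\psi:\mathbb{R}^{T\times m}\to[0,1]$ and the pushforward), apply inverse-transform sampling with the generalized inverse of the CDF, and compose back. The choice of $[0,1]$ versus $\mathbb{R}$ as the intermediate space is immaterial, and your handling of the null-set issue for the range of $\psi$ is fine.
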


For any strategy ${\phi}\in\Phi$, we define ``the independent copy of its final wealth'' by
\begin{eqnarray}
B({\phi}):=x_{0}+\sum_{t=1}^{T}\varphi_{t}(\hat{\varepsilon}_{1},\ldots,\hat{\varepsilon}_{t-1})f_{t}(\hat{\varepsilon}_{1},\ldots,\hat{\varepsilon}_{t}),
\label{bebe}
\end{eqnarray}
where the $\varphi_{t}$ are as in \eqref{doob}. We sometimes write $B({\phi})=W_{T}(x_{0},\phi)(\hat{\varepsilon}^{T}).$

This definition deserves some explanation. We imagine that independent copies 
$f_{t}(\hat{\varepsilon}_{1},\ldots,\hat{\varepsilon}_{t})$ of the price increments $\Delta S_{t}$
exist ``somewhere'' and one can trade in this
asset following the strategy $\phi$, but using the driving factor $\hat{\varepsilon}_{t}$ instead of $\varepsilon_{t}$. 
To cut a long story short, $B(\phi)$ is independent of the financial market the investor is trading in, but its distribution
is the same as that of $W_{T}(x_{0},\phi)$. In the model proposed by \cite{kr2006}, the investor compares his/her
portfolio performance to a reference entity of such type, see the next subsection for more details.

\subsection{Hypotheses on the investor's preferences}

We consider a utility function satisfying the following properties.

\begin{assumption}\label{utility} 

\begin{enumerate}
\item $U:\mathbb{R}\to\mathbb{R}$ is twice continuously differentiable.

\item $U$ is non-decreasing, bounded from above, that is, there is $C_{U}>0$ such that $U(x)\leq C_{U}$ for all $x\in\mathbb{R}$.

\item For all $x$, $U''(x)<0$.
\end{enumerate}	
\end{assumption}

\begin{remark}
{\rm Clearly, Assumption \ref{utility} implies that $U$ is strictly concave and strictly increasing. Also, $|U''|$ and $U'$ are bounded away from $0$
on every bounded set.}
\end{remark}

Next, we list our hypotheses about the gain-loss function $\nu$.

\begin{assumption}\label{mienk}

\begin{enumerate}
\item $\nu:\mathbb{R}\to\mathbb{R}$ is twice continuously differentiable, $\nu(0)=0$.

\item $\nu$ is bounded from above, and $\nu''$ is bounded.  That is, there is $C_{\nu}>0$ such that $\nu(x)\leq C_{\nu}$ and $|\nu''(x)|\leq C_{\nu}$ for all $x\in\mathbb{R}$.

\item $\nu(x)=k_{-} x$ for $x\leq 0$ with some $k_{-}>0$.

\item On $(0,\infty)$, $\nu'$ is nonincreasing and $0< \nu'<k_{-}$.
\end{enumerate}	
\end{assumption}

\begin{remark}{\rm Note that under 
Assumption \ref{mienk}, $\nu$ is concave and stricly increasing on $\mathbb{R}$, and, 
$$
0<\nu'(x)\leq k_{-},\ x\in\mathbb{R}.
$$
Using the Newton-Leibniz rule, this implies, for all $x,y \in \mathbb{R}$ 
\begin{equation}\label{nuleibnitz}
|\nu(x)-\nu(y)|=\big| \int_{y}^{x}\nu'(t)\, dt \big|\leq k_{-}|x -y|. 
\end{equation}
}
\end{remark}

\begin{remark}
{\rm In \cite{kr2006} the following assumptions were made about $\nu$:
\begin{itemize}
\item[A0] $\nu$ is continuous, twice continuously differentiable on $\mathbb{R}\setminus \{0\}$, $\nu(0)=0$.
{}
\item[A1] $\nu$ is strictly increasing.

\item[A2] For all $0<x<y$, $\nu(y)-\nu(x)<\nu(-x)-\nu(-y)$.{}

\item[A3] $\nu''(x)\geq 0$ for $x<0$ and $\nu''(x)\leq 0$ for $x>0$.

\item[A4] $\nu'(0+)/\nu'(0-)<1${}

\end{itemize}
One can easily check that $\nu$ as in Assumption \ref{mienk} satisfies A0--A3 above. 

In \cite{paolo-andrea1} and \cite{paolo-andrea2} the specification $\nu_{0}(x)=\alpha_{1} x$, $x<0$, $\nu_{0}(x)=\alpha_{2} x$, $x\geq 0$
with $0<\alpha_{2}<\alpha_{1}$ was assumed.  
We deviate from both \cite{paolo-andrea1} and \cite{paolo-andrea2}.
We still take $\nu$ linear on the negative axis, but we assume it is twice continuously differentiable and bounded from above on the whole
of $\mathbb{R}$. Necessarily, we \emph{do not} assume A4 above.
Incorporating functions like $\nu_{0}$ is left for future research.}

\end{remark}

Let $B$ be an arbitrary $\sigma(\hat{\varepsilon}_{1},\ldots,\hat{\varepsilon}_{T})$-measurable random variable that will represent the
reference point of the investor.
Following \cite{kr2006}, we define the investor's overall satisfaction from $x$ dollars by
$$
U(x,B)=U(x)+\nu\bigl(U(x)-U(B)\bigr),\ x\in\mathbb{R}.
$$
This is easy to interpret: in addition to the direct utility $U(x)$ of $x$, the investor also evaluates, using the gain-loss function $\nu$, 
whether $U(x)$ exceeds or falls short of the reference utility $U(B)$.


\subsection{Personal equilibrium}

We now define the value function of the optimization problem we are dealing with. For each $\phi\in\Phi$, 
recalling $B({\phi})$ from \eqref{bebe}, 
 set
\begin{eqnarray}
u(x_{0},\phi):=\sup_{\psi\in\Phi}\mathbb{E}\bigl[U\bigl(W_{T}(x_{0},\psi),B(\phi)\bigr)\bigr].
\label{petitu}
\end{eqnarray}
A personal equilibrium is a portfolio $\phi^{\dagger}$ such that, choosing its ``independent copy'' $B(\phi^{\dagger})$ as reference point, 
the solution of the resulting optimization problem is just $\phi^{\dagger}$ itself. We formalize this heuristic description as follows.
 
\begin{definition} A strategy $\phi^{\dagger}\in\Phi$ is called a \emph{personal equilibrium} if
$$
\mathbb{E}\bigl[U\bigl(W_{T}(x_{0},\phi^{\dagger}),B(\phi^{\dagger})\bigr)\bigr]
=\sup_{\psi\in\Phi}\mathbb{E}\bigl[U\bigl(W_{T}(x_{0},\psi),B(\phi^{\dagger})\bigr)\bigr]
=u(x_{0},\phi^{\dagger}).
$$	
The set of personal equilibria is denoted by $\Phi^{\dagger}$.
Furthermore, $\phi^{\ddagger}\in\Phi^{\dagger}$ is called a \emph{preferred personal equilibrium} if 
$$
\mathbb{E}\bigl[U\bigl(W_{T}(x_{0},\phi^{\ddagger}),B(\phi^{\ddagger})\bigr)\bigr]=
\sup_{\phi^{\dagger}\in\Phi^{\dagger}}\mathbb{E}\bigl[U\bigl(W_{T}(x_{0},\phi^{\dagger}),B(\phi^{\dagger})\bigr)\bigr].
$$
The set of preferred personal equilibria is denoted by $\Phi^{\ddagger}$.	
\end{definition}

Preferred personal equilibria are thus the best-performing personal equilibria. It is not at all clear,
whether $\Phi^{\dagger}$ or $\Phi^{\ddagger}$ are non-empty. Our principal result answers these questions in the affirmative.

\begin{theorem}\label{main}
Let Assumptions \ref{filtra}, \ref{price1}, \ref{una}, \ref{utility} and \ref{mienk} be in force. Then
$\Phi^{\dagger}\neq \emptyset$ and, actually, a preferred personal equilibrium exists, i.e. $\Phi^{\ddagger}\neq \emptyset$.
\end{theorem}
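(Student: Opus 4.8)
The plan is to set up a fixed-point scheme whose fixed points are exactly the personal equilibria, and then apply Schauder's theorem. First I would reformulate the problem "on the past". Writing a strategy $\phi\in\Phi$ via its Doob functions $\varphi_t:\mathbb{R}^{(t-1)\times m}\to\mathbb{R}$, I would encode $\phi$ (equivalently $B(\phi)$) through the tuple of these functions. The natural ambient space is a Banach space of continuous (indeed H\"older) functions on the compact cubes $K_{t-1}=[-C_\varepsilon,C_\varepsilon]^{(t-1)\times m}$, since by Assumption \ref{filtra} the driving factors are supported there and, by Remark \ref{compact}, only the restriction of $f_t$ to $K_t$ matters. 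For a fixed reference point $B=B(\phi)$, the inner problem $\sup_{\psi\in\Phi}\mathbb{E}[U(W_T(x_0,\psi),B)]$ is a standard concave optimal-investment problem: under Assumptions \ref{price1}, \ref{una}, \ref{utility}, \ref{mienk} the composite utility $x\mapsto U(x,B(e))=U(x)+\nu(U(x)-U(B(e)))$ is, for each realization, bounded above, strictly concave (since $\nu$ is concave nondecreasing, $U$ strictly concave), and smooth, so the results of \cite{cr2007a,cr2007b} give existence and uniqueness of an optimizer $\psi^{*}$, together with H\"older-continuous dependence of the optimal strategy on the past and, crucially via \cite{cr2007b}, continuous dependence on the preference data — here the preference data is the function $e\mapsto U(B(e))$, which varies continuously with $\phi$.

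Concretely I would define a map $\Gamma$ sending the (H\"older-function representation of the) reference strategy $\phi$ to the (representation of the) unique optimizer $\psi^{*}=\psi^{*}(\phi)$ of \eqref{petitu}. A personal equilibrium is precisely a fixed point of $\Gamma$. The next step is to produce a nonempty, convex, compact, $\Gamma$-invariant set in the Banach space. Uniform no-arbitrage (Assumption \ref{una}) together with the upper bound on $U$ yields a priori bounds on the optimal positions $\psi^{*}_t$, uniform in the past and in $\phi$ — this is exactly the kind of estimate established in \cite{cr2007a,cr2007b}. Combined with the H\"older-in-the-past estimates for the optimizer (again from \cite{cr2007b}, using \eqref{lipf}), one gets that the image of $\Gamma$ lies in a bounded set of a H\"older space $C^{0,\gamma}$ for some $\gamma>0$; such a ball is compact in the sup-norm of $C(K_{t-1})$ by Arzel\`a–Ascoli. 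Taking $\mathcal{C}$ to be (the product over $t$ of) the closed convex hulls of these H\"older balls gives a convex compact set mapped into itself. Continuity of $\Gamma$ for the sup-norm is the combination of: continuity of $\phi\mapsto B(\phi)$ hence of $\phi\mapsto U(B(\phi))$ as preference data, and the stability result of \cite{cr2007b} asserting that the optimizer depends continuously on the preferences. Schauder's fixed point theorem then yields $\phi^{\dagger}\in\Gamma$, i.e. $\Phi^{\dagger}\neq\emptyset$.

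For the preferred personal equilibrium, I would show the functional $\phi\mapsto J(\phi):=\mathbb{E}[U(W_T(x_0,\phi),B(\phi))]$ is upper semicontinuous on the compact set of equilibria (using boundedness above of $U$ and $\nu$ together with Fatou/dominated convergence along a convergent sequence of equilibria, noting that both $W_T(x_0,\phi)$ and $B(\phi)$ converge in probability when the H\"older representatives converge uniformly), and that $\Phi^{\dagger}$, being the fixed-point set of the continuous map $\Gamma$ inside the compact set $\mathcal{C}$, is itself compact. An u.s.c. function on a nonempty compact set attains its supremum, giving $\Phi^{\ddagger}\neq\emptyset$.

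The main obstacle I anticipate is the continuity of $\Gamma$, i.e. verifying that a small sup-norm perturbation of the reference strategy $\phi$ produces a small perturbation of the optimizer $\psi^{*}(\phi)$. This has two facets: first, that $\phi\mapsto U(B(\phi))$ is continuous in the right topology as an input to the optimization problem — this needs the uniform bound \eqref{unifbornef} and the H\"older bound \eqref{lipf} to control $B(\phi)$, plus uniform integrability coming from boundedness of $U$; second, invoking the stability theorem of \cite{cr2007b} in a form precise enough to cover preferences of the form $U(\cdot,B)$ rather than a single fixed utility. Checking that the hypotheses of that stability result (concavity, smoothness, growth, and the uniform no-arbitrage $\alpha$) are met \emph{uniformly over} the family $\{U(\cdot,B(\phi)):\phi\in\mathcal{C}\}$ — and that the $\alpha$ in Assumption \ref{una} gives bounds independent of $\phi$ — is where the bulk of the technical work will lie; this is presumably the role of the auxiliary results announced for Section \ref{sec4}.
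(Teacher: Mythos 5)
Your proposal is correct and follows essentially the same route as the paper: recursive dynamic programming with H\"older-in-the-past estimates uniform in the reference strategy, continuity of $\phi\mapsto\psi^{*}(\phi)$ via the stability theorem of \cite{cr2007b}, Arzel\`a–Ascoli compactness of the H\"older ball $\Phi_{C(x_0)}$, Schauder's fixed point theorem for $\Phi^{\dagger}\neq\emptyset$, and a compactness/continuity argument on a maximizing sequence of equilibria for $\Phi^{\ddagger}\neq\emptyset$. The technical obstacles you single out (uniform verification of the hypotheses of the stability result over the family $\{U(\cdot,B(\phi))\}$, and the a priori bounds from uniform no-arbitrage) are precisely where the paper spends its effort, in Propositions \ref{recur}, \ref{dyna} and \ref{contin}.
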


\begin{remark}
{\rm We know from Theorem 3.2 of \cite{paolo-andrea1} that there is no uniqueness for personal equilibria. We do not know
if there is uniqueness for \emph{preferred} personal equilibria.}	
\end{remark}

\section{Proofs}\label{sec3}

\subsection{One-step case}

\label{secune}

In this subsection, we solve a one-step optimization problem that will later be recursively applied to construct an optimal strategy for an investor in a multi-step market with utility function $U(x,B)$, see Proposition \ref{dyna} below.

\begin{assumption}\label{onestep}
Let $1\leq t\leq T$ be given. Let a random variable $B$ be also given (this parametrizes the problem with respect to the
reference point).
We consider a $\mathcal{B}(\mathbb{R}^{t\times m})\otimes \mathcal{B}(\mathbb{R})$-measurable function
$$
\VB:\mathbb{R}^{t\times m}\times \mathbb{R}\to\mathbb{R}
$$
such that, for all $e\in\mathbb{R}^{t\times m}$, $x\mapsto \VB(e,x)$ is twice continuously differentiable and bounded from above by $C_{U}+C_{\nu}$. 

There are continuous functions $i_{V},j_{V},J_{V},\ell_{V},L_{V}$ such that  
$i_{V}:\mathbb{R}\to\mathbb{R}$ and $j_{V},J_{V},\ell_{V},L_{V}:\mathbb{R}\to (0,\infty)$, and  for all $e,x$,
\begin{equation}
\begin{aligned}
\label{ijlV}
i_{V}(x) &\leq \VB(e,x),\quad j_{V}(x)\leq \VB'(e,x)\leq J_{V}(x)\\
\ell_{V}(x) &\leq -\VB''(e,x)\leq L_{V}(x).
\end{aligned}
\end{equation}
Here $\VB',\VB''$ refer to derivatives with respect to $x$.

Furthermore, there are $\theta \in (0,\chi]$, and a continuous function $C_{V}:\mathbb{R}\to \mathbb{R}_{+}$ such that, 
for each $e,\bar{e}\in\mathbb{R}^{t\times m}$,
\begin{equation}\label{conii}
|\VB(e,x)-\VB(\bar{e},x)|\leq C_{V}(x)|e-\bar{e}|^{\theta}.
\end{equation}

Let $\varepsilon$ be an $\mathbb{R}^{m}$-valued random variable with $|\varepsilon|\leq C_{\varepsilon}$. 
There is a Borel function $f:\mathbb{R}^{t\times m}\to\mathbb{R}$ such that,
for all $o\in\mathbb{R}^{(t-1)\times m}$,
\begin{equation}\label{stronna}
\mathbb{P}[f(o,\varepsilon)\geq \alpha]\geq \alpha,\quad 
\mathbb{P}[f(o,\varepsilon)\leq -\alpha]\geq \alpha,
\end{equation}
and for all $e,\bar{e}\in\mathbb{R}^{t\times m}$,
\begin{equation}\label{efa}
|f(e)-f(\bar{e})|\leq C_{f}|e-\bar{e}|^{\chi},\quad \sup_{e\in \mathbb{R}^{t\times m}}|f(e)|\leq C_{f}.
\end{equation}
\end{assumption}
In \eqref{stronna}, we write $f(o,\varepsilon)$ instead of $f((o,\varepsilon))$ for the sake of simplicity, and we will use the same kind of notations in the rest of the paper. Remark that under Assumption \ref{onestep}, $x\mapsto \VB(e,x)$ is strictly increasing and strictly concave. Remark also that the functions $j_{V},J_{V},\ell_{V},L_{V}$ are continuous and strictly positive, hence
locally bounded away from $0$.

The above assumption will be in force throughout this subsection.
We introduce the two following functions, for all $(o,x,h)\in\mathbb{R}^{(t-1)\times m} \times\mathbb{R} \times \mathbb{R}$: 
\begin{eqnarray*}
\Gamma(B)(o,x,h) & := & \mathbb{E}\bigl[\VB \bigl(o,\varepsilon,x+hf(o,\varepsilon)\bigr)\bigr],\\
\gamma(B)(o,x,h) &:= & \mathbb{E}\bigl[\VB'\bigl(o,\varepsilon,x+hf(o,\varepsilon)\bigr)f(o,\varepsilon)\bigr].
\end{eqnarray*}

\begin{lemma}\label{well} 
Under Assumption \ref{onestep}, 
$\gamma(B)$ and $\Gamma(B)$ are well-defined Carath\'eodory integrands, i.e.\ 
for all $(x,h)$, $o \mapsto \gamma(B)(o,x,h),\Gamma(B)(o,x,h)$ are $\mathcal{B}(\mathbb{R}^{(t-1)\times m})$-measurable functions, 
and $(x,h)\mapsto \gamma(B)(o,x,h),\Gamma(B)(o,x,h)$ are continuous functions for all $o$. 
Moreover, for all $o$, the function $(x,h)\mapsto \Gamma(B)(o,x,h)$ is twice continuously differentiable and 
\begin{eqnarray}
\label{derivex}
\partial_{x}\Gamma(B)(o,x,h) & = & \mathbb{E}\bigl[\VB'\bigl(o,\varepsilon,x+hf(o,\varepsilon)\bigr)\bigr]\\
\label{deriveh}
\partial_{h}\Gamma(B)(o,x,h) & = & \mathbb{E}\bigl[\VB'\bigl(o,\varepsilon,x+hf(o,\varepsilon)\bigr)f(o,\varepsilon)\bigr]=\gamma(B)(o,x,h)\\
\label{derivexh}
\partial^2_{xh}\Gamma(B)(o,x,h) & = & \partial_{x}\gamma(B)(o,x,h)=\mathbb{E}\bigl[\VB''\bigl(o,\varepsilon,x+hf(o,\varepsilon)\bigr)f(o,\varepsilon)\bigr]\\
\label{derivehh}
\partial^2_{hh}\Gamma(B)(o,x,h) & = & \partial_{h}\gamma(B)(o,x,h)=\mathbb{E}\bigl[\VB''\bigl(o,\varepsilon,x+hf(o,\varepsilon)\bigr)f^2(o,\varepsilon)\bigr].
\end{eqnarray}
\end{lemma}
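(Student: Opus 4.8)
The plan is to base everything on one structural fact: by \eqref{efa} we have $|f(o,\varepsilon)|\le C_{f}$, so whenever $(x,h)$ runs over a bounded set the random argument $x+hf(o,\varepsilon)$ stays in a fixed compact interval, on which the continuous functions $i_{V},j_{V},J_{V},\ell_{V},L_{V}$ from \eqref{ijlV} are bounded; all the integrability, measurability, continuity and differentiability assertions then reduce to dominated convergence and differentiation under the integral sign. Concretely, I would fix $(o,x,h)$ and put $I:=[x-|h|C_{f},x+|h|C_{f}]$. From \eqref{ijlV} and the upper bound on $\VB$, $\inf_{I}i_{V}\le \VB(o,\varepsilon,x+hf(o,\varepsilon))\le C_{U}+C_{\nu}$ and $|\VB'(o,\varepsilon,x+hf(o,\varepsilon))f(o,\varepsilon)|\le C_{f}\sup_{I}J_{V}$, so both integrands are bounded and $\Gamma(B)(o,x,h)$, $\gamma(B)(o,x,h)$ are well defined. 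For measurability in $o$, the map $(o,\omega)\mapsto\bigl((o,\varepsilon(\omega)),x+hf(o,\varepsilon(\omega))\bigr)$ is $\mathcal{B}(\mathbb{R}^{(t-1)\times m})\otimes\mathcal{F}$-measurable since $f$ is Borel and $\varepsilon$ is $\mathcal{F}$-measurable; composing with $\VB$, resp.\ with $\VB'$ and $\VB''$ (which are jointly measurable, being pointwise limits of difference quotients of $\VB$), preserves joint measurability, and Fubini's theorem then gives that $o\mapsto\Gamma(B)(o,x,h)$ and $o\mapsto\gamma(B)(o,x,h)$ are Borel.

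For continuity in $(x,h)$, fix $o$ and take $(x_{n},h_{n})\to(x,h)$. For large $n$ the arguments $x_{n}+h_{n}f(o,\varepsilon)$ stay in one fixed compact interval $K$; the integrand $\VB(o,\varepsilon,x_{n}+h_{n}f(o,\varepsilon))$ tends pointwise to $\VB(o,\varepsilon,x+hf(o,\varepsilon))$ by continuity of $\VB(o,\varepsilon,\cdot)$ and is dominated by the constant $\max(|\inf_{K}i_{V}|,C_{U}+C_{\nu})$, so dominated convergence yields $\Gamma(B)(o,x_{n},h_{n})\to\Gamma(B)(o,x,h)$; the same argument with dominating constant $C_{f}\sup_{K}J_{V}$ handles $\gamma(B)$. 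Combined with the measurability just shown, this proves $\Gamma(B)$ and $\gamma(B)$ are Carath\'eodory integrands.

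It remains to establish the differentiability and the formulas \eqref{derivex}--\eqref{derivehh}, which I would do by the standard theorem on differentiation under the integral, verifying its hypotheses with the same compactness device. On a ball $\{|x-x_{0}|\le r,\ |h-h_{0}|\le r\}$ the arguments $x+hf(o,\varepsilon)$ lie in a fixed compact $K'$; by the chain rule $\partial_{x}\VB(o,\varepsilon,x+hf(o,\varepsilon))=\VB'(o,\varepsilon,x+hf(o,\varepsilon))$ and $\partial_{h}\VB(o,\varepsilon,x+hf(o,\varepsilon))=\VB'(o,\varepsilon,x+hf(o,\varepsilon))f(o,\varepsilon)$, which are bounded on the ball by $\sup_{K'}J_{V}$ and $C_{f}\sup_{K'}J_{V}$; hence one may differentiate under $\mathbb{E}$, giving \eqref{derivex} and \eqref{deriveh}. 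Differentiating the resulting expressions $\mathbb{E}[\VB'(o,\varepsilon,x+hf(o,\varepsilon))]$ and $\gamma(B)=\mathbb{E}[\VB'(o,\varepsilon,x+hf(o,\varepsilon))f(o,\varepsilon)]$ once more, the integrands $\VB''(o,\varepsilon,x+hf(o,\varepsilon))$, $\VB''(o,\varepsilon,x+hf(o,\varepsilon))f(o,\varepsilon)$ and $\VB''(o,\varepsilon,x+hf(o,\varepsilon))f^{2}(o,\varepsilon)$ are bounded on balls by $\sup_{K'}L_{V}$ times $1$, $C_{f}$ and $C_{f}^{2}$ respectively, which licenses a second differentiation under $\mathbb{E}$ and yields \eqref{derivexh} and \eqref{derivehh}; equality of the two mixed partials then follows from Schwarz's theorem once continuity is known. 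Finally, continuity in $(x,h)$ of every first and second partial is obtained by dominated convergence exactly as in the previous paragraph, with the same compact-set bounds, so $(x,h)\mapsto\Gamma(B)(o,x,h)$ is twice continuously differentiable. The only mildly delicate aspect of the whole argument is the bookkeeping of this ``localize to a compact interval'' step, but since $f$ is uniformly bounded and $i_{V},j_{V},J_{V},\ell_{V},L_{V}$ are continuous it causes no real difficulty.
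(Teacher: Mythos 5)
Your proposal is correct and follows essentially the same route as the paper's proof: uniform bounds from \eqref{ijlV} together with $|f|\le C_{f}$ to localize the argument $x+hf(o,\varepsilon)$ in a compact interval, dominated convergence for continuity and for differentiation under the expectation, and Fubini for measurability in $o$. Your extra remark that $\VB'$ and $\VB''$ are jointly measurable as pointwise limits of difference quotients is a small detail the paper leaves implicit, and is a welcome addition.
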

\begin{proof} Since $i_{V}(x)\leq \VB(\cdot,x)\leq C_{U}+C_{\nu}$, $j_{V}(x)\leq \VB'(\cdot,x)\leq J_{V}(x)$,  $-L_{V}(x) \leq \VB''(e,x)\leq -\ell_{V}(x)$ and $f$ is bounded, the expectations above are well-defined (and finite).
Dominated convergence implies that for all $o$, the functions $(x,h)\mapsto \gamma(B)(o,x,h),\Gamma(B)(o,x,h)$ are continuous. Fix $x,h$. As $(o,e) \mapsto x+ h f(o,e)$ and $\VB$ are Borel measurable, 
$(o,e) \mapsto \VB(o,e,x+ h f(o,e))$ is Borel measurable and Fubini theorem as in \cite[Proposition 7.29]{BS} implies that 
$o \mapsto \Gamma(B)(o,x,h)=\mathbb{E}[\VB(o,\varepsilon,x+ h f(o,\varepsilon))]$ is Borel measurable. The same reasoning applies for 
 measurability in $o$ of $\gamma(B)$. 
 
 Fix $M,N>0$. 
 For all $(x,h)\in [-M,M]\times [-N,N]$, $e\mapsto \VB \bigl(o,e,x+hf(o,e)\bigr)$ is integrable with respect to 
 the law of $\varepsilon$ under 
 $\mathbb{P}$, since it is measurable and bounded by $\max( \sup_{z \in D}| i_{V}(z)|,C_{U}+C_{\nu})$, where $D=[-M-NC_{f}, M+NC_{f}]$. 
 Moreover, for all $e\in \mathbb{R}^{t\times m}$, $(x,h)\mapsto \VB\bigl(o,e,x+hf(o,e)\bigr)$ is differentiable and 
 \begin{eqnarray*}
|\partial_{x}\VB\bigl(o,e,x+hf(o,e)\bigr)| & = & |\VB'\bigl(o,e,x+hf(o,e)\bigr)| \leq \sup_{z \in D}J_{V}(z)\\
|\partial_{h}\VB\bigl(o,e,x+hf(o,e)\bigr)|& = & |\VB'\bigl(o,e,x+hf(o,e)\bigr)f(o,e)| \leq   C_{f} \sup_{z \in D}J_{V}(z)
\end{eqnarray*}
and these are constant bounds. 
Thus, dominated convergence implies that $(x,h)\to \Gamma(B)(o,x,h)$ is 
differentiable on $(-M,M)\times (-N,N)$, hence on $\mathbb{R}^{2}$ (as $M$ and $N$ are arbitrary), and \eqref{derivex} and \eqref{deriveh} hold true. 
The rest of the proof is similar, 
using the other bounds of Assumption \ref{onestep}. 
\end{proof}

We define, for all $B$, the function $v(B):\mathbb{R}^{(t-1)\times m}\times \mathbb{R}\to\mathbb{R}$ by
$$
\vb(o,x):=\sup_{h\in\mathbb{R}}\mathbb{E}\bigr[\VB\bigl(o,\varepsilon,x+hf(o,\varepsilon)\bigr)\bigr]={}
\sup_{h\in\mathbb{R}}\Gamma(B)(o,x,h),
$$
for $o\in\mathbb{R}^{(t-1)\times m}$, $x\in\mathbb{R}$. In the case $t=1,$ we define
$$
\vb(x):=\sup_{h\in\mathbb{R}}\mathbb{E}\bigr[\VB\bigl(\varepsilon,x+h f(\varepsilon)\bigr)\bigr].
$$

The following result forms the core of our arguments. It shows that if $V$ satisfies Assumption \ref{onestep}, then $v$ also satisfies it with a $\theta/2$ instead of $\theta$ in \eqref{conii}. 
\begin{proposition}\label{recur} Assume that Assumption \ref{onestep} holds true. There exist functions 
$C_{h}:\mathbb{R}\to (0,\infty)$ and 
$\hb:\mathbb{R}^{(t-1)\times m}\times\mathbb{R}\to\mathbb{R}$
such that $C_{h}$ is continuous,  does not depend on $B$, $\hb$ is $\mathcal{B}(\mathbb{R}^{(t-1)\times m})\otimes \mathcal{B}(\mathbb{R})$-measurable, and, for all $o,x$, 
\begin{equation}\label{hbound}
|\hb(o,x)|\leq C_{h}(x),	
\end{equation} 
and $\hb(o,x)$ is the unique number that satisfies
$$
\vb(o,x)=\mathbb{E}\bigl[\VB\bigl(o,\varepsilon,x+\hb(o,x)f(o,\varepsilon)\bigr)\bigr].
$$
Furthermore, for all $o,\bar{o}\in\mathbb{R}^{(t-1)\times m}$, 
\begin{equation}\label{ccsillag}
|\hb(o,x)-\hb(\bar{o},x)|\leq C_{h}(x)|o-\bar{o}|^{\theta/2}.
\end{equation}

The function $\vb$ is $\mathcal{B}(\mathbb{R}^{(t-1)\times m})\otimes \mathcal{B}(\mathbb{R})$-measurable, bounded from above by $C_{U}+C_{\nu}$, 
twice continuosly differentiable in its second variable; there is a continuous function $C_{v}:\mathbb{R}\to (0,\infty)$, that does not depend on $B$, 
such that for all $o,\bar{o}\in\mathbb{R}^{(t-1)\times m}$, 
\begin{eqnarray}\label{petitvlip}
|\vb(o,x)-\vb(\bar{o},x)|\leq C_{v}(x)|o-\bar{o}|^{\theta/2}.
\end{eqnarray}
There exist continuous functions $i_{v},j_{v},J_{v},\ell_{v},L_{v}$, that do not depend on $B$, such that  
$i_{v}:\mathbb{R}\to\mathbb{R}$ and $j_{v},J_{v},\ell_{v},L_{v}:\mathbb{R}\to (0,\infty)$, and  for all $o,x$,
\begin{equation}
\begin{aligned}
\label{ijlv}
i_{v}(x)&\leq\vb(o,x),\quad j_{v}(x)\leq \vb'(o,x)\leq J_{v}(x),\\
\ell_{v}(x)&\leq -\vb''(o,x)\leq L_{v}(x).
\end{aligned}
\end{equation}
\end{proposition}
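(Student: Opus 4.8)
The plan is to establish the existence, uniqueness, and boundedness of the optimizer $\hb(o,x)$ first, then derive its H\"older continuity in $o$, and finally transfer these regularity properties to $\vb$ and its derivatives. For each fixed $o,x$, the map $h\mapsto \Gamma(B)(o,x,h)$ is twice continuously differentiable by Lemma \ref{well}, and strictly concave since $\partial^2_{hh}\Gamma(B)(o,x,h)=\mathbb{E}[\VB''(o,\varepsilon,x+hf(o,\varepsilon))f^2(o,\varepsilon)]<0$ (using $\VB''<0$ and the fact that $f(o,\varepsilon)$ is not a.s.\ zero, which follows from \eqref{stronna}). To get a maximizer I would show coercivity: using \eqref{stronna} together with the lower bound $-\VB''\geq\ell_{V}$, the derivative $\partial_h\Gamma(B)(o,x,h)=\gamma(B)(o,x,h)$ becomes strictly negative for $h$ large positive and strictly positive for $h$ large negative, uniformly in $o$, with the threshold depending only on $x$ (through $\ell_V,J_V$ evaluated on a compact set governed by $x$ and the bound on $h$) and on $\alpha,C_f,C_\varepsilon$ — \emph{not} on $B$. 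This yields a continuous, $B$-independent bound $C_h(x)$ with $|\hb(o,x)|\leq C_h(x)$, and uniqueness follows from strict concavity. Measurability of $\hb$ in $(o,x)$ follows from the Carath\'eodory property of $\gamma(B)$ (Lemma \ref{well}) via a measurable selection / implicit function argument, e.g.\ because $\hb(o,x)$ is the unique zero of the continuous strictly decreasing function $h\mapsto\gamma(B)(o,x,h)$ on the compact interval $[-C_h(x),C_h(x)]$.

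Next, for the H\"older estimate \eqref{ccsillag}, the key identity is $\gamma(B)(o,\hb(o,x))=\gamma(B)(\bar o,\hb(\bar o,x))=0$ (writing the $x$-argument implicitly). I would split
\begin{eqnarray*}
0 &=& \gamma(B)(o,x,\hb(o,x))-\gamma(B)(\bar o,x,\hb(\bar o,x))\\
&=& \bigl[\gamma(B)(o,x,\hb(o,x))-\gamma(B)(o,x,\hb(\bar o,x))\bigr]+\bigl[\gamma(B)(o,x,\hb(\bar o,x))-\gamma(B)(\bar o,x,\hb(\bar o,x))\bigr].
\end{eqnarray*}
The first bracket equals $\partial_h\gamma(B)(o,x,\xi)\,(\hb(o,x)-\hb(\bar o,x))$ for some intermediate $\xi$ by the mean value theorem, and $|\partial_h\gamma(B)(o,x,\xi)|=|\mathbb{E}[\VB''(o,\varepsilon,x+\xi f)f^2]|$ is bounded below by a strictly positive, $B$-independent constant $c(x)>0$ using $-\VB''\geq\ell_V$ and \eqref{stronna} (which forces $\mathbb{E}[f^2\mathbf{1}_{|f|\geq\alpha}]\geq 2\alpha^3$, say, on the relevant event). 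The second bracket is bounded by $|o-\bar o|$-type terms: writing $\VB'(o,\varepsilon,\cdot)f(o,\varepsilon)-\VB'(\bar o,\varepsilon,\cdot)f(\bar o,\varepsilon)$ and using \eqref{conii} for the $V'$-difference (after differentiating \eqref{conii}, or rather using a separate H\"older bound on $\VB'$ — here I expect to need the bound on $\VB''$ to interpolate, which is exactly where $\theta/2$ arises rather than $\theta$) plus \eqref{efa} for the $f$-difference, all multiplied by locally bounded factors depending only on $x$ through $C_h(x)$. Dividing through gives \eqref{ccsillag} with $C_h(x)$ enlarged if necessary, still continuous and $B$-independent.

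Having $\hb$, the regularity of $\vb$ follows by substitution. Measurability of $\vb(o,x)=\Gamma(B)(o,x,\hb(o,x))$ is immediate from composition. For the H\"older bound \eqref{petitvlip}, write $\vb(o,x)-\vb(\bar o,x)=\Gamma(B)(o,x,\hb(o,x))-\Gamma(B)(\bar o,x,\hb(\bar o,x))$; by optimality (an envelope-type argument, or just using that $\hb(o,x)$ maximizes $\Gamma(B)(o,x,\cdot)$ so replacing it by $\hb(\bar o,x)$ only decreases the value) one reduces to estimating $\Gamma(B)(o,x,h)-\Gamma(B)(\bar o,x,h)$ for $h$ in the compact set $[-C_h(x),C_h(x)]$, which is controlled by \eqref{conii} (giving $|e-\bar e|^\theta$, but with the $\varepsilon$-increment common, so effectively $|o-\bar o|^\theta$) and the Lipschitz bound on $\VB$ in $x$ composed with \eqref{efa}, yielding at worst $|o-\bar o|^{\theta/2}$ after the same interpolation. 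For twice continuous differentiability in $x$ and the bounds \eqref{ijlv}, I would use the envelope theorem: $\vb'(o,x)=\partial_x\Gamma(B)(o,x,\hb(o,x))=\mathbb{E}[\VB'(o,\varepsilon,x+\hb(o,x)f(o,\varepsilon))]$ (the $\partial_h$-term vanishes by first-order optimality), and $\vb''(o,x)$ is computed by differentiating once more, using the implicit function theorem to get that $x\mapsto\hb(o,x)$ is $C^1$ with derivative controlled by the same strictly positive lower bound $c(x)$ on $\partial_h\gamma(B)$; the resulting expressions for $\vb',\vb''$ are then sandwiched between continuous, strictly positive, $B$-independent functions by plugging in \eqref{ijlV} and the bound $|\hb(o,x)|\leq C_h(x)$. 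The main obstacle I anticipate is the H\"older-continuity step \eqref{ccsillag}–\eqref{petitvlip}: getting the exponent down from $\theta$ to exactly $\theta/2$ requires carefully interpolating the $|e-\bar e|^\theta$-continuity of $\VB$ (which controls $\VB$ and, after care, $\VB'$) against the uniform concavity bound $\ell_V$, and one must check that every constant that appears is genuinely independent of $B$ — which works precisely because all the hypotheses in Assumption \ref{onestep} ($i_V,j_V,J_V,\ell_V,L_V,C_V,\theta$) and the no-arbitrage constant $\alpha$ are $B$-free.
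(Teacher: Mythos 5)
Your overall architecture (bounded optimizers, uniqueness by strict concavity, first-order condition, implicit function theorem, envelope formula for $\vb'$) matches the paper's, but two steps as you describe them have genuine gaps.

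First, your route to \eqref{ccsillag} compares the first-order conditions $\gamma(B)(o,x,\hb(o,x))=\gamma(B)(\bar o,x,\hb(\bar o,x))=0$, and the second bracket then requires a H\"older-in-$e$ modulus for $\VB'$. No such modulus is assumed: \eqref{conii} controls $\VB$ only, and \eqref{ijlV} gives bounds on $\VB'$ that are uniform in $e$ but carry no continuity in $e$. You cannot ``differentiate \eqref{conii}''; the interpolation you gesture at (a Landau-type inequality applied to $x\mapsto \VB(e,x)-\VB(\bar e,x)$, combining \eqref{conii} with the two-sided bound on $\VB''$ from \eqref{ijlV} to get $|\VB'(e,x)-\VB'(\bar e,x)|\lesssim |e-\bar e|^{\theta/2}$) can indeed be made to work with $B$-free constants, but it is the crux of the argument and is left unexecuted. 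The paper avoids it entirely: it proves that $\Gamma(B)(\cdot,x,h)$ is $\theta$-H\"older in $o$ uniformly over $|h|\leq K(x)$ (estimate \eqref{ouf}, which needs only \eqref{conii} and the $J_V$-bound), shows that $\Gamma(B)(o,x,\cdot)$ decays quadratically away from its maximizer (the strong-concavity estimate \eqref{als}, with constant $C_f^2\ell_v(x)/2$ coming from \eqref{bi}), and then converts a value gap of order $|o-\bar o|^{\theta}$ into an optimizer gap of order $|o-\bar o|^{\theta/2}$ by
\begin{equation*}
\tfrac{C_f^{2}\ell_{v}(x)}{2}\bigl(\hb(o,x)-\hb(\bar o,x)\bigr)^{2}\leq \Gamma(B)\bigl(o,x,\hb(o,x)\bigr)-\Gamma(B)\bigl(o,x,\hb(\bar o,x)\bigr)\leq 2A(x)|o-\bar o|^{\theta}.
\end{equation*}
This is both simpler and places no demand on the regularity of $\VB'$ in $e$.

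Second, and more clearly a gap: for the lower bound $\ell_v(x)\leq -\vb''(o,x)$ in \eqref{ijlv} you say the expression is ``sandwiched \ldots by plugging in \eqref{ijlV} and the bound $|\hb(o,x)|\leq C_h(x)$.'' This does not work. One has $-\vb''(o,x)=\mathbb{E}\bigl[-\VB''\bigl(o,\varepsilon,x+\hb(o,x)f(o,\varepsilon)\bigr)\bigl(1+f(o,\varepsilon)\partial_x\hb(o,x)\bigr)\bigr]$, and the factor $1+f(o,\varepsilon)\partial_x\hb(o,x)$ changes sign on the relevant event in general, so termwise lower bounds give nothing positive. The paper's argument here is essential and nontrivial: after a change of measure with density proportional to $-\VB''$, the quantity becomes $q\,\mathbb{E}_{\mathbb{Q}}\bigl[(f-\mathbb{E}_{\mathbb{Q}}f)^2\bigr]/\mathbb{E}_{\mathbb{Q}}[f^2]$, and the two-sided condition \eqref{stronna} guarantees that this $\mathbb{Q}$-variance is at least $\alpha^3$ times $\inf_{D(x)}\ell_V$ divided by $C_f^2$, via a case analysis on the sign of $\mathbb{E}_{\mathbb{Q}}[f]$. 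Without some such argument the strict positivity of $\ell_v$ --- which is needed for the recursion in Proposition \ref{dyna}, since $\ell_v$ becomes the next step's $\ell_V$ and feeds into the implicit function theorem and the strong-concavity constant --- is not established.
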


\begin{proof}
We will divide this rather complex proof into several steps. First, as $\VB$ is bounded above by $C_{U}+C_{\nu}$, the same 
holds true for $\vb$. 

\smallskip{}

\noindent\textbf{Boundedness of optimizer sequences}

\smallskip{}

Fix $(o,x)\in \mathbb{R}^{(t-1)\times m}\times\mathbb{R}$. Let  $h \in \mathbb{R}$. 
Define 
$$B_{h}:=\{\omega\in\Omega: f\big(o,\varepsilon(\omega)\big)\mathrm{sgn}(h)\leq -\alpha\},$$ where 
$\mathrm{sgn}(h)=1$ if $h\geq 0$ and $\mathrm{sgn}(h)=-1$ else. 
 For $h$ such that
$|x|-\alpha |h|<0$, using that $\VB$ is nondecreasing and concave in $x$, we estimate,
\begin{eqnarray*}
 & &  \hspace*{-2cm} \mathbb{E}\bigl[\VB\bigl(o,\varepsilon,x +  hf(o,\varepsilon)\bigr)\bigr] \\
   & \leq &  C_{U}+C_{\nu}+\mathbb{E}\bigl[1_{B_{h}}\VB\bigl(o,\varepsilon,x+hf(o,\varepsilon)\bigr)\bigr]\\
 & \leq & C_{U}+C_{\nu}+\mathbb{E}[1_{B_{h}}\VB(o,\varepsilon,|x|-\alpha |h|)]\\ 
& \leq & C_{U}+C_{\nu}+\mathbb{E}\bigl[1_{B_{h}}[\VB(o,\varepsilon,0)+\VB'(o,\varepsilon,0)(|x|-\alpha |h|)]\bigr]\\
&  \leq &2(C_{U}+C_{\nu})+\mathbb{E}\bigl[1_{B_{h}}[j_{V}(0)(|x|-\alpha |h|)]\bigr] \\
&  =  &  2(C_{U}+C_{\nu})+\mathbb{P}[B_{h}]j_{V}(0)(|x|-\alpha |h|)\\
&  \leq & 2(C_{U}+C_{\nu})+\alpha j_{V}(0)(|x|-\alpha |h|),
\end{eqnarray*}
where we have used \eqref{stronna} for the last inequality. 

If $|h|\geq [2(C_{U}+C_{\nu})+|i_{V}(x)|+j_{V}(0)\alpha |x|]/(j_{V}(0)\alpha^{2})$ then
$$
2(C_{U}+C_{\nu})+\alpha j_{V}(0)(|x|-\alpha |h|)\leq -|i_{V}(x)|
$$
holds. 
Let us now set 
$$K(x):=\frac{|x|}{\alpha}+ \frac{2(C_{U}+C_{\nu})+|i_{V}(x)|+j_{V}(0)\alpha |x|}{j_{V}(0)\alpha^{2}}> 0.$$ 
It is clear from the assumptions that $K$ is continuous. 
If $|h| \geq K(x)$, then using \eqref{ijlV}, 
\begin{equation}\label{0}
\mathbb{E}\bigl[\VB\bigl(o,\varepsilon,x+hf(o,\varepsilon)\bigr)\bigr]\leq -|i_{V}(x)| \leq \mathbb{E}[\VB(o,\varepsilon,x)] \leq \vb(o,x).
\end{equation}

\noindent\textbf{Existence of optimizer for $\Gamma(o,x,\cdot)$ }

\smallskip{}

Fix $(o,x)\in \mathbb{R}^{(t-1)\times m}\times\mathbb{R}$.  Let $h_{n}(B)(o,x)\in\mathbb{R}$, $n\in\mathbb{N}$ be a sequence such that
\begin{equation}\label{hn} 
\mathbb{E}\bigl[\VB\bigl(o,\varepsilon,x+h_{n}(B)(o,x)f(o,\varepsilon)\bigr)\bigr]\to \vb(o,x),\ n\to\infty.{}
\end{equation}
By \eqref{0}, we may replace in \eqref{hn} $h_{n}(B)$ by $\tilde{h}_{n}(B):=h_{n}(B)1_{\{|h_n(B)|\leq K(x)\}}$. By compactness, there is a 
subsequence $\tilde{h}_{n(k)}(B)$, $k\in\mathbb{N}$
such that $\tilde{h}_{n(k)}(B)(o,x)\to \bar{h}(B)(o,x)$ for some $\bar{h}(B)(o,x)$. By Fatou's lemma and continuity of $ \VB$ in $x$,
\begin{eqnarray*}
\vb(o,x) & \leq & \mathbb{E}\bigl[\limsup_{k\to\infty}\VB\bigl(o,\varepsilon,x+\tilde{h}_{n(k)}(B)(o,x)f(o,\varepsilon)\bigr)\bigr]\\
 & = & 
\mathbb{E}\bigl[\VB\bigl(o,\varepsilon,x+\bar{h}(B)(o,x)f(o,\varepsilon)\bigr)\bigr].
\end{eqnarray*}
Since $\VB$ is strictly concave in $x$, such $\bar{h}(B)(o,x)$ is unique. 




\smallskip

\noindent\textbf{Differentiability of $\bar{h}$ and $\vb$}

\smallskip{}
For all $(o,x)\in \mathbb{R}^{(t-1)\times m}\times\mathbb{R},$ $\Gamma(B)\bigl(o,x,\cdot\bigr)$ is differentiable and
\eqref{deriveh} holds.
Since $\vb(o,x)=\sup_{h\in\mathbb{R}} \Gamma(B)(o,x,h)$ and this supremum is attained at $\bar{h}(B)(o,x)$, 
the derivative must be $0$ at this point: 
\begin{eqnarray}
\label{premierordre}
\gamma(B)\bigl(o,x,\bar{h}(B)(o,x)\bigr)=0.
\end{eqnarray}
Fix $o\in \mathbb{R}^{(t-1)\times m}.$ Now, we want to apply the implicit function theorem (see p. 150 of Zeidler \cite{zeidler}) in order to show that $\bar{h}(B)(o,\cdot)$ is differentiable. First, Lemma \ref{well}  shows that $\gamma(B)\bigl(o,\cdot, \cdot)$ is differentiable. So, to apply the implicit function theorem in $(x,\bar{h}(B)(o,x))$ for all $x\in \mathbb{R}$, we need to prove that 
$$\bigl|\partial_{h}
\gamma(B)\bigl(o,x,\bar{h}(B)(o,x)\bigr)\bigr|>0.$$
In fact, we show  that for all $(x,h)\in \mathbb{R} \times \mathbb{R}$, 
\begin{equation}\label{bi}
|\partial_{h}\gamma(B)(o,x,h)|\geq \alpha^{3}\inf_{y\in D(x)}\ell_{V}(y),
\end{equation}
where $\alpha>0$ is given in \eqref{stronna}, and \begin{equation}\label{intD}
D(x):=[x-K(x)C_{f},x+K(x)C_{f}]\end{equation}
Recalling \eqref{derivehh}, and using  $\ell_{V}>0$ and \eqref{stronna}, we obtain that
\begin{eqnarray*} 
|\partial_{h}\gamma(B)(o,x,h)| & = & -\mathbb{E}\bigl[\VB''\bigl(o,\varepsilon,x+hf(o,\varepsilon)\bigr)f^2(o,\varepsilon)\bigr] \\
& \geq  & \mathbb{E}\bigl[\ell_{V}\bigl(x+hf(o,\varepsilon)\bigr)f^2(o,\varepsilon)\bigr] \\
& \geq  &\inf_{y\in D(x)}\ell_{V}(y)\mathbb{E}\bigl[f^2(o,\varepsilon)\bigr] \geq \inf_{y\in D(x)}\ell_{V}(y)
\mathbb{E}\bigl[f^2(o,\varepsilon)1_{\{f(o,\varepsilon)\geq \alpha\}}\bigr] \\
&\geq &  \alpha^{2}\inf_{y\in D(x)}\ell_{V}(y) 
\mathbb{P}[f(o,\varepsilon)\geq \alpha]\\
&\geq &  \alpha^{3}\inf_{y\in D(x)}\ell_{V}(y).
\end{eqnarray*}
Since $\ell_{V}$ is strictly positive, the conditions of the implicit function theorem
are met in every point $x$, and there exist $\delta(o)(x)>0$ (recall that we have fixed $o$), a continuously differentiable function
$\hat{h}(B)(o):(x-\delta(o)(x),x+\delta(o)(x))\to\mathbb{R}$ such that 
for all $y \in (x-\delta(o)(x),x+\delta(o)(x))$, 
$$
\gamma(B)\bigl(o,y,\hat{h}(B)(o)(y)\bigr)=0.
$$
Now, by the unicity of the root of \eqref{premierordre}, we necessarily have that for all $y \in (x-\delta(o)(x),x+\delta(o)(x))$
$$
\bar{h}(B)(o,y)=\hat{h}(B)(o)(y). 
$$
So $\bar{h}(B)(o,\cdot)$ is 
continuously differentiable in a neighbourhood of $x$ (which depends of $o$). Since this argument works for all $x$, 
$\bar{h}(B)(o,\cdot)$ is continuously differentiable on the whole real line with the derivative given by
\begin{eqnarray}
\label{moche} \partial_{x}\bar{h}(o,x) & = & -\frac{\partial_{x}
\gamma(B)\bigl(o,x,\bar{h}(B)(o,x)\bigr)}{\partial_{h}
\gamma(B)\bigl(o,x,\bar{h}(B)(o,x)\bigr)}.
\end{eqnarray}
As $o$ was arbitrary in $\mathbb{R}^{(t-1)\times m},$ \eqref{moche} holds for all $o \in \mathbb{R}^{(t-1)\times m}.$ \\
As $\vb(o,x)= \Gamma(B)(o,x,\bar{h}(o,x))$ and $x\mapsto \bar{h}(o,x), \Gamma(B)(o,x,h)$ are differentiable, $x\mapsto \vb(o,x)$ is also differentiable, and 
\begin{eqnarray}
\nonumber
\vb'(o,x) & = & \partial_{x}
\Gamma(B)\bigl(o,x,\bar{h}(B)(o,x)\bigr)
+  \partial_{h}
\Gamma(B)\bigl(o,x,\bar{h}(B)(o,x)\bigr) \partial_{x}\bar{h}(o,x) \\
& = & \nonumber
\partial_{x} \Gamma(B)\bigl(o,x,\bar{h}(B)(o,x)\bigr)\\
& =& 
\mathbb{E}\bigl[\VB'\bigl(o,\varepsilon,x+\bar{h}(o,x)f(o,\varepsilon)\bigr)\bigr],
\label{beau} 
\end{eqnarray}
recalling \eqref{deriveh}, \eqref{premierordre} and \eqref{derivex}.

\smallskip{}

\noindent\textbf{An estimate for $\Gamma$}

\smallskip{}

We claim that there is a continuous
function $A>0$ such that, for all $x \in\mathbb{R}$,  $|h|\leq K(x)$, and ${o},\bar{o}\in \mathbb{R}^{(t-1)\times m}$ 
\begin{eqnarray}
\label{ouf}
|\Gamma(B)(o,x,h)-\Gamma(B)(\bar{o},x,h)|\leq A(x)|o-\bar{o}|^{\theta}.
\end{eqnarray}
Recalling the interval $D(x)$ from \eqref{intD}, we can estimate
\begin{equation}\label{upper_control}
|\Gamma(B)(o,x,h)|\leq C_{U}+C_{\nu}+\sup_{y\in D(x)}|i_{V}(y)|,
\end{equation}
by \eqref{ijlV} and by $V(B)\leq C_{U}+C_{\nu}$.
Furthermore,
\begin{eqnarray*}
& & 
\hspace*{-2cm} |\Gamma(B)(o,x,h)-\Gamma(B)(\bar{o},x,h)|\\ 
&\leq&  \mathbb{E}\bigl[\bigl|\VB(o,\varepsilon,x+hf(o,\varepsilon))-\VB\bigl(\bar{o},\varepsilon,x+hf(o,\varepsilon)\bigr)\bigr|\bigr]\\
& &+	\mathbb{E}\bigl[\bigl|\VB\bigl(\bar{o},\varepsilon,x+hf(o,\varepsilon)\bigr)-\VB\bigl(\bar{o},\varepsilon,x+hf(\bar{o},\varepsilon)\bigr)\bigr|\bigr]\\
&\leq&\mathbb{E}\bigl[ C_{V}\bigl(x+hf(o,\varepsilon)\bigr)\bigr]|o-\bar{o}|^{\theta} \\
& & + \mathbb{E}\bigl[ \sup_{y\in D(x)}\VB'(\bar{o},\varepsilon,y)|h|\bigr|f(o,\varepsilon)-f(\bar{o},\varepsilon)\bigr|\bigr]\\
&\leq& \sup_{y\in D(x)}C_{V}(y)|o-\bar{o}|^{\theta}+\sup_{y\in D(x) }J_{V}(y)K(x)C_{f}|o-\bar{o}|^{\chi} \\
&\leq& \bigl(\sup_{y\in D(x)}C_{V}(y)+\sup_{y\in D(x) }J_{V}(y)K(x)C_{f}\bigr)\bigl(|o-\bar{o}|^{\theta}+|o-\bar{o}|^{\chi}\bigr)\\
&\leq & A(x) |o-\bar{o}|^{\theta},
\end{eqnarray*}	
where $$A(x):= 2\bigl(\sup_{y\in D(x)}C_{V}(y)+\sup_{y\in D(x) }J_{V}(y)K(x)C_{f}\bigr)+2\bigl(C_{U}+C_{\nu}+
\sup_{y\in D(x)}|i_{V}(y)|\bigr) >0,$$
recalling \eqref{upper_control} and 
using Lemma \ref{upi} with the choice $n=2$, 
$\theta_{1}=\theta$, $\theta_{2}=\chi$, since $0<\theta\leq \chi \leq 1$. 
The function $A$ is continuous as $C_{V},J_{V},K,i_{V}$ are, see Lemma \ref{cont}.


\noindent\textbf{Estimates of $v$ and its derivatives}

\smallskip{}
Let $(o,x)\in \mathbb{R}^{(t-1)\times m} \times  \mathbb{R}.$  
Clearly, $\vb(o,x)\geq \mathbb{E}[\VB(o,\varepsilon,x)]\geq i_{V}(x)$ so we may set $i_{v}(x):=i_{V}(x)$.
Furthermore, recalling \eqref{beau} and noting that $\VB'$ is non-increasing in $x$,
\begin{eqnarray*}
\vb'(o,x) &=& \mathbb{E}\bigl[\VB'\bigl(o,\varepsilon,x+\bar{h}(o,x)f(o,\varepsilon)\bigr)\bigr]\\ &\leq& \VB'(o,\varepsilon,x-K(x)C_{f})\leq J_{V}(x-K(x)C_{f})=:J_{v}(x)	
\end{eqnarray*}
and
$$
\vb'(o,x)\geq \VB'(o,\varepsilon,x+K(x)C_{f})\geq j_{V}(x+K(x)C_{f})=:j_{v}(x). 
$$
With these definitions,  the functions $i_{v},j_{v},J_{v}$  are continuous (recall that $K$ is continuous), 
$i_{v}:\mathbb{R}\to\mathbb{R}$ and $j_{v},J_{v}:\mathbb{R}\to (0,\infty)$.

Recalling  \eqref{beau}, we prove as in Lemma \ref{well} that 
\begin{equation}
\label{pasbeau}
\vb''(o, x) =  
\mathbb{E}\bigl[\VB''\bigl(o,\varepsilon,x+\bar{h}(o,x)f(o,\varepsilon)\bigr)\bigl(1+ f(o,\varepsilon)\partial_{x}\bar{h}(o,x) \bigr)\bigr],
\end{equation}
and that $x\mapsto \vb''(o, x)$ is continuous (here, \eqref{moche} shows that $x \mapsto \partial_{x}\bar{h}(o,x)$ is continuous). 
Recall again the interval $D(x)$ from \eqref{intD}. Let the probability 
$\mathbb{Q}$ be defined as follows (see \eqref{ijlV}), 
\begin{eqnarray*}
q & := & -\mathbb{E}\bigl[\VB''\bigl(o,\varepsilon,x+\bar{h}(o,x)f(o,\varepsilon)\bigr)\bigr] \geq \inf_{y \in D(x)} \ell (y)>0 \\
\frac{d\mathbb{Q}}{d\mathbb{P}} &:= & \frac{-\VB''\bigl(o,\varepsilon,x+\bar{h}(o,x)f(o,\varepsilon)\bigr)}{q}.
\end{eqnarray*}
We denote by $\mathbb{E}_{\mathbb{Q}}$ the expectation under $\mathbb{Q}$.  
Recalling \eqref{moche},  \eqref{derivexh} and \eqref{derivehh}, we estimate that
\begin{eqnarray}
\nonumber
-\vb''(o, x) & = &   q\mathbb{E}_{\mathbb{Q}}\bigl[1+ f(o,\varepsilon)\partial_{x}\bar{h}(o,x) \bigr] \\
\nonumber
& = &  q\mathbb{E}_{\mathbb{Q}}\Bigl[1- f(o,\varepsilon)\frac{\mathbb{E}\bigl[\VB''\bigl(o,\varepsilon,x+\bar{h}(o,x)f(o,\varepsilon)\bigr)f(o,\varepsilon)\bigr]}{\mathbb{E}\bigl[\VB''\bigl(o,\varepsilon,x+\bar{h}(o,x)f(o,\varepsilon)\bigr)f^2(o,\varepsilon)\bigr]}\Bigr]\\
\nonumber
& = &  q\mathbb{E}_{\mathbb{Q}}\Bigl[1- f(o,\varepsilon)\frac{\mathbb{E}_{\mathbb{Q}}[f(o,\varepsilon)]}{\mathbb{E}_{\mathbb{Q}}[f^2(o,\varepsilon)]}\Bigr]={}
q\frac{\mathbb{E}_{\mathbb{Q}}[f^2(o,\varepsilon)] -\mathbb{E}^2_{\mathbb{Q}}[f(o,\varepsilon)]}{\mathbb{E}_{\mathbb{Q}}[f^2(o,\varepsilon)]}\\
\nonumber
& = & q\frac{\mathbb{E}_{\mathbb{Q}}\bigl[ \bigl(f(o,\varepsilon)-\mathbb{E}_{\mathbb{Q}}[f(o,\varepsilon)]\bigr)^2\bigr]}{\mathbb{E}_{\mathbb{Q}}[f^2(o,\varepsilon)]}.
\end{eqnarray}
We now distinguish between two cases. If $\mathbb{E}_{\mathbb{Q}}[f(o,\varepsilon)]>0$, then 
\begin{eqnarray*}
-\vb''(o, x)& \geq  & q\frac{\mathbb{E}_{\mathbb{Q}}\bigl[ \bigl(f(o,\varepsilon)-\mathbb{E}_{\mathbb{Q}}
[f(o,\varepsilon)]\bigr)^2 1_{\{f(o,\varepsilon) \leq -\alpha\}}\bigr]}{\mathbb{E}_{\mathbb{Q}}[f^2(o,\varepsilon)]} \\
& \geq &  q\frac{\mathbb{E}_{\mathbb{Q}}\bigl[ 1_{\{f(o,\varepsilon) \leq -\alpha\}} \alpha^2\bigr]}{\mathbb{E}_{\mathbb{Q}}[f^2(o,\varepsilon)]} \geq 
\frac{q\alpha^2}{C_f^2} \mathbb{E}_{\mathbb{Q}}\bigl[ 1_{\{f(o,\varepsilon) \leq -\alpha\}} \bigr]\\
& \geq &  \frac{\alpha^2}{C_f^2}\mathbb{E}\bigl[ -\VB''\bigl(o,\varepsilon,x+\bar{h}(o,x)f(o,\varepsilon)\bigr)1_{\{f(o,\varepsilon) \leq -\alpha\}} \bigr] \\
& \geq &  \frac{\alpha^2}{C_f^2} \inf_{y\in D(x)} \ell_{V}(y)\mathbb{E}\bigl[ 1_{\{f(o,\varepsilon) \leq -\alpha\}} \bigr] \geq \frac{\alpha^3}{C_f^2} \inf_{y\in D(x)} \ell_{V}(y),
\end{eqnarray*}
using \eqref{stronna}. Now,  if $\mathbb{E}_{\mathbb{Q}}[f(o,\varepsilon)]\leq 0$, then 
\begin{eqnarray*}
-\vb''(o, x)& \geq  & q\frac{\mathbb{E}_{\mathbb{Q}}\bigl[ \bigl(f(o,\varepsilon)-\mathbb{E}_{\mathbb{Q}}[f(o,\varepsilon)]\bigr)^2 
1_{\{f(o,\varepsilon) \geq \alpha\}}\bigr]}{\mathbb{E}_{\mathbb{Q}}[f^2(o,\varepsilon)]} \\
&  \geq &   q
\frac{\mathbb{E}_{\mathbb{Q}}\bigl[ 1_{\{f(o,\varepsilon) \geq \alpha\}} \alpha^2\bigr]}{\mathbb{E}_{\mathbb{Q}}[f^2(o,\varepsilon)]} 
\\
& \geq &  \frac{\alpha^2}{C_f^2}\mathbb{E}\bigl[ -\VB''\bigl(o,\varepsilon,x+\bar{h}(o,x)f(o,\varepsilon)\bigr)1_{\{f(o,\varepsilon) \geq \alpha\}} \bigr] \\
& \geq  & 
\frac{\alpha^3}{C_f^2} \inf_{y\in D(x)} \ell_{V}(y):= \ell_{v}(x).
\end{eqnarray*}
Then, $\ell_{v}:\mathbb{R}\to (0,\infty)$ is continuous, see Lemma \ref{cont}. 
For the upper bound, using \eqref{moche}, \eqref{derivexh} and  \eqref{bi}, we get that 
\begin{eqnarray*}
\left| \partial_{x}\bar{h}(o,x)\right| & = & \Bigl|\frac{\partial_{x}
\gamma(B)\bigl(o,x,\bar{h}(B)(o,x)\bigr)}{\partial_{h}
\gamma(B)\bigl(o,x,\bar{h}(B)(o,x)\bigr)}\Bigr|\\
& \leq &  \frac{\mathbb{E}\bigl[|\VB''\bigl(o,\varepsilon,x+\bar{h}(B)(o,x)f(o,\varepsilon)\bigr)f(o,\varepsilon)|\bigr]}{\alpha^{3}\inf_{y\in D(x)}\ell_{V}(y)}\\
& \leq & \frac{C_{f}\sup_{y\in D(x)}L_{V}(y)}{\alpha^{3}\inf_{y\in D(x)}\ell_{V}(y)}=\frac{\sup_{y\in D(x)}L_{V}(y)}{C_{f}\ell_{v}(x)}.
\end{eqnarray*}
Recalling \eqref{pasbeau}, we get that 
\begin{eqnarray*}
-\vb''(o, x) & = & \mathbb{E}\bigl[-\VB''\bigl(o,\varepsilon,x+\bar{h}(o,x)f(o,\varepsilon)\bigr)\bigl(1+ f(o,\varepsilon)\partial_{x}\bar{h}(o,x) \bigr)\bigr] \\
& \leq  & \sup_{y\in D(x)}L_{V}(y)\left(1+ \frac{\sup_{y\in D(x)}L_{V}(y)}{\ell_{v}(x)} \right) =:L_{v}(x),
\end{eqnarray*}
and $L_{v}:\mathbb{R}\to (0,\infty)$ is continuous using again Lemma \ref{cont}. 

\noindent\textbf{Continuity of $\bar{h}$ with respect to past}

\smallskip{}

Let $x\in \mathbb{R}.$    In this part of the proof, we suppress dependence on $B$ in the notation, for simplicity.
Let $|h|\leq K(x)$ hold from now on. Recall the interval $D(x)$ from \eqref{intD} again.

Let $o,\bar{o} \in \mathbb{R}^{(t-1)\times m}$. Let $h\in \mathbb{R}$ and $\bar{h}(o,x)$ (resp. $\bar{h}(\bar{o},x)$) be the optimizer of $\Gamma(B)(o,x,\cdot)$  (resp. 
$\Gamma(B)(\bar{o},x,\cdot)$). 
One can write, by the Newton-Leibniz rule:
\begin{equation}\label{nl}
\Gamma(o,x,h)-\Gamma\bigl(o,x,\bar{h}(o,x)\bigr)=\int_{\bar{h}(o,x)}^{h}\gamma(o,x,\xi)\, d\xi.
\end{equation}
The first order condition \eqref{premierordre} and \eqref{bi} imply that for any $\xi$, 
\begin{eqnarray*}
|\gamma(o,x,\xi)|=|\gamma(o,x,\xi)-\gamma\bigl(o,x,\bar{h}(o,x)\bigr)| & \geq &    |\xi-\bar{h}(o,x)|\alpha^{3}\inf_{y\in D(x)}\ell_{V}(y)\\
 &= & |\xi-\bar{h}(o,x)|C_f^{2}\ell_{v}(x). 
\end{eqnarray*}
First assume that $\bar{h}(o,x) \leq h.$ Let $\xi$ such that $\bar{h}(o,x)\leq \xi\leq h$. Then, as $\bar{h}(o,x)$ is the maximum of $h \mapsto \Gamma(o,x,h)$ and 
$\partial_{h}\Gamma(o,x,h) =\gamma(o,x,h)$, we get that 
$\gamma(o,x,h) \leq 0$ and 
\begin{equation}\label{gammabecs1}
-\gamma(o,x,\xi)\geq |\xi-\bar{h}(o,x)|C_f^{2}\ell_{v}(x)=\bigl(\xi-\bar{h}(o,x)\bigr)C_f^{2}\ell_{v}(x),	
\end{equation}
so \eqref{nl} implies
\begin{equation}\label{als}
\Gamma(o,x,h)-\Gamma\bigl(o,x,\bar{h}(o,x)\bigr)\leq -\frac{C_f^{2}\ell_{v}(x)}{2}(h-\bar{h}(o,x))^{2}.
\end{equation}
Assume now that $\bar{h}(o,x)>h$. Let $\xi$ such that $\bar{h}(o,x)\geq \xi> h$. Then $\gamma(o,x,h) \geq 0$ and
\begin{equation}\label{gammabecs2}
\gamma(o,x,\xi)\geq |\xi-\bar{h}(o,x)|C_f^{2}\ell_{v}(x)=-(\xi-\bar{h}(o,x))C_f^{2}\ell_{v}(x),	
\end{equation}
leading again to \eqref{als}. For $o,\bar{o}$ \eqref{als} gives for $h=\bar{h}(\bar{o},x)$,
\begin{eqnarray*}
 \frac{C_f^{2}\ell_{v}(x)}{2}\big(\bar{h}({o},x)-\bar{h}(\bar{o},x)\big)^{2} & \leq & |\Gamma\bigl(o,x,\bar{h}(o,x)\bigr)-\Gamma\bigl(o,x,\bar{h}(\bar{o},x)\bigr)| \\
& \leq & 
|\Gamma\bigl(o,x,\bar{h}(o,x)\bigr)-\Gamma\bigl(\bar{o},x,\bar{h}(\bar{o},x)\bigr)|  \\
& & + |\Gamma\bigl(o,x,\bar{h}(\bar{o},x)\bigr)-\Gamma\bigl(\bar{o},x,\bar{h}(\bar{o},x)\bigr)| \\
& \leq & 
|\Gamma\bigl(o,x,\bar{h}(o,x)\bigr)-\Gamma\bigl(\bar{o},x,\bar{h}(\bar{o},x)\bigr)|  + A(x) |o-\bar{o}|^{\theta},
\end{eqnarray*}
where for the last inequality, we have used \eqref{ouf} as $|\bar{h}(\bar{o},x)| \leq K(x).$ 
Recalling 
\eqref{0}:
$$
v(o,x)=\sup_{h\in\mathbb{R}}\Gamma(o,x,h)=\sup_{|h|\leq K(x)}\Gamma(o,x,h)=\Gamma(o,x,\bar{h}(o,x)),
$$
and the same holds true for $\bar{o}$. It follows that 
\begin{eqnarray*}
|\Gamma\bigl(o,x,\bar{h}(o,x)\bigr)-\Gamma\bigl(\bar{o},x,\bar{h}(\bar{o},x)\bigr)| 
& = &  
\bigl|\sup_{|h|\leq K(x)}\Gamma(o,x,h)-\sup_{|h|\leq K(x)}\Gamma(\bar{o},x,h)\bigr| \\
& \leq  &  
\sup_{|h|\leq K(x)}\bigl|\Gamma(o,x,h)-\Gamma(\bar{o},x,h)\bigr| \\
& \leq  &   A(x) |o-\bar{o}|^{\theta},
\end{eqnarray*}
where 
the last inequality follows from \eqref{ouf} as every $h$ is such that $|h| \leq K(x).$ So, we get that 
$$
|\bar{h}(o,x)-\bar{h}(\bar{o},x)|^{2}\leq \frac{4A(x)|o-\bar{o}|^{\theta}}{C_f^{2}\ell_{v}(x)},
$$
which implies that, indeed, \eqref{ccsillag} is valid with 
$$
C_{h}(x):=K(x)+ \frac{2}{C_f}\sqrt{\frac{A(x)}{\ell_{v}(x)}}>0.
$$
As $A,K, \ell_{v}$ are continuous, so is  $C_{h}$. Remark that $C_{h}$ does not depend on $B$.

\noindent\textbf{Continuity of $v$ with respect to the past}

\smallskip{}
Let $x\in \mathbb{R}.$ Let $o,\bar{o} \in \mathbb{R}^{(t-1)\times m}$. Let $\bar{h}(o,x)$ (resp. $\bar{h}(\bar{o},x)$) be the optimizer of $\Gamma(B)(o,x,\cdot)$  (resp. 
$\Gamma(B)(\bar{o},x,\cdot)$). 
We have already estimated that (see \eqref{ijlv})
\begin{equation}\label{v_upper}
|\vb(o,x)|\leq C_{U}+C_{\nu}+|i_{v}(x)|.
\end{equation}
Note that $|\bar{h}(o,x)f(o,\varepsilon)|\leq K(x)C_{f}$. Recall \eqref{ijlV} and \eqref{efa}. Estimate
\begin{eqnarray*}
& & \hspace*{-1cm} |\vb(o,x)-\vb(\bar{o},x)|\\
&\leq& \mathbb{E}\bigl[\bigl|\VB\bigl(o,\varepsilon,x+\hb(o,x)f(o,\varepsilon)\bigr)-\VB\bigl(\bar{o},\varepsilon,x+\hb(\bar{o},x)
f(\bar{o},\varepsilon)\bigr)\bigr|\bigr]\\
&\leq& 	
\mathbb{E}\bigl[\bigl|\VB\bigl(o,\varepsilon,x+\hb(o,x)f(o,\varepsilon)\bigr)-\VB\bigl(\bar{o},\varepsilon,x+\hb(o,x)
f(o,\varepsilon)\bigr)\bigr|\bigr]\\
& &  + 
\mathbb{E}\bigl[\bigl|\VB\bigl(\bar{o},\varepsilon,x+\hb(o,x)f(o,\varepsilon)\bigr)-\VB\bigl(\bar{o},\varepsilon,x+\hb(\bar{o},x)
f(\bar{o},\varepsilon)\bigr)\bigr|\bigr]\\
&\leq& \sup_{y\in D(x)}C_{V}(y)|o-\bar{o}|^{\theta} \\
& &  + 
\mathbb{E}\bigl[\sup_{y\in D(x)}V(B)'\bigl(\bar{o},\varepsilon,y)
|\hb(o,x)f(o,\varepsilon)- \hb(\bar{o},x)f(\bar{o},\varepsilon)| \bigr]\\
&\leq& \sup_{y\in D(x)}C_{V}(y)|o-\bar{o}|^{\theta}
\\ & & + \sup_{y\in D(x)}J_{V}(y)
\Bigl[K(x)\mathbb{E}\bigl[|f(o,\varepsilon)-f(\bar{o},\varepsilon)|\bigr]+C_{f}|\hb(o,x)-\hb(\bar{o},x)|\Bigr]\\
&\leq& \bigl[\sup_{y\in D(x)}C_{V}(y)+\sup_{y\in D(x)}J_{V}(y)C_{f}[K(x)+C_{h}(x)]\bigr]
\\ &\times&
[|o-\bar{o}|^{\theta}+|o-\bar{o}|^{\chi}+|o-\bar{o}|^{\theta/2}],
\end{eqnarray*}
where we have used 
\begin{eqnarray*}
|\hb(o,x)f(o,\varepsilon)-\hb(\bar{o},x)
f(\bar{o},\varepsilon)|
 & \leq  & |\hb(o,x)f(o,\varepsilon)-\hb(o,x)
f(\bar{o},\varepsilon)| 
\\ & & 
+|\hb(o,x)
f(\bar{o},\varepsilon)-\hb(\bar{o},x)
f(\bar{o},\varepsilon)|.\end{eqnarray*}
Recalling \eqref{v_upper} and Lemma \ref{upi} with the choice $n=3$, 
$\theta_{1}:=\theta/2$, $\theta_{2}:=\theta$, $\theta_{3}=\chi$,{}
we may set
$$
C_{v}(x):=3\bigl[\sup_{y\in D(x)}C_{V}(y)+\sup_{y\in D(x)}J_{V}(y)C_{f}[K(x)+C_{h}(x)]\bigr]+2\bigl(
C_{U}+C_{\nu}+|i_{v}(x)|\bigr),
$$
and \eqref{petitvlip} holds. 
Since $C_{V},J_{V},K,C_{h},i_{v}$ are continuous,
so is $C_{v}(x),$ see Lemma \ref{cont}. 

\smallskip{}

\noindent\textbf{Measurability}

\smallskip{}

It is known that Carath\'{e}odory integrand (i.e. a function of two variables that is measurable in the first and continuous in the second) is jointly 
measurable, see \cite[Lemma 4.51]{AB}. 
So, the function $\Gamma(B)$ is 
$\mathcal{B}(\mathbb{R}^{(t-1)\times m})\otimes \mathcal{B}(\mathbb{R})\otimes \mathcal{B}(\mathbb{R})$-measurable, 
see Lemma \ref{well} (the first variable is here $o$ and the second $(x,h)$). Now, $\bar{h}(B)$ is continuous in $o$ 
(see \eqref{ccsillag}) and we have proved that $\bar{h}(B)$ is differentiable in $x$ (see \eqref{moche}). 
So, $\bar{h}(B)$ is continuous in each 
variable separately, hence it is $\mathcal{B}(\mathbb{R}^{(t-1)\times m})\otimes\mathcal{B}(\mathbb{R})$-measurable.
Then so is $v(B)$, as $v(B)(o,x)=\Gamma(B)(o,x,\bar{h}(B)(o,x))$. 
Now our proof is complete.
\end{proof}

\subsection{Dynamic programming}

We prove that there exists some bounded and Hölder-continuous solution for the optimization problem \eqref{petitu}.

\begin{proposition}\label{dyna}
Let Assumptions \ref{filtra}, \ref{price1}, \ref{una}, \ref{utility} and \ref{mienk} hold. 
Let $x_{0} \in \mathbb{R}$ and $\phi\in\Phi$ be arbitrary. Then there exists a unique optimizer $\psi^{*}:=\psi^{*}(\phi)(\cdot,x_{0})\in\Phi$ such that
$$
u(x_{0},\phi)=\sup_{\psi\in\Phi}\mathbb{E}\bigl[U\bigl(W_{T}(x_{0},\psi),B(\phi)\bigr)\bigr]=\mathbb{E}\bigl[U\bigl(W_{T}(x_{0},\psi^{*}),B(\phi)\bigr)\bigr].
$$	
Denoting by $\bar{\psi}^{*}$ a Borel function associated to $\psi^{*}$ by Doob's theorem, i.e. 
$\psi^{*}_{t}:=\bar{\psi}^{*}_{t}(\varepsilon^{t-1})$, $1\leq t\leq T$ (we mean that $\bar{\psi}^{*}_1$ is constant), 
$\bar{\psi}^{*}$ can be chosen bounded and Hölder-continuous, where constants are independent of $\phi$. That is,
there exists  a continuous function $C: \mathbb{R} \to (0,\infty)$ such that for all $1\leq t \leq T$,  for all $e^{t-1},\bar{e}^{t-1}\in\mathbb{R}^{(t-1)\times m}$, 
\begin{eqnarray}\label{m1}
|\bar{\psi}^{*}(\phi)_t (e^{t-1},x_{0})| & \leq &  C(x_{0}) \\
\label{m2}
|\bar{\psi}^{*}(\phi)_{t}(e^{t-1},x_{0})-\bar{\psi}^{*}(\phi)_{t}(\bar{e}^{t-1},x_{0})| & \leq &  C(x_{0})|e^{t-1}-\bar{e}^{t-1}|^{\chi/2^{T-t+1}}.
\end{eqnarray}
Note again that the constant $C(x_{0})$ depends only on $x_{0}$ and neither on $B$ nor on $\phi$. 
\end{proposition}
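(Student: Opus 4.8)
The plan is to prove Proposition \ref{dyna} by backward induction on $t$ from $T$ down to $1$, applying Proposition \ref{recur} repeatedly. First, I would set up the dynamic programming recursion: define $V_T(B)(e^T,x) := U(x,B) = U(x) + \nu(U(x)-U(B))$ and check that it satisfies Assumption \ref{onestep} with index $t=T$. The boundedness above by $C_U+C_\nu$ is immediate from Assumptions \ref{utility} and \ref{mienk}; twice continuous differentiability in $x$ follows from the smoothness of $U$ and $\nu$. The bounds $i_{V_T}, j_{V_T}, J_{V_T}, \ell_{V_T}, L_{V_T}$ are obtained by a direct computation of the first two derivatives of $U(x,B)$ in $x$: one gets $\partial_x U(x,B) = U'(x)(1+\nu'(U(x)-U(B)))$ and $\partial_x^2 U(x,B) = U''(x)(1+\nu'(\cdot)) + (U'(x))^2\nu''(\cdot)$, and since $U'$ and $|U''|$ are bounded away from $0$ and $\infty$ on bounded sets (by the remark following Assumption \ref{utility}), $\nu'\in(0,k_-]$, and $\nu''$ is bounded, one can extract continuous, $B$-independent bounds valid for all reference points $B$ (here one uses that the relevant quantities only involve $U(x)$ and not $U(B)$ in the leading terms, plus $0 < 1+\nu' \le 1+k_-$; the strict concavity needs $\ell_{V_T}>0$, which holds because $U''(x)<0$ and $1+\nu'>0$ dominate, while $(U'(x))^2\nu''$ can be controlled — this is where Assumption \ref{mienk} that $\nu$ is \emph{linear} on the negatives, hence $\nu''\le 0$ everywhere, is used, so that the second-derivative term only helps concavity). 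The Hölder condition \eqref{conii} is vacuous at $t=T$ in the sense that $V_T(B)$ does not depend on $e^T$ at all, so we may take $C_{V_T}\equiv 0$ and $\theta = \chi$.

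Next comes the inductive step. Suppose that at stage $t+1$ we have a function $V_{t+1}(B) : \mathbb{R}^{(t+1)\times m}\times\mathbb{R}\to\mathbb{R}$ satisfying Assumption \ref{onestep} with some exponent $\theta_{t+1}\in(0,\chi]$ and $B$-independent bounding functions. I would observe that $V_{t+1}(B)$, viewed as a function of its last $m$-block-plus-$x$ with the first $t$ blocks playing the role of ``$o$'', together with $f = f_{t+1}$ and $\varepsilon = \varepsilon_{t+1}$ (which satisfies \eqref{stronna} by Assumption \ref{una} and \eqref{efa} by Assumption \ref{price1}), fits exactly the setup of Subsection \ref{secune}. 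Then Proposition \ref{recur} produces $V_t(B)(e^t,x) := v(B)(e^t,x) = \sup_{h}\mathbb{E}[V_{t+1}(B)(e^t,\varepsilon_{t+1},x+hf_{t+1}(e^t,\varepsilon_{t+1}))]$, which again satisfies Assumption \ref{onestep}, now with exponent $\theta_t = \theta_{t+1}/2$ and new $B$-independent bounding functions $i_{v}, j_v, J_v, \ell_v, L_v$, and it furnishes the optimal one-step control $\bar h(B)(e^t,x)$ with $|\bar h(B)(e^t,x)|\le C_h(x)$ and the Hölder estimate \eqref{ccsillag} in $e^t$ with exponent $\theta_t/2 = \theta_{t+1}/4$. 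Iterating from $T$ down, the exponent at stage $t$ becomes $\chi/2^{T-t}$, and the one-step optimizer's Hölder exponent is $\chi/2^{T-t+1}$, matching the exponent in \eqref{m2}.

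Then I would assemble the strategy. Define $\psi^*$ recursively: set $X_0 := x_0$, and for $t=1,\dots,T$ let $\psi^*_t := \bar h(B(\phi))_t(\varepsilon^{t-1}, X_{t-1})$ where $X_t := X_{t-1} + \psi^*_t\,\Delta S_t = X_{t-1} + \psi^*_t f_t(\varepsilon^t)$, and $\bar h_t$ denotes the one-step optimizer obtained when applying Proposition \ref{recur} at stage $t$ to $V_{t}$ (with reference point $B = B(\phi)$). A standard dynamic-programming verification argument — iterating the tower property and using at each step that $\bar h_t(\varepsilon^{t-1},X_{t-1})$ attains the supremum defining $V_{t-1}$ from $V_t$, together with $\mathbb{E}[V_T(B(\phi))(\varepsilon^T,W_T(x_0,\psi))] = \mathbb{E}[U(W_T(x_0,\psi),B(\phi))]$ — shows that $\psi^*$ is optimal and that $u(x_0,\phi) = \mathbb{E}[V_0(B(\phi))(x_0)]$; uniqueness follows from the strict concavity of $x\mapsto V_t(B)(e^t,x)$ at each step, which forces the optimal $h$ to be unique (as already noted in Proposition \ref{recur}). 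The Borel function $\bar\psi^*$ is obtained by composing the measurable maps $\bar h_t$; boundedness \eqref{m1} follows from $|\psi^*_t| = |\bar h_t(\varepsilon^{t-1},X_{t-1})| \le C_h^{(t)}(X_{t-1})$ combined with a uniform bound $|X_{t-1}|\le |x_0| + \sum_{j<t} C_h^{(j)}(\cdot)C_f$ — more carefully, one first bounds $|X_t|$ by a continuous function of $x_0$ by forward induction (each $\psi^*_j$ is bounded by $C_h^{(j)}$ evaluated at $X_{j-1}$, each $|\Delta S_j|\le C_f$), so all the $X_{t-1}$ range, as $e^{t-1}$ varies and $x_0$ is fixed, in a bounded set depending only on $x_0$, and then $C(x_0)$ is taken as the supremum of the finitely many relevant continuous bounding functions over that bounded set.

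The main obstacle is \eqref{m2}, the Hölder continuity of $\bar\psi^*_t(\cdot,x_0)$ in $e^{t-1}$, because $\psi^*_t = \bar h_t(e^{t-1}, X_{t-1}(e^{t-1}))$ is a composition in which the \emph{second} argument $X_{t-1}$ itself depends on $e^{t-1}$ through the earlier controls. I would handle this by a second induction (forward, $t=1,\dots,T$) establishing simultaneously that $e^{t}\mapsto X_t(e^{t},x_0)$ is Hölder-continuous with exponent $\chi/2^{T-t}$ (say) and constant a continuous function of $x_0$. The step uses: $\bar h_t$ is Hölder in its first argument with exponent $\chi/2^{T-t+1}$ by \eqref{ccsillag} applied at stage $t$, uniformly for the relevant second argument in a bounded set (the constant $C_h(x)$ being continuous, hence bounded on that set); $\bar h_t$ is Lipschitz in its second argument $x$ on bounded sets, since $\partial_x\bar h = -\partial_x\gamma/\partial_h\gamma$ is bounded there by the estimates \eqref{moche}, \eqref{bi} from the proof of Proposition \ref{recur}; $f_t$ is Hölder in $e^t$ by \eqref{lipf}; and $X_{t-1}$ is Hölder in $e^{t-1}$ by the inductive hypothesis. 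Composing a Hölder-in-first-variable, Lipschitz-in-second-variable map with Hölder inputs yields a Hölder map, the exponent being the minimum of the exponents involved, and since $\chi/2^{T-t+1}$ is the smallest of these, it is the binding one — giving exactly \eqref{m2}. One should also invoke Remark \ref{compact} / Proposition \ref{lipcomp} to note that only values of $e$ in the compact $K_t$ actually occur (since $|\varepsilon_j|\le C_\varepsilon$), so ``Hölder on bounded sets'' suffices and the constants can be taken uniform; and throughout, the $B$-independence of all the bounding functions $C_h^{(t)}, C_v^{(t)}, i_v^{(t)}, \dots$ asserted in Proposition \ref{recur} is what allows the final constant $C(x_0)$ to be independent of $\phi$ (equivalently of the reference point $B(\phi)$).
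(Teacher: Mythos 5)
Your proposal follows the same route as the paper: check Assumption \ref{onestep} for the terminal value function, apply Proposition \ref{recur} backward in time to build $V_t(B)$ and the one-step optimizers $\bar h_t$ with the H\"older exponent halving at each stage, assemble $\bar\psi^*_t(e^{t-1},x_0)=\bar h_t\bigl(e^{t-1},X_{t-1}(e^{t-1},x_0)\bigr)$, bound the running wealth by a forward induction, and verify optimality by iterating the tower property (the paper does this via Lemma \ref{condexp}; note that it works with $V_T(B)(e^T,x):=\mathbb{E}[U(x,B)]$, integrating out the independent reference point at the outset so that $V_T$ is a deterministic function of $(e^T,x)$ as Assumption \ref{onestep} requires — your $V_T(B)(e^T,x)=U(x,B)$ is still random through $B$, and you need exactly this independence/Fubini step to pass to the deterministic recursion). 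The derivative computations for $i_V,j_V,J_V,\ell_V,L_V$ and the uniqueness argument are as in the paper.

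The one place where you genuinely diverge is \eqref{m2}. You correctly identify that $\bar\psi^*_t$ is a composition in which the wealth argument of $\bar h_t$ itself varies with $e^{t-1}$, and you control this via the local bound on $\partial_x\bar h$ (from \eqref{moche} and \eqref{bi}) together with a forward induction on the H\"older modulus of $X_t$; the paper's displayed justification of \eqref{m2} keeps the second argument fixed, so your treatment is the more careful one. However, your exponent bookkeeping at the end is off: the H\"older exponent of $X_{t-1}$ in $e^{t-1}$ is inherited from the \emph{worst} of the earlier controls, i.e.\ already $\chi/2^{T-1}$ from stage $2$, and this is \emph{smaller} than $\chi/2^{T-t+1}$ for $t\ge 3$; so the composition argument yields a uniform exponent $\chi/2^{T-1}$ (or any common smaller one), not the stage-dependent exponent claimed in \eqref{m2}. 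This is harmless downstream — the compactness of $\Phi_M$ via Proposition \ref{arzela-ascoli} only needs one uniform H\"older modulus, and Lemma \ref{upi} allows passing to a common exponent for bounded functions — but as written your argument establishes \eqref{m2} only with a uniformly smaller exponent rather than literally as stated. A final minor point: \eqref{m1} and \eqref{m2} are asserted for all $e^{t-1}\in\mathbb{R}^{(t-1)\times m}$, and the bounds from Proposition \ref{recur} already hold globally (with $|f_j|\le C_f$ everywhere), so no appeal to Remark \ref{compact} is needed here.
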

\begin{proof}
We will apply the results of Section \ref{secune} recursively. Let $B$ be an arbitrary bounded random variable
that is independent of $\mathcal{F}_{T}^{\varepsilon}$. First, we define for all $x\in\mathbb{R}$
\begin{eqnarray*}
V_{T}(B)(e^{T},x) & := & \mathbb{E}\bigl[U(x,B)\bigr]=\mathbb{E}\Bigl[U(x)+\nu \bigl(U(x)-U(B)\bigr)\Bigr],\ e^{T}\in\mathbb{R}^{T\times m},\\
V_{t}(B)(e^{t},x) & := & \sup_{h\in\mathbb{R}}\mathbb{E}\bigl[V_{t+1}(B)\bigl(e^{t},\varepsilon_{t+1},x+h f_{t+1}(e^{t},\varepsilon_{t+1})\bigr)\bigr],\ e^{t}\in\mathbb{R}^{t\times m}.
\end{eqnarray*}
We check Assumption \ref{onestep} for $V(B)$ with $V(B)(e^{T},x):=U(x,B)$, and  then, for $V_{T}(B)$. 
We take $\varepsilon:=\varepsilon_{T}$ and $f(e^{T}):=f_{T}(e^{T})$, $e^{T}\in\mathbb{R}^{T\times m}$; \eqref{stronna} follows from
Assumption \ref{una}, and \eqref{efa} is true by Assumption \ref{price1}. 
Now, Assmptions \ref{utility}  and \ref{mienk} imply that $V(B)$ and $V_{T}(B)$ are bounded from above by $C_{U}+C_{\nu}$, and that $V(B)$ is twice continuously
differentiable in $x$. 

Note that neither $V(B)$ nor $V_{T}(B)$ depend on $e^{T}$. So,  \eqref{conii} is trivial with $C_{V}=0$ and 
$\theta=\chi$. As $U$ and $\nu$ are Borel,  $V(B)$  is 
 trivially 
$\mathcal{B}(\mathbb{R}^{T \times m})\otimes \mathcal{B}(\mathbb{R})$-measurable. 
Using Fubini theorem, $V_T(B)$  is also  
$\mathcal{B}(\mathbb{R}^{T \times m})\otimes \mathcal{B}(\mathbb{R})$-measurable. 

We now prove \eqref{ijlV} for $V(B)$. On the event $\{B\leq x\}$, $U(x,B)\geq U(x)+\nu(0)=U(x)$ while on the event $\{B>x\}$ we may estimate
$$U(x,B)= U(x)+k_{-}\bigl(U(x)-U(B)\bigr)\geq (1+k_{-})U(x) -k_{-}C_{U}.$$ 
Thus, we may set 
\begin{eqnarray}
i_{V}(x):=\min\{U(x),(1+k_{-})U(x) -k_{-}C_{U}\}=(1+k_{-})U(x) -k_{-}C_{U},
\label{borneinfU}
\end{eqnarray}
as $U\leq C_{U}$. 
We have that $U'(x,B)=U'(x)+\nu'\bigl(U(x)-U(B)\bigr)U'(x)$. So, $0\leq \nu' \leq k_{-}$ and $U' \geq 0$ imply 
\begin{eqnarray}
U'(x)\leq U'(x,B)\leq U'(x)+k_{-}U'(x)
\label{borneinfUprime}
\end{eqnarray}
and we may set $j_{V}(x):=U'(x)$ and $J_{V}(x):=(1+k_{-})U'(x)$.
We have that 
$$U''(x,B)=U''(x)+\nu''\bigl(U(x)-U(B)\bigr)\bigl(U'(x)\bigr)^2+\nu'\bigl(U(x)-U(B)\bigr)U''(x).$$ 
Furthermore, since $U''\leq 0,$ $-C_{\nu}\leq \nu''\leq0$  and $0\leq \nu'\leq k_{-}$,
\begin{eqnarray*}
U''(x)-C_{\nu}\bigl(U'(x)\bigr)^2  + k_{-}U''(x) 
\leq U''(x,B) \leq U''(x)
\end{eqnarray*}
so we may set $\ell_{V}(x):=-U''(x)$ and $L_{V}(x):=-(1+k_{-})U''(x)+C_{\nu}(U'(x))^2$. Assumption \ref{utility} implies that $i_{V},j_{V},J_{V},\ell_{V},L_{V}$ are continuous and  that  
$i_{V}:\mathbb{R}\to\mathbb{R}$ and $j_{V},J_{V},\ell_{V},L_{V}:\mathbb{R}\to (0,\infty)$. Note that, as these functions do not depend on $B$, the same bounds work for 
$V_{T}(B)$, which prove \eqref{ijlV} for $V_{T}(B)$. 

So,  Assumption \ref{onestep} holds for $V(B)$, and we can apply Lemma  \ref{well}. As 
$$\Gamma(B)(e^{T-1},x,0)=\mathbb{E}\bigl[\VB \bigl(e^{T-1},\varepsilon_T,x\bigr)\bigr]=\mathbb{E}\bigl[U(x,B)\bigr]= V_{T}(B)(e^{T},x),$$ 
$x\mapsto V_{T}(B)(e^{T},x)$ is twice continuously differentiable,  and $V_{T}(B)$ also satisfies Assumption \ref{onestep}.

Setting $B=B(\phi)$ now, $V_{T}(B(\phi))$ also satisfies Assumption \ref{onestep} for an arbitrary $\phi\in\Phi$. For simplicity, we 
don\rq{}t write the dependence of $B$ on $\phi$ until \eqref{optim}. Proposition \ref{recur} for  $V_{T}(B)$ implies 
that there exist some functions 
$C_{T}:\mathbb{R}\to (0,\infty)$ and 
$\hb_T:\mathbb{R}^{(T-1)\times m}\times\mathbb{R}\to\mathbb{R}$
such that $C_{T}$ is continuous,  $\hb_T$ is $\mathcal{B}(\mathbb{R}^{(T-1)\times m})\otimes \mathcal{B}(\mathbb{R})$-measurable, and, for all $e^{T-1},x$, 
$
|\hb_{T}(e^{T-1},x)|\leq C_{T}(x),	
$
and $\hb_{T}(e^{T-1},x)$ is the unique number that satisfies
$$
V_{T-1}(B)(e^{T-1},x)=\mathbb{E}\bigl[V_{T}(B)\bigl(e^{T-1},\varepsilon_T,x+\hb_{T}(e^{T-1},x)f_{T}(e^{T-1},\varepsilon_{T})\bigr)\bigr].
$$
Furthermore, for all $e^{T-1},\bar{e}^{T-1}\in\mathbb{R}^{(t-1)\times m}$, (recall that $\theta=\chi$ in \eqref{conii} for $V_{T}(B)$),
$$
|\hb_{T}(e^{T-1},x)-\hb_{T}(\bar{e}^{T-1},x)|\leq C_{T}(x)|e^{T-1}-\bar{e}^{T-1}|^{\chi/2}.
$$
Moreover, $V_{T-1}(B)$ satisfies Assumption \ref{onestep} with $\theta=\chi/2$ in \eqref{conii}. So, we can repeat the applications of Proposition \ref{recur},  
construct $C_{t}:\mathbb{R}\to (0,\infty)$ and 
$\hb_t:\mathbb{R}^{(t-1)\times m}\times\mathbb{R}\to\mathbb{R}$, and obtain the same properties for them (with $\theta=\chi/2^{T-t+1}$ in \eqref{conii}), and $V_{t}(B)$ for $1 \leq t \leq T$. 

Let $\bar{\psi}^{*}_{1}=\bar{\psi}^{*}_{1}(e_0,x_{0}):=\bar{h}(B)_{1}(x_{0})$ and define recursively
$$ 
\bar{\psi}^{*}_{t+1}(e^t,x_{0}):=\bar{h}(B)_{t+1}\Bigr(e^t,x_{0}+\sum_{j=1}^{t}\bar{\psi}^{*}_{j}(e^{j-1},x_{0})f_{j}(e^j)\Bigr),
$$
for $1\leq t\leq T-1$ and $e^t \in \mathbb{R}^{t\times m}$.

We prove by induction that  $|\bar{\psi}^{*}_{t}(e^{t-1},x_{0})|  \leq \bar{C}_{t}(x_{0})$ for all $e^{t-1} \in \mathbb{R}^{(t-1)\times m}$, for some continuous function $\bar{C}_{t}$. 
For $t=1$, just choose $\bar{C}_{1}={C}_{1}.$ Assume that the induction holds until $t$ with $1\leq t\leq T-2$. Then, 
\begin{eqnarray*}
|\bar{\psi}^{*}_{t+1}(e^t,x_{0})| & \leq & C_{t+1}\Bigr(x_{0}+\sum_{j=1}^{t}\bar{\psi}^{*}_{j}(e^{j-1},x_{0})f_{j}(e^j)\Bigr)\\
 & \leq & \sup_{y \in K_t(x_{0}) } C_{t+1}(y)=:\bar{C}_{t+1}(x_0)
\end{eqnarray*}
where $K_t(x)=[x-C_{f}\sum_{j=1}^{t} \bar{C}_{j}(x),x+ C_{f}\sum_{j=1}^{t} \bar{C}_{j}(x)]$. Lemma \ref{cont} shows that $\bar{C}_{t+1}$ is continous. 
Now, \eqref{m1} holds choosing $C(x_{0})=\max_{1\leq t\leq T}\bar{C}_{t}(x_0).$ It is clear that  $C$ is continuous. 
As the ${C}_{t}$ do not depend on $B$ (and thus on $\phi$), $C$ does not depend on $\phi$. 
For \eqref{m2}, just observe that 
\begin{eqnarray*}
 & &  \hspace*{-2cm} |
 \bar{\psi}^{*}_{t}(e^{t-1},x_{0})-\bar{\psi}^{*}_{t}(\bar{e}^{t-1},x_{0})|
  \\
 & \leq  & C_{t}\Bigl(x_{0}+\sum_{j=1}^{t-1}\bar{\psi}^{*}_{j}(e^{j-1},x_{0})f_{j}(e^j)\Bigr)|e^{t-1}-\bar{e}^{t-1}|^{\chi/2^{T-t+1}}\\
& \leq  & C(x_{0})|e^{t-1}-\bar{e}^{t-1}|^{\chi/2^{T-t+1}}.
\end{eqnarray*}

We finally establish that the strategy $\psi^{*}_{1}:=\bar{\psi}^{*}_{1}$ and 
$\psi^{*}_{t+1}:=\bar{\psi}^{*}_{t+1}(\varepsilon^t)$, $1\leq t\leq T-1$ 
is optimal, that is, $\psi^{*} \in \Phi$, and for all $\psi\in\Phi$,
\begin{equation}\label{optim}
\mathbb{E}\bigl[U\bigl(W_{T}(x_{0},\psi^{*}(\phi)),B(\phi)\bigr)\bigr]\geq \mathbb{E}\bigl[U\bigl(W_{T}(x_{0},\psi),B(\phi)\bigr)\bigr].
\end{equation}
As the $\bar{h}(B)_{t}$ are jointly Borel measurable and the $f_j$ are Borel measurable, we can show by induction that the $\bar{\psi}^{*}_{t}$ are  Borel functions, and thus $\psi^{*} \in \Phi$. \\
Fix $\psi\in\Phi$. We write $\psi_t=\bar{\psi}_t(\varepsilon^{t-1})$, where $\bar{\psi}_t$ is a  Borel function given by Doob\rq{}s theorem  for $1\leq t \leq T$. 
Notice that, by independence of $\varepsilon^{T}$ and $\hat{\varepsilon}^T$, and thus  of $\varepsilon^{T}$ and $B(\phi)=W_{T}(x_{0},\phi)(\hat{\varepsilon}^T)$,  we obtain that 

\begin{eqnarray*}
& & \hspace*{-1cm} \mathbb{E}\bigl[U\bigl(W_{T}(x_{0},\psi),B(\phi)\bigr)\bigr] =  \mathbb{E}\bigl[U\bigl(W_{T}(x_{0},\psi)(\varepsilon^{T}),W_{T}(x_{0},\phi)(\hat{\varepsilon}^T)\bigr)\bigr]\\
& = & 
\mathbb{E}\bigl[\mathbb{E}[U\bigl(W_{T}(x_{0},\psi)(\varepsilon^{T}),W_{T}(x_{0},\phi)(\hat{\varepsilon}^T)\bigr)\vert\mathcal{F}_{T}^{\varepsilon}]\bigr]\\
& = & 
\mathbb{E}\bigl[\mathbb{E}[U\bigl(W_{T}(x_{0},\psi)(e^{T}),W_{T}(x_{0},\phi)(\hat{\varepsilon}^T)\bigr)]\big\vert_{e^{T}=\varepsilon^{T}}\bigr]\\
&=& \mathbb{E}\bigl[V_{T}\bigl(B(\phi)\bigr)
\bigl(\varepsilon^{T},x_{0}+\sum_{j=1}^{T}\bar{\psi}_{j}(\varepsilon^{j-1})f_{j}(\varepsilon^{j})\bigr)\bigr]\\
&=& \mathbb{E}\Bigl[\mathbb{E}\bigl[V_{T}\bigl(B(\phi)\bigr)
\bigl(\varepsilon^{T},x_{0}+\sum_{j=1}^{T-1}\bar{\psi}_{j}(\varepsilon^{j-1})f_{j}(\varepsilon^{j})+
\bar{\psi}_{T}(\varepsilon^{T-1})
f_{T}(\varepsilon^{T})\bigr)\vert\mathcal{F}^{\varepsilon}_{T-1}\bigr]\Bigr]\\
&=& \mathbb{E}\Bigl[\mathbb{E}\bigl[V_{T}\bigl(B(\phi)\bigr)\bigl(e^{T-1},\varepsilon_T,x_{0}+\sum_{j=1}^{T-1}\bar{\psi}_{j}(e^{j-1})f_{j}(e^{j}) \\
 & & +
\bar{\psi}_{T}(e^{T-1})f_{T}(e^{T-1},\varepsilon_{T})\bigr)\bigr]\big\vert_{e^{T-1}=\varepsilon^{T-1}}\Bigr]\\
& \leq & \mathbb{E}\bigl[V_{T-1}\bigl(B(\phi)\bigr)
\bigl(\varepsilon^{T-1},x_{0}+\sum_{j=1}^{T-1}\bar{\psi}_{j}(\varepsilon^{j-1})f_{j}(\varepsilon^{j})\bigr)\bigr] \\
&=& \mathbb{E}\Bigl[\mathbb{E}\bigl[V_{T-1}\bigl(B(\phi)\bigr)
\bigl(\varepsilon^{T-1},x_{0}+\sum_{j=1}^{T-1}\bar{\psi}_{j}(\varepsilon^{j-1})f_{j}(\varepsilon^{j})\bigr)\vert\mathcal{F}^{\varepsilon}_{T-2}\bigr]\Bigr]\\
&\leq& \mathbb{E}\bigl[V_{T-2}\bigl(B(\phi)\bigr)
\bigl(\varepsilon^{T-2},x_{0}+\sum_{j=1}^{T-2}\bar{\psi}_{j}(\varepsilon^{j-1})f_{j}(\varepsilon^{j})\bigr)\bigr]=\ldots \leq V_{0}\bigl(B(\phi)\bigr)(x_{0}),	
\end{eqnarray*}
holds by repeated applications of Lemma \ref{condexp}: first we take $X_{1}=B(\phi)=W_{T}(x_{0},\phi)(\hat{\varepsilon}^T)$ and $X_{2}=\varepsilon^{T}$;
then $X_{1}=\varepsilon_{T}$ and $X_{2}=\varepsilon^{T-1}$, and so on. 
If we insert $\bar{\psi}=\bar{\psi}^{*}=\bar{\psi}^{*}(\phi)(\cdot,x_{0})$ in the above estimate then it holds with \emph{equalities} everywhere, i.e. 
$$ \mathbb{E}\bigl[U\bigl(W_{T}(x_{0},{\psi}^{*}(\phi)),B(\phi)\bigr)\bigr]=V_{0}\bigl(B(\phi)\bigr)(x_{0}),$$
and \eqref{optim} holds. Then,  
 taking the supremum over $\psi$ in \eqref{optim}
\begin{eqnarray*}
u(x_{0},\phi)=\sup_{\psi\in\Phi}\mathbb{E}\bigl[U\bigl(W_{T}(x_{0},\psi),B(\phi)\bigr)\bigr] \leq \mathbb{E}\bigl[U\bigl(W_{T}(x_{0},{\psi}^{*}(\phi)),B(\phi)\bigr)\bigr] \leq u(x_{0},\phi),
\end{eqnarray*}
as ${\psi}^{*}(\phi) \in \Phi.$ 
This implies that $$u(x_{0},\phi)=\mathbb{E}\bigl[U\bigl(W_{T}(x_{0},{\psi}^{*}(\phi)),B(\phi)\bigr)\bigr]= V_{0}\big(B(\phi)\big)(x_{0}).$$
The unicity of ${\psi}^{*}$ follows from the unicity of the $\hb_t$ for $1\leq t\leq T$. 
\end{proof}

\subsection{Fixed point theorem, and remaining proofs}

Recall that  ${\varepsilon}^{T-1}:=(\varepsilon_{1},\ldots,\varepsilon_{T-1})$. We now introduce $\mathcal{S}:=\mathrm{supp}(\varepsilon^{T-1})$, where $\mathrm{supp}(\cdot)$ refers to the support (see for example, p 441 of   \cite{AB}).  Theorems 12.7  and 12.14 of \cite{AB} show  that
$\mathbb{P}[\varepsilon^{T-1} \in . ]$ admits a unique support such that    $\mathbb{P}[\varepsilon^{T-1} \in \mathcal{S}]=1$, and 
\begin{align}
\label{defd1}
\mathrm{supp}(\varepsilon^{T-1}):=\bigcap \left\{ A \subset \mathbb{R}^{(T-1)\times m},\; \mbox{closed}, \; \mathbb{P}[\varepsilon^{T-1} \in A]=1\right\}.
\end{align}
Assumptions \ref{filtra}, \ref{price1}, \ref{una}, \ref{utility}, and \ref{mienk} will be in force in the rest of this section.
By independence of $(\varepsilon_{1},\ldots,\varepsilon_{T-1})$ under $\mathbb{P}$, 
$\mathcal{S}=\mathrm{supp}(\varepsilon_{1})\times\cdots\times\mathrm{supp}(\varepsilon_{T-1})$.

Let $C(\mathcal{S})$ denote the Banach space of $\mathbb{R}^{T}$-valued continuous functions on $\mathcal{S}$, equipped with the norm
$$
||\varphi||_{\infty}:=\sup_{e\in\mathcal{S}}|\varphi(e)|,\ \varphi\in C(\mathcal{S}).
$$

At this point, we explain an important identification. If $\phi\in\Phi$ then, by Doob's theorem, there are
Borel measurable functions $\bar{\varphi}_{t}:\mathbb{R}^{(t-1)\times m}\to\mathbb{R}$, $1 \leq t \leq T$ (we mean that $\bar{\varphi}_{1}$ is
a constant) such that 
$\phi_{t}=\bar{\varphi}_t(\varepsilon_{1},\ldots,\varepsilon_{t-1})$. Now let us define for all $1\leq t\leq T$ the functions $\tilde{\phi}_{t}:\mathcal{S}\to\mathbb{R}$
by setting 
$$\tilde{\phi}_{t}(e_{1},\ldots,e_{T-1}):=\bar{\varphi}_{t}(e_{1},\ldots,e_{t-1}).$$
In this way, we obtain a $\mathcal{B}(\mathcal{S})$-measurable function $\tilde{\phi}:=(\tilde{\phi}_1,\ldots, \tilde{\phi}_T)$ with 
$\tilde{\phi}:\mathcal{S}\to\mathbb{R}^{T}$ is such that the $t$th coordinate function $\tilde{\phi}_t$
depends uniquely on its first $t-1$ coordinates.
Conversely, if $\tilde{\phi}:\mathcal{S}\to\mathbb{R}^{T}$ is such a function, then definining 
$$\phi_{t}:=\tilde{\phi}_{t}(\varepsilon_{1},
\ldots,\varepsilon_{T-1}), \;1\leq t\leq T,$$ 
we obtain an element $\phi\in\Phi$. Indeed, each $\tilde{\phi}_{t}$ is $\mathcal{B}(\mathcal{S})$-measurable 
and as $\phi_{t}=\tilde{\phi}_{t}(\varepsilon_{1},
\ldots,\varepsilon_{t-1},0\ldots,0)$  $\phi_{t}$ is $\mathcal{F}^{\varepsilon}_{t-1}$-measurable. 
>From this moment on, we identify each $\phi\in\Phi${}
with a corresponding Borel measurable function $\tilde{\phi}:\mathcal{S}\to\mathbb{R}^{T}$. When we write $\phi\in C(\mathcal{S})$ we mean that the $\tilde{\phi}$
corresponding to $\phi$ can be chosen continuous.
Note also that
\begin{eqnarray*}
W_{t}(x_{0},\phi) &= & x_{0}+\sum_{j=1}^{t}\phi_{j}\Delta S_{j}=x_{0}+
\sum_{j=1}^{t}\tilde{\phi}_{j}(\varepsilon_{1},\ldots,\varepsilon_{T-1})f_{j}(\varepsilon_{1},\ldots,\varepsilon_{j})\\
B({\phi})&= &x_{0}+\sum_{t=1}^{T}\tilde{\phi}_{t}(\hat{\varepsilon}_{1},\ldots,\hat{\varepsilon}_{T-1})f_{t}(\hat{\varepsilon}_{1},\ldots,\hat{\varepsilon}_{t}),
\end{eqnarray*}
we stress one more time that here $\tilde{\phi}_{j}$ depends only on its first $j-1$ coordinates.
Finally, for each $M>0$, $\phi\in \Phi_{M}$ if and only if $\phi\in \Phi$, and for all $1 \leq t \leq T$, setting $\phi_{t}=\tilde{\varphi}_t(\varepsilon_{1},\ldots,\varepsilon_{T-1})$ as before, 
for all $e^{T-1},\bar{e}^{T-1}\in\mathbb{R}^{(T-1)\times m}$,
\begin{eqnarray}
\label{defcm}
|\tilde{\varphi}_{t}(e^{T-1})|\leq M \mbox{  and  } |\tilde{\varphi}_{t}(e^{T-1})-\tilde{\varphi}_{t}(\bar{e}^{T-1})|  \leq   M|e^{T-1}-\bar{e}^{T-1}|^{\chi/2^{T-t+1}}.
\end{eqnarray}
It is clear that $\Phi_{M} \subset C(\mathcal{S})$.  Moreover,  Proposition \ref{arzela-ascoli} below shows that $\Phi_{M}$ is relatively compact in 
$C(\mathcal{S})$. Indeed, the left-hand side of \eqref{defcm} implies that $|\tilde{\varphi}|\leq M \sqrt{T}$, which proves  
the first condition of  Proposition \ref{arzela-ascoli}. For the second one, let $\phi \in  \Phi_{M}$, $e^{T-1},\bar{e}^{T-1}\in\mathbb{R}^{(T-1)\times m}$
\begin{eqnarray*}
  |\tilde{\varphi}(e^{T-1})-\tilde{\varphi}(\bar{e}^{T-1})| &=&  \Big(\sum_{t=1}^T|\tilde{\varphi}_{t}(e^{T-1})-\tilde{\varphi}_{t}(\bar{e}^{T-1})|^2\Big)^{1/2} \\
  & \leq &   M\Big(\sum_{t=1}^T|e^{T-1}-\bar{e}^{T-1}|^{\chi/2^{T-t}}\Big)^{1/2} \\
  & \leq & \sqrt{T M^2 +2T M^2}|e^{T-1}-\bar{e}^{T-1}|^{\chi/2^{T-2}},
\end{eqnarray*}
reasonning as in Lemma \ref{upi}, which shows \eqref{equi}. 
Moreover, $\Phi_{M}$ is trivially closed, and thus compact. 

One key result for our arguments is the following.

\begin{proposition}\label{contin}
Let Assumptions \ref{filtra}, \ref{price1}, \ref{una}, \ref{utility} and \ref{mienk} hold. 
Let $x_{0} \in \mathbb{R}$. For all $\phi\in\Phi$, let  $\psi^{*}:=\psi^{*}(\phi)(\cdot,x_{0})\in\Phi$  
be the optimizer  of \eqref{petitu} given by Proposition \ref{dyna}. 
Then, the mapping $\phi \mapsto \psi^{*}(\phi)$ is continuous (for the norm of $C(\mathcal{S})$) from 
$\Phi_{C(x_{0})}$ to $\Phi_{C(x_{0})}$. 
\end{proposition}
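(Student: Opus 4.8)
The plan is to show that the whole backward construction of Proposition~\ref{dyna} depends continuously on the reference point $B(\phi)$, and then to combine this with the (elementary) continuity of the map $\phi\mapsto B(\phi)$. Throughout I will use that $\mathcal{S}=\mathrm{supp}(\varepsilon^{T-1})$ is contained in the compact box $[-C_\varepsilon,C_\varepsilon]^{(T-1)\times m}$, that $|f_t|\le C_f$, and that strategies in $\Phi_{C(x_0)}$ are bounded by $C(x_0)$, so that every wealth value that occurs below stays in the fixed compact interval $I:=[x_0-TC(x_0)C_f,\,x_0+TC(x_0)C_f]$; ``uniformly on compacts'' will always refer to such fixed compacts. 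Since $\Phi_{C(x_0)}$ is a metric space it suffices to prove sequential continuity. So I would fix $\phi^{(n)}\to\phi$ in $C(\mathcal{S})$ with $\phi^{(n)},\phi\in\Phi_{C(x_0)}$, write $B_n:=B(\phi^{(n)})$, $B:=B(\phi)$, and first observe that, as $\hat\varepsilon^{T-1}$ has the same law as $\varepsilon^{T-1}$ it lies in $\mathcal{S}$ a.s., so by \eqref{bebe} and boundedness of the $f_t$, $\sup_\Omega|B_n-B|\le C_f\sum_{t=1}^T\|\tilde\phi^{(n)}_t-\tilde\phi_t\|_\infty\to0$, and the $B_n$ are uniformly bounded.

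The heart of the argument is a backward induction on $t=T,T-1,\dots,1$ showing that $V_t(B_n)'\to V_t(B)'$ and $\bar h(B_n)_t\to\bar h(B)_t$, both uniformly on $\mathcal{S}\times I$. For $t=T$, $V_T(B_n)'(e^T,x)=U'(x)\bigl(1+\mathbb{E}[\nu'(U(x)-U(B_n))]\bigr)$, and since $U$ and $\nu'$ are uniformly continuous on compacts and $\sup_\Omega|B_n-B|\to0$, this converges to $V_T(B)'$ uniformly in $x\in I$. Hence $\gamma(B_n)(o,x,h)=\mathbb{E}[V_T(B_n)'(o,\varepsilon_T,x+hf_T(o,\varepsilon_T))f_T(o,\varepsilon_T)]\to\gamma(B)(o,x,h)$ uniformly on the relevant compacts. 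Arguing as around \eqref{nl} in the proof of Proposition~\ref{recur}, \eqref{bi} yields $|\gamma(B_n)(o,x,h)|\ge c(x)\,|h-\bar h(B_n)_T(o,x)|$ for a continuous $c>0$ independent of $B$ and of $n$; putting $h=\bar h(B)_T(o,x)$ and using $\gamma(B_n)(o,x,\bar h(B_n)_T(o,x))=0=\gamma(B)(o,x,\bar h(B)_T(o,x))$ gives $\bar h(B_n)_T\to\bar h(B)_T$ uniformly. For the inductive step, \eqref{beau} gives $V_{t-1}(B_n)'(o,x)=\mathbb{E}[V_t(B_n)'(o,\varepsilon_t,x+\bar h(B_n)_t(o,x)f_t(o,\varepsilon_t))]$; bounding the difference with $V_{t-1}(B)'(o,x)$ by $\sup_{\mathcal{S}\times I}|V_t(B_n)'-V_t(B)'|$ plus the modulus of continuity of $V_t(B)'$ at $C_f\sup_{\mathcal{S}\times I}|\bar h(B_n)_t-\bar h(B)_t|$, both of which vanish, yields $V_{t-1}(B_n)'\to V_{t-1}(B)'$ uniformly, and then $\bar h(B_n)_{t-1}\to\bar h(B)_{t-1}$ follows exactly as in the base case (its first-order condition involves $V_{t-1}(B)'$, and the lower bound \eqref{bi} is again $B$-independent). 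Alternatively, the one-step stability of optimizers with respect to preferences from \cite{cr2007b} could be invoked at each level in place of this direct reasoning.

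Finally, recalling $\bar\psi^*(\phi)_1=\bar h(B(\phi))_1(x_0)$ and $\bar\psi^*(\phi)_{t+1}(e^t,x_0)=\bar h(B(\phi))_{t+1}\bigl(e^t,x_0+\sum_{j=1}^t\bar\psi^*(\phi)_j(e^{j-1},x_0)f_j(e^j)\bigr)$, I would argue by induction on $t$ that $\bar\psi^*(\phi^{(n)})_j\to\bar\psi^*(\phi)_j$ uniformly on $\mathcal{S}$ for $j\le t$. Granting this, $w^{(n)}(e^t):=x_0+\sum_{j\le t}\bar\psi^*(\phi^{(n)})_j(e^{j-1},x_0)f_j(e^j)$ converges to $w(e^t)$ uniformly on $\mathcal{S}$, with all values in $I$, and
\[
|\bar h(B_n)_{t+1}(e^t,w^{(n)}(e^t))-\bar h(B)_{t+1}(e^t,w(e^t))|\le\sup_{\mathcal{S}\times I}|\bar h(B_n)_{t+1}-\bar h(B)_{t+1}|+|\bar h(B)_{t+1}(e^t,w^{(n)}(e^t))-\bar h(B)_{t+1}(e^t,w(e^t))|,
\]
where the first term tends to $0$ by the previous step and the second by uniform continuity of $\bar h(B)_{t+1}$ on the compact $\mathcal{S}\times I$ together with $\|w^{(n)}-w\|_\infty\to0$. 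This closes the induction, gives $\psi^*(\phi^{(n)})\to\psi^*(\phi)$ in $C(\mathcal{S})$, and since Proposition~\ref{dyna} guarantees $\psi^*(\phi)\in\Phi_{C(x_0)}$ for every $\phi$ (the constants in \eqref{m1}--\eqref{m2} not depending on $\phi$), the map is indeed well defined from $\Phi_{C(x_0)}$ into itself.

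I expect the main obstacle to be the backward-induction step: one must propagate continuity of $B\mapsto V_t(B)$ in a topology (uniform convergence of the $x$-derivative on compacts, uniformly in the ``past'') strong enough to control the implicitly defined maximizers $\bar h(B)_t$, and this works precisely because the strict-concavity lower bound \eqref{bi} and the derivative bounds \eqref{ijlv} furnished by Proposition~\ref{recur} are independent of $B$.
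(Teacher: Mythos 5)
Your proof is essentially correct, but it follows a genuinely different route from the paper. The paper does \emph{not} redo the stability analysis of the dynamic programming: it packages the dependence on $B(\phi)$ into a sequence of random utilities $\mathfrak{U}_{n}(\omega,x):=U\bigl(x,B(\phi^{n})(\omega)\bigr)$, verifies the hypotheses of Theorem \ref{util} (the main result of \cite{cr2007b}), and thereby obtains only \emph{almost sure} convergence $\psi^{*}(\phi^{n})_{t}\to\psi^{*}(\phi)_{t}$; it then upgrades this to convergence in $C(\mathcal{S})$ by combining Lemma \ref{dense} (a.s.\ convergence gives pointwise convergence on a dense subset of the support) with the compactness of $\Phi_{C(x_{0})}$ (Proposition \ref{arzela-ascoli}) in a subsequence/contradiction argument. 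You instead give a self-contained, quantitative backward induction: uniform convergence of $V_{t}(B_{n})'$ propagates to uniform convergence of the one-step optimizers $\bar{h}(B_{n})_{t}$ via the first-order condition and the $B$-independent coercivity bound \eqref{bi}, and then back to $V_{t-1}(B_{n})'$ via \eqref{beau}; a forward induction then composes these into the strategies. Your route avoids the external Theorem \ref{util} and the a.s.-to-uniform upgrade entirely, and would even yield an explicit modulus of continuity for $\phi\mapsto\psi^{*}(\phi)$; its cost is that you must re-derive stability of the whole recursion, and you rely crucially on the fact (established in Proposition \ref{recur}) that the bounds $K$, $C_{h}$, $\ell_{v}$, $L_{v}$ and \eqref{bi} are independent of $B$ — which you correctly identify as the crux. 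Two small points to tighten: the compact intervals of wealths must be enlarged backward in $t$ (at stage $t$ you need convergence of $V_{t}(B_{n})'$ on a compact accommodating $x+hf$ with $|h|\le K(x)$ for $x$ in the stage-$(t-1)$ compact, and $K(x)$ need not be dominated by $C(x_{0})$ off the optimal trajectory), so a single interval $I$ fixed in advance is not quite enough; and the joint uniform continuity of $\bar{h}(B)_{t+1}$ on $\mathcal{S}\times I$ used in your forward induction should be justified by combining \eqref{ccsillag} with the bound on $\partial_{x}\bar{h}$ from the proof of Proposition \ref{recur}. Both are routine.
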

\begin{proof} 
Recall the notation of Proposition \ref{dyna}. However, for ease of exposition, we don\rq{}t indicate the dependence of $\psi^{*}$ on $x_{0}$. 
We make the identification above and associate to  $\psi^{*}(\phi)$, the function $\tilde{\psi}^{*}(\phi):\mathcal{S}\to\mathbb{R}^{T}$, i.e. $\tilde{\psi}^{*}(\phi)_t(e^{T-1})=\bar{\psi}^{*}(\phi)_{t}(e^{t-1}).$  
Using \eqref{m1} and \eqref{m2},  $\psi^{*}(\phi) \in \Phi_{C(x_{0})}.$ So, $\phi \mapsto \tilde{\psi}^{*}(\phi)$ maps $\Phi_{C(x_{0})}$ (in fact, the whole of $\Phi$) 
into $\Phi_{C(x_{0})}$. 

Now let $(\tilde{\phi}^{n})_n\subset\Phi_{C(x_{0})}$ that converge to $\tilde{\phi}$ in the topology of the Banach space $C(\mathcal{S})$, i.e. $\Vert \tilde{\phi}^{n}-\tilde{\phi}\Vert_{\infty}\to 0, \, n\to \infty$. 
We call ${\phi}^{n}$ and ${\phi}$ the associated elements of $\Phi$.  
We want to prove that $\Vert \tilde{\psi}^{*}({\phi}^{n})-\tilde{\psi}^{*}(\phi)\Vert_{\infty}\to 0, \, n\to \infty$. 

First, remark that for all $\omega \in\Omega$ and $n\in \mathbb{N}$ 
\begin{eqnarray}
|B({\phi}^{n})(\omega)-B({\phi})(\omega)|\leq \sum_{j=1}^{T}\Vert \tilde{\phi}^{n}-\tilde{\phi}\Vert_{\infty}|f_{j}\big(\hat{\varepsilon}^{j}(\omega)\big)|\leq TC_{f}
\Vert \tilde{\phi}^{n}-\tilde{\phi}\Vert_{\infty}\label{wpo}	
\end{eqnarray}
so $B({\phi}^{n})(\omega)\to B(\phi)(\omega)$ for all $\omega\in\Omega$. 
Define the random utility functions for all $n \in \mathbb{N}$, $x \in  \mathbb{R}$, $\omega \in \Omega$
\begin{eqnarray*} 
\mathfrak{U}_{n}(\omega,x) & := & U\big(x,B({\phi}^{n})(\omega)\big)=U(x)+\nu\Big(U(x)-U\big (B({\phi}^{n})(\omega) \big)\Big)  \\
 \mathfrak{U}_{\infty}(\omega,x) & := & U\big(x,B({\phi})(\omega)\big)=U(x)+\nu\Big(U(x)-U\big (B({\phi})(\omega) \big)\Big). 
 \end{eqnarray*}  
 Let $\bar{\mathbb{N}}:=\mathbb{N} \cup \{\infty\}$. 
We will now verify the conditions of Theorem \ref{util} below. Assumptions \ref{utility} and \ref{mienk} imply that  for all $n\in \bar{\mathbb{N}}$, each $\mathfrak{U}_{n}$ is  strictly concave and increasing, continuously differentiable in $x$.
Using \eqref{nuleibnitz}, for all $x_1,x_2 \in  \mathbb{R},$ all random variables $B_1,B_2$
\begin{eqnarray} 
|U(x_1,B_1)-U(x_2,B_2)| & \leq & (1+k_{-})|U(x_1)-U(x_2)| + k_{-}|B_1 -B_2|. 
\end{eqnarray} 
This implies that 
\begin{eqnarray} 
\nonumber
\sup_{n \in \bar{\mathbb{N}},\omega\in\Omega}\bigl[\mathfrak{U}_{n}(\omega,\infty)-\mathfrak{U}_{n}(\omega,x)\bigr] & = & 
\sup_{n \in \bar{\mathbb{N}},\omega\in\Omega} |U\big(\infty,B({\phi}^{n})(\omega)\big)-U\big(x,B({\phi}^{n})(\omega)\big)|\\
 & \leq & (1+k_{-})|U(\infty)-U(x)| \to 0,\label{santafe}
\end{eqnarray} 
as $x\to\infty$, implying \eqref{egyutt} below, and also
\begin{eqnarray*} 
\bigl|\mathfrak{U}_{n}(\omega,x)-\mathfrak{U}_{\infty}(\omega,x)\bigr|  &= &  \bigl|U\big(x,B({\phi}^{n})(\omega)\big)-U\big(x,B({\phi})(\omega)\big)\bigr| \\
& \leq &  k_{-}|B({\phi}^{n})(\omega) -B({\phi})(\omega)|,
\end{eqnarray*} 
and \eqref{wpo}	 implies that $\mathfrak{U}_{n}(\omega,x) \to  \mathfrak{U}_{\infty}(\omega,x),$ $n\to \infty$ for all $\omega \in \Omega$ 
and for all $ x \in  \mathbb{R}.$  So, \eqref{conv} holds true. 
Moreover,  
\begin{eqnarray*}
\mathfrak{U}_{n}'(\omega,0) =U'\big(0,B({\phi}^{n})(\omega)\big),  \quad 
\mathfrak{U}_{\infty}'(\omega,0) =U'\big(0,B({\phi})(\omega)\big)
\end{eqnarray*}
so \eqref{borneinfUprime} implies that  
$\mathrm{ess.}\inf_{n\in\bar{\mathbb{N}}}\mathfrak{U}_{n}'(\omega,0)\geq U'(0)>0$ and \eqref{zero} holds true. 
Morover, \eqref{borneinfU} implies that 
\begin{eqnarray*}
(1+k_{-})U(x) -k_{-}C_{U} \leq U(x,B) \leq C_U+ C_{\nu}. 
\end{eqnarray*}
So,  for all $x\in\mathbb{R}$ and $\omega \in \Omega$ 
  \begin{eqnarray*}
  \mathrm{ess.}\sup_{n\in\bar{\mathbb{N}}}|\mathfrak{U}_{n}(\omega,x)|& \leq  & (1+k_{-})C_U+ C_{\nu} + (1+k_{-})|U(x)|<\infty,
  \end{eqnarray*}
As, for all $n\in\bar{\mathbb{N}}$ and $\omega \in \Omega$ , $\mathfrak{U}_{n}(\omega,\infty) \geq \mathfrak{U}_{n}(\omega,0)$, 
\begin{eqnarray*}
C_U+ C_{\nu} \geq \mathrm{ess.}\inf_{n\in\bar{\mathbb{N}}}\mathfrak{U}_{n}(\omega,\infty) \geq   (1+k_{-})U(0) -k_{-}C_{U} >-\infty,
  \end{eqnarray*}
and \eqref{fini} holds true. 

Recall that 
${\psi}^{*}(\phi^{n})$ (resp. ${\psi}^{*}(\phi)$) is the optimizer of \eqref{petitu} given by Proposition \ref{dyna} for $\phi^{n}$ 
(resp. $\phi$). It is thus the (unique) optimizer for $\mathfrak{U}_{n}$ (resp. $\mathfrak{U}_{\infty}$) in \eqref{optil}. 
Theorem \ref{util} shows that
\begin{equation}\label{coco}
\psi^{*}(\phi^{n})_{t}\to\psi^{*}(\phi)_{t}
\end{equation}
almost surely for all $1 \leq t \leq T$. 
Recalling the notation of the beginning of the proof, providing the identification of the above of strategies with continuous functions on $\mathcal{S}$, for all $n\in\mathbb{N}$, we have
$\psi^{*}(\phi^{n})=\tilde{\psi}^{*}(n)(\varepsilon^{T-1})$ and $\psi^{*}(\phi)=\tilde{\psi}^{*}(\varepsilon^{T-1}).$ For ease of notation, we set $\tilde{\psi}^{*}(n)$ for $\psi^{*}(\phi^{n})$  and $\tilde{\psi}^{*}$ for $\psi^{*}(\phi)$. 
 Then, \eqref{coco} implies that for all $1 \leq t \leq T$,
\begin{equation}\label{varo}
\tilde{\psi}^{*}_{t}(n)\to\tilde{\psi}^{*}_{t} 
\end{equation}
$\mu$-almost surely, where $\mu$ denotes the law of $\varepsilon^{T-1}$ under $\mathbb{P}$. Since $\mathcal{S}$ is the support of $\mu$, 
$\tilde{\psi}^{*}_{t}(n)\to\tilde{\psi}^{*}_{t}$ pointwise on a dense  subset of $\mathcal{S}$, see Lemma \ref{dense} below.

Seeking a contradiction, suppose that $\tilde{\psi}^{*}({n})$ do not converge to 
$\tilde{\psi}^{*}$ in the norm of $C(\mathcal{S})$. Then, along a subsequence (still denoted by $n$)
we have 
\begin{equation}\label{happy}
\inf_{n}||\tilde{\psi}^{*}(n)-\tilde{\psi}^{*}||_{\infty}>0.	
\end{equation} 
By compactness of $\Phi_{C(x_{0})}$, a further subsequence of $\tilde{\psi}^{*}({n})$ can be chosen (still denoted by $n$)
such that $||\tilde{\psi}^{*}(n)-\hat{\psi}||_{\infty}\to 0$, $n\to\infty$ for some $\hat{\psi} \in \Phi_{C(x_{0})}$. 
Since \eqref{varo} holds pointwise on a dense subset of $\mathcal{S}$, $\hat{\psi}=\tilde{\psi}^{*}$ on this set and, by continuity, on the whole of $\mathcal{S}$.
But this contradicts \eqref{happy}.
\end{proof}

We can finally achieve the proof of our main result.

\begin{proof}[Proof of Theorem \ref{main}]
Recalling the notation of Proposition \ref{contin}, the mapping $\phi \mapsto \psi^{*}(\phi)$ from $\Phi_{C(x_{0})}$ to $\Phi_{C(x_{0})}$ is continuous for the norm of $C(\mathcal{S})$. 
The set $\Phi_{C(x_{0})}$ is compact in 
$C(\mathcal{S})$ and also trivially convex. 
With the choice $\mathbb{B}:=C(\mathcal{S})$ and $H=\Phi_{C(x_{0})}$, 
Theorem \ref{schauder} below gives a fixed point, i.e. some $\phi^{\dagger} \in \Phi_{C(x_{0})}$ such that $\phi^{\dagger}= \psi^{*}(\phi^{\dagger}).$ 
This implies that, 
$$
u(x_{0},\phi^{\dagger})=\sup_{\psi\in\Phi}\mathbb{E}\bigl[U\bigl(W_{T}(x_{0},\psi),B(\phi^{\dagger})\bigr)\bigr]=\mathbb{E}\bigl[U\bigl(W_{T}(x_{0},\phi^{\dagger}),B(\phi^{\dagger})\bigr)\bigr],
$$	
and $\phi^{\dagger}$ is by definition a personal equilibrium. 

We now prove the existence of a preferred equilibrium. 
It is convenient to introduce the notation, for $\phi,\psi\in \Phi$,
$$
\mathcal{U}(\phi,\psi):=\mathbb{E}\bigl[U\bigl(W_{T}(x_{0},\phi),B(\psi)\bigr)\bigr].
$$

Now let $\phi^{\dagger}(n)\in\Phi^{\dagger}$ be a sequence such that 
$$\mathcal{U}\big(\phi^{\dagger}(n),\phi^{\dagger}(n)\big)\to 
\sup_{\phi\in\Phi^{\dagger}}\mathcal{U}(\phi,\phi),\ n\to\infty.$$ 
By compactness of $\Phi_{C(x_{0})}$, there is a subsequence (still denoted by $n$) 
and $\phi^{\sharp}\in\Phi_{C(x_{0})}$ such that $\phi^{\dagger}(n)\to \phi^{\sharp}$ in the topology of $C(\mathcal{S})$. 
In particular, an estimate like \eqref{wpo} shows that for all $\omega \in \Omega$,  $B(\phi^{\dagger}(n))(\omega)\to B(\phi^{\sharp})(\omega)$ and also
$$
U\Bigl(W_{T}\big(x_{0},\phi^{\dagger}(n)\big)(\omega),B\big(\phi^{\dagger}(n)\big)(\omega)\Bigr)\to 
U\bigl(W_{T}(x_{0},\phi^{\sharp})(\omega),B(\phi^{\sharp})(\omega)\bigr).
$$ 
Dominated convergence implies $
\mathcal{U}(\phi^{\dagger}(n),\phi^{\dagger}(n)) \to \mathcal{U}(\phi^{\sharp},\phi^{\sharp})$ and 
$$
\mathcal{U}(\phi^{\sharp},\phi^{\sharp})= 
\sup_{\phi\in\Phi^{\dagger}}\mathcal{U}(\phi,\phi).$$ 

It remains to show that $\phi^{\sharp}$ itself is a personal equilibrium, i.e. $u(x_{0},\phi^{\sharp})=\mathcal{U}(\phi^{\sharp},\phi^{\sharp})$. 
By Proposition \ref{dyna}, there is an optimizer $\psi^*=\psi^*(\phi^{\sharp})\in \Phi_{C(x_{0})}$ such that
$$u(x_{0},\phi^{\sharp})=\mathbb{E}\bigl[U(W_{T}\bigl(x_{0},\psi^*),B(\phi^{\sharp})\bigr)\bigr]=\mathcal{U}(\psi^*,\phi^{\sharp}).$$ 
(At this point, we do not know yet that $\psi^*=\phi^{\sharp}$.) 
Since $\phi^{\dagger}(n)$ was
a personal equilibrium, for all $n$,
$$
\mathcal{U}\big(\phi^{\dagger}(n),\phi^{\dagger}(n)\big)\geq \mathcal{U}\big(\psi^*,\phi^{\dagger}(n)\big).
$$
Passing to the limit (again by dominated convergence),
$$
\mathcal{U}(\phi^{\sharp},\phi^{\sharp})\geq \mathcal{U}(\psi^*,\phi^{\sharp})=u(x_{0},\phi^{\sharp})\geq \mathcal{U}(\psi,\phi^{\sharp})
$$ 
for all $\psi\in\Phi$. Choosing $\psi=\phi^{\sharp}$, we have equality,  
so $\phi^{\sharp}$ is indeed a personal equilibrium, and we may conclude. 
We remark that, by uniqueness of the optimizer, necessarily $\phi^{\sharp}=\psi^*$. 
\end{proof}

\section{Auxiliary results}\label{sec4}

\begin{proposition}
\label{lipcomp}
Let $K_t \subset \mathbb{R}^{t\times m}$ be a non-empty compact set. Let $C_f>0$ and let $\chi\in(0,1]$. 
Let $f_t : K_t \to \mathbb{R}$ such that
\begin{eqnarray}
\label{f1}
|f_t(e^t)-f_t(\bar e^t)| & \le &  C_f |e^t-\bar e^t|^\chi,
\qquad\forall e^t,\bar e^t\in K_t,\\
\label{f2}
|f_t(e^t)| & \le & C_f \qquad\forall e^t \in K_t.
\end{eqnarray}
Define $F_t,g_t:\mathbb{R}^{t\times m}\to\mathbb{R}$ by
\begin{eqnarray*}
F_t(e^t) &:= & \inf_{\bar e^t \in K_t} 
\bigl( f_t(\bar e^t) + C_f \, |e^t - \bar e^t|^{\chi} \bigr), \qquad\forall e^t \in\mathbb{R}^{t\times m}.\\
g_t(e^t) & := &  
\begin{cases}
F_t(e^t), & |e^t| \le R,\\[2mm]
F_t(\pi_R(e^t)), & |e^t| > R,
\end{cases}
\end{eqnarray*}
where $R>0$ is such that $K_t \subset B(0,R)$, and $\pi_R(e^t)$ denotes the projection of $e^t$ onto the 
closed ball of $\mathbb{R}^{t\times m}$ of centre $0$ and radius $R$, $B(0,R)$. Then, $g_t|_{K_t}=f_t$ and 
\begin{eqnarray}
\label{g1}
|g_t(e^t)-g_t(\bar e^t)|
&\le & C_f\,|e^t-\bar e^t|^\chi,
\qquad\forall e^t,\bar e^t\in\mathbb{R}^{t\times m} \\
\label{g2}
|g_t(e^t)| & \le & C_f(1+(2R)^\chi), \qquad\forall e^t \in K_t.
\end{eqnarray}
If $K_t=B(0,R)$, then \eqref{g2} holds with $C_f$ instead of $C_f(1+(2R)^\chi) $. 
\end{proposition}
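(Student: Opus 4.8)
The plan is to recognise $F_t$ as the McShane--Whitney infimal-convolution extension of $f_t$ and $g_t$ as its composition with the metric projection onto $B(0,R)$; the only genuinely analytic ingredient is the subadditivity of $s\mapsto s^{\chi}$ on $[0,\infty)$ for $\chi\in(0,1]$, i.e. the elementary inequality $(a+b)^{\chi}\le a^{\chi}+b^{\chi}$ for $a,b\ge 0$ (which follows from $t\le t^{\chi}$ for $t\in[0,1]$ applied to $t=a/(a+b)$ and $t=b/(a+b)$). I would first check that $F_t$ is real-valued: picking any $\bar e_0\in K_t$ (nonempty) gives $F_t(e^t)\le f_t(\bar e_0)+C_f|e^t-\bar e_0|^{\chi}<\infty$, and since by \eqref{f2} one has $f_t(\bar e^t)\ge -C_f$ while $C_f|e^t-\bar e^t|^{\chi}\ge 0$ for every $\bar e^t\in K_t$, one gets $F_t(e^t)\ge -C_f$. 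Next I would show $F_t|_{K_t}=f_t$: for $e^t\in K_t$ the admissible choice $\bar e^t=e^t$ yields $F_t(e^t)\le f_t(e^t)$, whereas \eqref{f1} gives, for every $\bar e^t\in K_t$, $f_t(\bar e^t)+C_f|e^t-\bar e^t|^{\chi}\ge f_t(\bar e^t)+|f_t(e^t)-f_t(\bar e^t)|\ge f_t(e^t)$, hence $F_t(e^t)\ge f_t(e^t)$.

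The heart of the argument is the global H\"older bound for $F_t$. Fixing $e^t,\hat e^t\in\mathbb{R}^{t\times m}$ and $\bar e^t\in K_t$, the triangle inequality and subadditivity of $s\mapsto s^{\chi}$ give
\[
C_f|e^t-\bar e^t|^{\chi}\le C_f|e^t-\hat e^t|^{\chi}+C_f|\hat e^t-\bar e^t|^{\chi},
\]
so adding $f_t(\bar e^t)$ and taking the infimum over $\bar e^t\in K_t$ yields $F_t(e^t)\le C_f|e^t-\hat e^t|^{\chi}+F_t(\hat e^t)$; exchanging the roles of $e^t$ and $\hat e^t$ gives \eqref{g1} for $F_t$. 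To transfer this to $g_t$, I would use that the metric projection $\pi_R$ onto the closed convex set $B(0,R)$ is well defined and $1$-Lipschitz and equals the identity on $B(0,R)\supseteq K_t$, so that $g_t=F_t\circ\pi_R$ holds on all of $\mathbb{R}^{t\times m}$; composing the H\"older estimate for $F_t$ with the $1$-Lipschitz map $\pi_R$ then gives \eqref{g1} for $g_t$, and $g_t|_{K_t}=F_t|_{K_t}=f_t$.

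It remains to bound $g_t$. For arbitrary $e^t$ one has $\pi_R(e^t)\in B(0,R)$, so with $\bar e_0\in K_t\subset B(0,R)$ the distance $|\pi_R(e^t)-\bar e_0|\le 2R$, whence $F_t(\pi_R(e^t))\le f_t(\bar e_0)+C_f(2R)^{\chi}\le C_f(1+(2R)^{\chi})$, while $F_t(\pi_R(e^t))\ge -C_f$ as above; together these give $|g_t(e^t)|\le C_f(1+(2R)^{\chi})$ for every $e^t\in\mathbb{R}^{t\times m}$, in particular \eqref{g2}. In the special case $K_t=B(0,R)$ one has $\pi_R(e^t)\in K_t$, hence $g_t(e^t)=f_t(\pi_R(e^t))$ and $|g_t(e^t)|\le C_f$ directly from \eqref{f2}.

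The only mildly delicate point is the global H\"older step, and there the sole structural input is the subadditivity of $s\mapsto s^{\chi}$; everything else (finiteness of the infimal convolution, exactness on $K_t$, the $1$-Lipschitz property of $\pi_R$) is routine. I would just be careful that \eqref{f2} is genuinely used and not only \eqref{f1}: the boundedness of $f_t$ is what guarantees $F_t>-\infty$ and provides the lower half of the sup bound.
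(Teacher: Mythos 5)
Your proof is correct. It keeps the paper's overall skeleton --- the McShane infimal-convolution extension $F_t$ followed by composition with the metric projection $\pi_R$ --- but executes the two key steps differently, and in both cases your version is the cleaner one. For the global H\"older bound on $F_t$ you use the triangle inequality together with the subadditivity of $s\mapsto s^{\chi}$, which gives $|F_t(e^t)-F_t(\hat e^t)|\le C_f|e^t-\hat e^t|^{\chi}$ for \emph{arbitrary} $e^t,\hat e^t\in\mathbb{R}^{t\times m}$; the paper instead invokes \eqref{f1} to compare $f_t(\bar e^t)$ with $f_t(\bar z^t)$, an argument which, as written, requires one of the two points to lie in $K_t$ (where $f_t$ is defined) and so does not literally cover the pairs of points in $B(0,R)\setminus K_t$ needed later --- your subadditivity route is the standard way to close that gap. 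Second, you observe that $\pi_R$ is the identity on $B(0,R)\supseteq K_t$, so $g_t=F_t\circ\pi_R$ holds on all of $\mathbb{R}^{t\times m}$ and \eqref{g1} follows in one line from the $1$-Lipschitz property of $\pi_R$; the paper instead runs a three-case analysis (both points in the ball, both outside, one of each, the last via the intersection of the segment with the sphere), all of which your identity renders unnecessary. The finiteness/lower bound $F_t\ge -C_f$ via \eqref{f2}, the exactness $F_t|_{K_t}=f_t$, the bound \eqref{g2}, and the special case $K_t=B(0,R)$ all match the paper's treatment.
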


\begin{proof}
We first prove that 
\begin{eqnarray}
\label{F1}
|F_t(e^t)-F_t(\bar e^t)|
\le C_f\,|e^t-\bar e^t|^\chi,
\qquad\forall e^t,\bar e^t\in\mathbb{R}^{t\times m}.
\end{eqnarray}
Fix $e^t\in\mathbb{R}^{t\times m}$.  Since $f_t$ satisfies \eqref{f1} for all $\bar e^t, \bar z^t\in K_t$
\begin{eqnarray*}
F_t(e^t) &\leq  & f_t(\bar e^t)+C_f\,|e^t-\bar e^t|^\chi\\
& \leq  & 
 f_t(\bar z^t)+C_f|\bar e^t-\bar z^t|^\chi +C_f |e^t-\bar e^t|^\chi.
\end{eqnarray*}
So, taking the infimum over $\bar z^t\in K_t$, we get that $F_t(e^t)\le F_t(\bar e^t)+C_f\,|e^t-\bar e^t|^\chi$. Then, \eqref{F1} follows by symmetry. \\
\noindent\textit{Extension property.}\\
If $e^t\in K_t\subset B(0,R)$, then  $g_t(e^t)=F_t(e^t)=f_t(e^t)$.
Thus $g_t|_{K_t}=f_t$.\\
\noindent \textit{Hölder property.}\\
Case 1: $e^t, \bar e^t \in B(0,R)$.
Then, \eqref{F1} shows that 
$$
|g_t(e^t) - g_t(\bar e^t)| = |F_t(e^t)-F_t(\bar e^t)| \le C_f |e^t - \bar e^t|^\chi.
$$
{Case 2: $e^t, \bar e^t \notin B(0,R)$.}  
Since the projection $\pi_R$ is $1$-Lipschitz (see  \cite[Lemma 6.54]{AB}),   \eqref{F1} again shows that
$$
|g_t(e^t) - g_t(\bar e^t)| 
= |F_t(\pi_R(e^t)) - F_t(\pi_R(\bar e^t))| 
\le C_f |\pi_R(e^t)-\pi_R(\bar e^t)|^\chi 
\le C_f |e^t-\bar e^t|^\chi.
$$
{Case 3: one point inside $B(0,R)$, one outside.}  
Without loss, assume $|e^t| \le R < |\bar e^t|$. 
Let $\underline e^t= (1-s_0)e^t+s_0\bar e^t$ be the intersection of $\{(1-s)e^t+s\bar e^t: s\in [0,1]\}$ with the sphere $\partial B(0,R)$. Note that $\pi_R(\underline{e}^t)=\underline{e}^t$. 
Using \eqref{F1} again, we get that 
\begin{eqnarray*}
|F_t(e^t)-F_t(\underline e^t)| & \le &  C_f |e^t-\underline e^t|^{\chi}=s_0C_f |e^t-\bar e^t|^{\chi} \\ 
|F_t(\underline e^t)-F_t\big(\pi_R(\bar e^t)\big)| &= & |F_t\big(\pi_R(\underline{e}^t)\big)-F_t\big(\pi_R(\bar e^t)\big)|
 \le  C_f|\pi_R(\underline{e}^t)-\pi_R(\bar e^t)|^{\chi}\\
& \leq & C_f|\underline e^t-\bar e^t|^{\chi}=(1-s_0)C_f |e^t-\bar e^t|^{\chi}\\
|g_t(e^t)-g_t(\bar e^t)| &= & |F_t(e^t)-F_t\big(\pi_R(\bar e^t)\big)| \\
& \le &  |F_t(e^t)-F_t(\underline e^t)| 
 + |F_t(\underline e^t)-F_t\big(\pi_R(\bar e^t)\big)|
 \le  C_f |e^t-\bar e^t|^{\chi}. 
\end{eqnarray*}
\noindent  \textit{Uniform boundedness.} \\
If $ e_t \in B(0,R)$, then 
$$
g_t(e^t)=F_t(e^t)
\le C_f + C_f \inf_{\bar e^t\in K_t} |e^t-\bar e^t|^\chi \le C_f(1+(2R)^\chi),$$
as $K_t  \subset B(0,R)$. Note that if $K_t  = B(0,R)$, $\inf_{\bar e^t\in K_t} |e^t-\bar e^t|^\chi=0$. 
If $ e_t\notin B(0,R)$, then $g_t(e^t)=F_t(\pi_R(e^t))$, so the preceding inequality applies as $ \pi_R(e^t) \in B(0,R)$.
\end{proof}

\begin{lemma}\label{vege}
The model of Example \ref{eex} satisfies Assumptions \ref{price1} and \ref{una}.	
\end{lemma}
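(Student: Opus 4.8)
The plan is to exhibit explicit functions $f_t$ and verify the two assumptions directly. The one genuine subtlety — which I expect to be the main obstacle — is that the natural candidate $e^t\mapsto\mu_t(e^{t-1})+\sigma_t(e^{t-1})e_t$ is \emph{not} globally H\"older continuous: the product in the second term contributes a factor $|e_t|$ to the H\"older estimate, which is unbounded on $\mathbb{R}^{t\times m}$. One must therefore truncate the arguments, and the truncation must be performed so that the uniform no-arbitrage inequality \eqref{ep} still holds for \emph{every} $e^{t-1}\in\mathbb{R}^{(t-1)\times m}$, not merely on the support of $\varepsilon^{t-1}$. A generic H\"older extension in the spirit of Proposition \ref{lipcomp} would handle Assumption \ref{price1} but could destroy \eqref{ep} far from the origin; instead I would truncate each coordinate onto the box $[-C_\varepsilon,C_\varepsilon]$, which changes nothing on the support of $(\varepsilon_1,\dots,\varepsilon_T)$ and keeps the coefficients $\mu_t,\sigma_t$ evaluated at points where their global bounds still apply.

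Concretely, fix $C_\varepsilon>0$ with $|\varepsilon_t|\le C_\varepsilon$ for $1\le t\le T$ (necessarily $C_\varepsilon\ge(C+\beta)/c$, since $\mathbb{P}[\varepsilon_t\ge(C+\beta)/c]\ge\beta>0$, and $\beta\le 1$). Let $\rho(x):=(x\wedge C_\varepsilon)\vee(-C_\varepsilon)$ be the $1$-Lipschitz truncation onto $[-C_\varepsilon,C_\varepsilon]$ (so $|\rho|\le C_\varepsilon$), and write $\hat e^{t}:=(\rho(e_1),\dots,\rho(e_{t}))$. Define, for $1\le t\le T$,
\[
f_t(e^t):=\mu_t\bigl(\hat e^{t-1}\bigr)+\sigma_t\bigl(\hat e^{t-1}\bigr)\,\rho(e_t),
\]
with the convention that $\mu_1,\sigma_1$ are the given constants. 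Since $\rho(\varepsilon_j)=\varepsilon_j$ almost surely, one has $f_t(\varepsilon_1,\dots,\varepsilon_t)=\mu_t(\varepsilon^{t-1})+\sigma_t(\varepsilon^{t-1})\varepsilon_t=\Delta S_t$ a.s., and each $f_t$ is continuous, hence Borel.

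For Assumption \ref{price1}: \eqref{unifbornef} is immediate from $|\mu_t|,|\sigma_t|\le C$ and $|\rho|\le C_\varepsilon$, giving $|f_t|\le C(1+C_\varepsilon)$. For \eqref{lipf} I would decompose $f_t(e^t)-f_t(\bar e^t)$ into the three terms $\mu_t(\hat e^{t-1})-\mu_t(\hat{\bar e}^{t-1})$, $\ \sigma_t(\hat e^{t-1})[\rho(e_t)-\rho(\bar e_t)]$, and $[\sigma_t(\hat e^{t-1})-\sigma_t(\hat{\bar e}^{t-1})]\rho(\bar e_t)$, and bound each using the $\delta$-H\"older continuity of $\mu_t,\sigma_t$ (constant $C$), the fact that $e^{t-1}\mapsto\hat e^{t-1}$ is $1$-Lipschitz (so $|\hat e^{t-1}-\hat{\bar e}^{t-1}|\le|e^t-\bar e^t|$), the bounds $|\sigma_t|\le C$ and $|\rho(\bar e_t)|\le C_\varepsilon$, and the elementary inequality $|\rho(e_t)-\rho(\bar e_t)|\le(2C_\varepsilon)^{1-\delta}|e_t-\bar e_t|^\delta$ (valid since $|\rho(e_t)-\rho(\bar e_t)|$ is bounded both by $2C_\varepsilon$ and by $|e_t-\bar e_t|$). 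This yields \eqref{lipf} with $\chi:=\delta\in(0,1]$ and $C_f:=C\bigl(1+(2C_\varepsilon)^{1-\delta}+C_\varepsilon\bigr)$, a constant that also dominates $C(1+C_\varepsilon)$; thus Assumption \ref{price1} holds. (Alternatively one could check the hypotheses of Proposition \ref{lipcomp} on $K_t$ and invoke Remark \ref{compact}, but the truncated $f_t$ already works on all of $\mathbb{R}^{t\times m}$.)

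For Assumption \ref{una}: take $\alpha:=\beta\in(0,1]$. Fix $t$ and $e^{t-1}\in\mathbb{R}^{(t-1)\times m}$, and set $a:=\mu_t(\hat e^{t-1})$, $b:=\sigma_t(\hat e^{t-1})$, so $|a|\le C$, $b\ge c>0$, and $f_t(e^{t-1},\varepsilon_t)=a+b\varepsilon_t$ (using $\rho(\varepsilon_t)=\varepsilon_t$); the case $t=1$ is identical with $a=\mu_1$, $b=\sigma_1$. Then $\{a+b\varepsilon_t\ge\alpha\}=\{\varepsilon_t\ge(\alpha-a)/b\}$, and distinguishing the sign of $\alpha-a$ and using $\alpha-a\le\alpha+C=\beta+C$ together with $b\ge c$ one checks $(\alpha-a)/b\le(C+\beta)/c$, whence
\[
\mathbb{P}\bigl[f_t(e^{t-1},\varepsilon_t)\ge\alpha\bigr]\ \ge\ \mathbb{P}\bigl[\varepsilon_t\ge(C+\beta)/c\bigr]\ \ge\ \beta\ =\ \alpha .
\]
Symmetrically, from $-\alpha-a\ge-\alpha-C=-\beta-C$ and $b\ge c$ one obtains $(-\alpha-a)/b\ge-(C+\beta)/c$, so $\mathbb{P}[f_t(e^{t-1},\varepsilon_t)\le-\alpha]\ge\mathbb{P}[\varepsilon_t\le-(C+\beta)/c]\ge\beta=\alpha$. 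This is precisely \eqref{ep}, so Assumption \ref{una} holds, completing the proof.
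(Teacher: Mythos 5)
Your proof is correct, and it takes a genuinely different route from the paper's. The paper works with the untruncated candidate $f_t(e^t)=\mu_t(e^{t-1})+\sigma_t(e^{t-1})e_t$, verifies the H\"older and boundedness estimates only on the box $[-C_\varepsilon,C_\varepsilon]^t$, and then replaces $f_t$ by the generic extension $g_t$ of Proposition \ref{lipcomp} (with $R=\sqrt{T}C_\varepsilon$) to get Assumption \ref{price1} on all of $\mathbb{R}^{t\times m}$; Assumption \ref{una} is then checked for the expression $\mu_{t+1}(e^{t})+\sigma_{t+1}(e^{t})\varepsilon_{t+1}$, which agrees with $g_{t+1}(e^t,\varepsilon_{t+1})$ only when $(e^t,\varepsilon_{t+1})$ lies in the ball used for the extension. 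Your coordinate-wise truncation $\hat e^{t}$ instead keeps the function globally of the affine form $a+b\,\rho(e_t)$ with $|a|\le C$ and $b\ge c$ for \emph{every} $e^{t-1}$, so \eqref{ep} can be verified for all $e^{t-1}\in\mathbb{R}^{(t-1)\times m}$ exactly as Assumption \ref{una} demands, while still coinciding with $\Delta S_t$ almost surely. Your worry that a generic H\"older extension ``could destroy \eqref{ep} far from the origin'' is well taken: the paper's verification of Assumption \ref{una} is carried out for $f_t$ rather than for the extension $g_t$ that actually defines the model, so your construction makes explicit (and repairs) a point the paper leaves implicit. The only cost is a slightly different H\"older constant, $C\bigl(1+(2C_\varepsilon)^{1-\delta}+C_\varepsilon\bigr)$ versus the paper's $5C(1+C_\varepsilon)$, which is immaterial; the individual estimates you use (the $1$-Lipschitz truncation, the interpolation $|\rho(e_t)-\rho(\bar e_t)|\le(2C_\varepsilon)^{1-\delta}|e_t-\bar e_t|^{\delta}$, and the sign analysis of $(\alpha-a)/b$) all check out.
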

\begin{proof}
Let us fix $C_{\varepsilon}\geq 1$ such that $|\varepsilon_{t+1}|\leq C_{\varepsilon}$.
Let $f_{t}(e^{t}):=\mu_{t}(e^{t-1})+\sigma_{t}(e^{t-1})e_{t}$. Trivially, $|f_{t}|\leq C+C C_{\varepsilon}$ on $[- C_{\varepsilon}, C_{\varepsilon}]^{t}$. 
For $e^{t}, \bar{e}^{t} \in [- C_{\varepsilon}, C_{\varepsilon}]^{t}$, 
\begin{eqnarray*}
& & \hspace*{-1cm} |f_t(e^t) -f_t(\bar{e}^t)| \\
&\leq& |\mu_{t}(e^{t-1})-\mu_{t}(\bar{e}^{t-1})|+|\sigma_{t}(e^{t-1})-\sigma_{t}(\bar{e}^{t-1})||e_{t}|+
|e_{t}-\bar{e}_{t}||\sigma_{t}(\bar{e}^{t-1})|\\
&\leq& C|e^{t-1}-\bar{e}^{t-1}|^{\delta}+C|e^{t-1}-\bar{e}^{t-1}|^{\delta}C_{\varepsilon}+ 
C|e_{t}-\bar{e}_{t}|\\
&\leq & [3(C+ C C_{\varepsilon})+2(C+C C_{\varepsilon})]|e^{t}-\bar{e}^{t}|^{\delta}= 5C(1+ C_{\varepsilon})|e^{t}-\bar{e}^{t}|^{\delta},
\end{eqnarray*}
using Lemma \ref{upi} below. Now, define $g_t$ as in Proposition \ref{lipcomp} with $R=\sqrt{T}C_{\varepsilon}$. Clearly, 
$\Delta S_{t}=g_{t}(\varepsilon_{1},\ldots,\varepsilon_{t})$  and Assumption \ref{price1} holds for $g_t$. It remains to check Assumption \ref{una}.
By our hypothesis on $\varepsilon_{t+1}$,
\begin{eqnarray*}
\mathbb{P}[\mu_{t+1}(e^{t})+\sigma_{t+1}(e^{t})\varepsilon_{t+1}\leq -\beta] \geq \mathbb{P}\left[\varepsilon_{t+1}\leq \frac{-C-\beta}{c}\right]
\geq \beta \\
\mathbb{P}[\mu_{t+1}(e^{t})+\sigma_{t+1}(e^{t})\varepsilon_{t+1}\geq \beta] \geq 
\mathbb{P}\left[\varepsilon_{t+1}\geq \frac{C+\beta}{c}\right]\geq \beta
\end{eqnarray*}
and we choose $\alpha=\beta$.

\end{proof}

Simple observations are noted next. 

\begin{proof}[Proof of Lemma \ref{epsilontilde}]
There is a
bijection $\zeta:\mathbb{R}\to\mathbb{R}^{T\times m}$ such that
$\zeta,\zeta^{-1}$ are Borel measurable; see \cite[Corollary 7.16.1, p.122]{BS}. 

Consider the probability $\kappa(A):=\mathbb{P}[(\varepsilon_{1},\ldots,\varepsilon_{T})\in \zeta(A)]$,
defined for $A\in\mathcal{B}(\mathbb{R})$. The corresponding cumulative distribution function is
$$
F_{\kappa}(x):=\kappa[(-\infty,x]],\ x\in \mathbb{R},
$$ 
and its pseudo-inverse is
$$
F_{\kappa}^{-}(u):=\inf\{x:F_{\kappa}(x)\geq u\},\ u\in (0,1).
$$
It is well-known that the random variable $F_{\kappa}^{-}(\hat{\varepsilon})$ has law $\kappa$ under $\mathbb{P}$.{}
Define $\Upsilon(u):=\zeta(F_{\kappa}^{-}(u))$. By the definition of $\kappa$, $\Upsilon(\hat{\varepsilon})$
has the same law as $(\varepsilon_{1},\ldots,\varepsilon_{T})$.
\end{proof}

\begin{lemma}\label{upi} Let $n,N,M\in\mathbb{N}$ and $0\leq \theta_1\leq \theta_2 \leq \ldots \leq  \theta_n$. 
Let $f:\mathbb{R}^{N}\to\mathbb{R}^M$ be a function with $|f|\leq \bar{C}$. 
If, for all 
$e,\bar{e}\in \mathbb{R}^{N}$,
\begin{equation}\label{flat}
|f(e)-f(\bar{e})|\leq C \sum_{i=1}^n|e-\bar{e}|^{\theta_i}
\end{equation}
for some constant $C>0$, then 
\begin{equation}\label{natural}
|f(e)-f(\bar{e})|\leq 
[nC+2\bar{C}]|e-\bar{e}|^{\theta_1},\ e,\bar{e}\in \mathbb{R}^{N}.
\end{equation}
\end{lemma}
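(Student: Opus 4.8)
The plan is to prove \eqref{natural} by distinguishing whether the two points are at distance at most $1$ or more than $1$. Write $r:=|e-\bar{e}|$; if $r=0$ then both sides of \eqref{natural} vanish and there is nothing to prove, so I would assume $r>0$ from now on.

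First, in the regime $0<r\leq 1$, I would use that the exponents are ordered, $0\leq\theta_1\leq\theta_i$, together with the fact that $t\mapsto r^{t}$ is non-increasing when $0<r\leq 1$; hence $r^{\theta_i}\leq r^{\theta_1}$ for every $i=1,\dots,n$. Summing over $i$ and feeding this into \eqref{flat} yields $|f(e)-f(\bar{e})|\leq C\sum_{i=1}^{n}r^{\theta_i}\leq nC\,r^{\theta_1}\leq (nC+2\bar{C})\,r^{\theta_1}$, which is exactly \eqref{natural} in this case.

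Next, in the regime $r>1$, the exponent $\theta_1\geq 0$ gives $r^{\theta_1}\geq 1$, so I would simply discard the hypothesis \eqref{flat} and use only the uniform bound $|f|\leq\bar{C}$: then $|f(e)-f(\bar{e})|\leq|f(e)|+|f(\bar{e})|\leq 2\bar{C}\leq 2\bar{C}\,r^{\theta_1}\leq (nC+2\bar{C})\,r^{\theta_1}$. Combining the two cases proves the lemma.

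There is no genuine obstacle here; the only point worth noting is why the uniform bound $\bar{C}$ has to appear at all. The inequality \eqref{flat} with the \emph{smallest} exponent $\theta_1$ is informative only on bounded sets, since for $|e-\bar{e}|$ large, $|e-\bar{e}|^{\theta_1}$ is the slowest-growing of the listed powers; one therefore cannot hope to bound $|f(e)-f(\bar{e})|$ by a multiple of $|e-\bar{e}|^{\theta_1}$ on all of $\mathbb{R}^{N}$ from \eqref{flat} alone, and the hypothesis $|f|\leq\bar{C}$ is precisely what absorbs the large-distance regime, which is what the summand $2\bar{C}$ in \eqref{natural} records.
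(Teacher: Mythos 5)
Your proof is correct and follows essentially the same argument as the paper: split according to whether $|e-\bar{e}|$ is at most $1$ (where the ordering of the exponents gives $|e-\bar{e}|^{\theta_i}\leq|e-\bar{e}|^{\theta_1}$ and \eqref{flat} yields the bound $nC|e-\bar{e}|^{\theta_1}$) or greater than $1$ (where the uniform bound $2\bar{C}\leq 2\bar{C}|e-\bar{e}|^{\theta_1}$ suffices). No issues.
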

\begin{proof} 
If $|e-\bar{e}|<1$ then \eqref{flat} implies $|f(e)-f(\bar{e})|\leq nC|e-\bar{e}|^{\theta_1}.$ 
In the opposite case,
\begin{eqnarray*}
 |f(e)-f(\bar{e})| \leq 2\bar{C}\leq 2\bar{C}|e-\bar{e}|^{\theta_1},
\end{eqnarray*}	
showing our claim.
\end{proof}

\begin{lemma}\label{cont} Let $F : \mathbb{R} \to \mathbb{R} $ and $K : \mathbb{R} \to (0, +\infty) $ be two continuous functions. Define
$$
f(x) := \sup_{y \in [x - K(x), x + K(x)]} F(y).
$$
Then $f $ is continuous on $\mathbb{R}$.
\end{lemma}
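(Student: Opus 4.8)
The plan is to prove continuity at an arbitrary point $x_{0}\in\mathbb{R}$ by an $\varepsilon$--$\delta$ argument, exploiting two facts: for each $x$ the supremum defining $f(x)$ is taken over the compact interval $I_{x}:=[x-K(x),x+K(x)]$, hence is attained and finite (as $F$ is continuous); and these intervals vary continuously with $x$ because $K$ does. So $f(x)=F(y^{\star})$ for some $y^{\star}\in I_{x}$.

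First I would localize. On the compact set $[x_{0}-1,x_{0}+1]$ the continuous function $K$ is bounded by some $M>0$, so for every $x\in[x_{0}-1,x_{0}+1]$ we have $I_{x}\subset J:=[x_{0}-1-M,\,x_{0}+1+M]$, a fixed compact interval. On $J$ the function $F$ is uniformly continuous, so given $\varepsilon>0$ there is $\eta>0$ with $|F(s)-F(t)|<\varepsilon$ whenever $s,t\in J$ and $|s-t|\leq\eta$. By continuity of $K$ at $x_{0}$, choose $\delta\in(0,1)$ small enough that $|x-x_{0}|<\delta$ forces $|K(x)-K(x_{0})|<\eta/2$ and also $\delta\leq\eta/2$, so that $|x-x_{0}|+|K(x)-K(x_{0})|<\eta$.

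The key observation is then that $I_{x}$ and $I_{x_{0}}$ are within Hausdorff distance $|x-x_{0}|+|K(x)-K(x_{0})|<\eta$ of each other, since their left endpoints differ by at most this amount and likewise their right endpoints. Hence, taking a maximizer $y^{\star}\in I_{x}$ of $F$ on $I_{x}$, there is $z\in I_{x_{0}}$ with $|y^{\star}-z|<\eta$; both $y^{\star}$ and $z$ lie in $J$, so $f(x)=F(y^{\star})<F(z)+\varepsilon\leq f(x_{0})+\varepsilon$. Swapping the roles of $x$ and $x_{0}$ — using a maximizer $z^{\star}\in I_{x_{0}}$ and approximating it by a point of $I_{x}$, again all inside $J$ — gives $f(x_{0})<f(x)+\varepsilon$. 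Therefore $|f(x)-f(x_{0})|<\varepsilon$ whenever $|x-x_{0}|<\delta$, which proves continuity at $x_{0}$; since $x_{0}$ was arbitrary, $f$ is continuous on $\mathbb{R}$.

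I do not expect a serious obstacle here. The only point requiring a little care is the Hausdorff-distance estimate between the two intervals together with the bookkeeping that keeps every point involved inside the fixed compact set $J$, so that uniform continuity of $F$ may be invoked. In particular, when $I_{x}$ and $I_{x_{0}}$ have different lengths one must note that a point of the longer interval is still within the Hausdorff distance of some point of the shorter one — which is precisely what the endpoint estimate encodes.
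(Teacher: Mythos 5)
Your proof is correct, and it takes a genuinely different route from the paper's. The paper makes the substitution $t:=(y-x)/K(x)$, which turns the $x$-dependent constraint interval into the fixed compact set $[-1,1]$, writes $f(x)=\sup_{t\in[-1,1]}F(x+K(x)t)$, and then invokes Berge's Maximum Theorem for the jointly continuous integrand $\varphi(x,t)=F(x+K(x)t)$; continuity of $f$ is then immediate from the cited theorem. You instead argue directly: localize to a fixed compact interval $J$ containing all the constraint sets $I_x$ for $x$ near $x_0$, use uniform continuity of $F$ on $J$, and control the Hausdorff distance between $I_x$ and $I_{x_0}$ by the endpoint estimate $|x-x_0|+|K(x)-K(x_0)|$, which correctly bounds $\max$ of the two endpoint gaps and hence the Hausdorff distance between the intervals. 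The two-sided comparison via maximizers (which exist since $F$ is continuous on each compact $I_x$) then gives $|f(x)-f(x_0)|<\varepsilon$. Your version is elementary and self-contained, at the cost of the explicit bookkeeping; the paper's version is shorter but outsources the work to a general theorem and to the reparametrization trick that freezes the constraint set. Both are complete proofs, and no step of yours fails.
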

\begin{proof} 
Fix $ x \in \mathbb{R}$. We perform the change of variable 
$
t := {(y - x)}/{K(x)}, 
$
and  rewrite:
$$
f(x) = \sup_{t \in [-1, 1]} F(x + K(x)t).
$$
Now define the auxiliary function on $\mathbb{R} \times [-1, 1]$ by $\varphi(x, t) := F(x + K(x)t).$ 
Since $ F$ and $K$ are continuous, so is $\varphi$, and we can apply the Maximum Theorem (\cite[Theorem 17.31]{AB}) on the 
compact $[-1,1]$: $f$ is continous. 
\end{proof}


The following measure-theoretical result was needed in our argument for dynamic programming above.

\begin{lemma}\label{condexp}
Let $X_{1}\in\mathbb{R}^{d_{1}}$, $X_{2}\in\mathbb{R}^{d_{2}}$ be independent random variables, 
$\Xi:\mathbb{R}^{d_{1}}\times\mathbb{R}^{d_{2}}\to\mathbb{R}$ be Borel 
measurable and bounded from above. Define 
$$
\Xi^{\sharp}(x_{2}):=\mathbb{E}[\Xi(X_{1},x_{2})],\ x_{2}\in\mathbb{R}^{d_{2}}.
$$ 
Then $\Xi^{\sharp}(X_{2})$ is a version of $\mathbb{E}[\Xi(X_{1},X_{2})\vert\sigma(X_{2})]$. We may write
$$\Xi^{\sharp}(X_{2})=\mathbb{E}[\Xi(X_{1},x_{2})]|_{x_2=X_2}.$$ 
\end{lemma}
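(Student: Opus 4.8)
The plan is to reduce to the bounded case by truncation and then pass to a decreasing limit. First I would set $\Xi_{n}:=\Xi\vee(-n)$ for $n\in\mathbb{N}$; each $\Xi_{n}$ is Borel measurable and bounded, and $\Xi_{n}\downarrow\Xi$ pointwise. By the Fubini-type result \cite[Proposition 7.29]{BS} already invoked in the proof of Lemma \ref{well}, for each $n$ the map $x_{2}\mapsto\mathbb{E}[\Xi_{n}(X_{1},x_{2})]$ is Borel measurable. Since $\Xi$ is bounded from above, $\Xi_{1}(X_{1},x_{2})$ is integrable, so monotone convergence (for decreasing sequences dominated from above by an integrable function) gives $\mathbb{E}[\Xi_{n}(X_{1},x_{2})]\downarrow\mathbb{E}[\Xi(X_{1},x_{2})]=\Xi^{\sharp}(x_{2})$ for every $x_{2}$. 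Hence $\Xi^{\sharp}:\mathbb{R}^{d_{2}}\to[-\infty,\infty)$ is Borel measurable and $\Xi^{\sharp}(X_{2})$ is $\sigma(X_{2})$-measurable; moreover, $\Xi(X_{1},X_{2})$ being bounded from above, $\mathbb{E}[\Xi(X_{1},X_{2})^{+}]<\infty$, so $\mathbb{E}[\Xi(X_{1},X_{2})\mid\sigma(X_{2})]$ is well defined with values in $[-\infty,\infty)$.

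Next I would verify the identity in the bounded case, i.e.\ for the functions $\Xi_{n}$. Fix $n$ and $D\in\mathcal{B}(\mathbb{R}^{d_{2}})$. Writing $\mu_{1},\mu_{2}$ for the laws of $X_{1},X_{2}$, the joint law of $(X_{1},X_{2})$ equals $\mu_{1}\otimes\mu_{2}$ by independence, so Fubini's theorem applied to the bounded Borel function $(x_{1},x_{2})\mapsto\Xi_{n}(x_{1},x_{2})1_{D}(x_{2})$ gives
\[
\mathbb{E}\bigl[\Xi_{n}(X_{1},X_{2})1_{\{X_{2}\in D\}}\bigr]
=\int_{D}\Bigl(\int\Xi_{n}(x_{1},x_{2})\,\mu_{1}(dx_{1})\Bigr)\mu_{2}(dx_{2})
=\mathbb{E}\bigl[\Xi_{n}^{\sharp}(X_{2})1_{\{X_{2}\in D\}}\bigr].
\]
Since the events $\{X_{2}\in D\}$, $D\in\mathcal{B}(\mathbb{R}^{d_{2}})$, exhaust $\sigma(X_{2})$ and $\Xi_{n}^{\sharp}(X_{2})$ is $\sigma(X_{2})$-measurable, this identity exactly says that $\Xi_{n}^{\sharp}(X_{2})=\mathbb{E}[\Xi_{n}(X_{1},X_{2})\mid\sigma(X_{2})]$ almost surely.

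Finally I would let $n\to\infty$. The sequence $\Xi_{n}(X_{1},X_{2})$ decreases to $\Xi(X_{1},X_{2})$ and is dominated from above by the integrable random variable $\Xi_{1}(X_{1},X_{2})$, so the decreasing version of the conditional monotone convergence theorem yields $\mathbb{E}[\Xi_{n}(X_{1},X_{2})\mid\sigma(X_{2})]\downarrow\mathbb{E}[\Xi(X_{1},X_{2})\mid\sigma(X_{2})]$ almost surely. By the first step, $\Xi_{n}^{\sharp}(X_{2})\downarrow\Xi^{\sharp}(X_{2})$ everywhere. Combining these with the bounded-case identity of the previous display gives $\Xi^{\sharp}(X_{2})=\mathbb{E}[\Xi(X_{1},X_{2})\mid\sigma(X_{2})]$ almost surely, which is the assertion. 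I expect no substantive obstacle here; the only points requiring some care are the $[-\infty,\infty)$-valued bookkeeping and making sure to use the \emph{decreasing} (rather than increasing) forms of the monotone convergence theorems, both conditional and unconditional.
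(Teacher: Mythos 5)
Your proof is correct, and it is somewhat more careful than the one in the paper. The paper's argument reduces, ``by standard measure-theoretic arguments,'' to the case $\Xi(x_{1},x_{2})=1_{A_{1}}(x_{1})1_{A_{2}}(x_{2})$, pulls the $\sigma(X_{2})$-measurable factor out of the conditional expectation, and uses independence of $X_{1}$ from $\sigma(X_{2})$; the extension to general $\Xi$ (linearity, a monotone class or limiting argument, and the passage from bounded to merely bounded-above integrands) is left implicit. You instead verify the defining property of conditional expectation directly for every bounded Borel $\Xi_{n}$ in one stroke, by applying Fubini to the product law $\mu_{1}\otimes\mu_{2}$ against sets $\{X_{2}\in D\}$ — this absorbs the monotone class step into Fubini's theorem — and then you make explicit the truncation $\Xi_{n}=\Xi\vee(-n)$ and the decreasing (conditional and unconditional) monotone convergence needed when $\Xi$ is only bounded from above, including the $[-\infty,\infty)$ bookkeeping. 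The paper's route is shorter on the page; yours is self-contained and actually documents the step the paper elides. No gap.
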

\begin{proof}
By standard measure-theoretic arguments, we can reduce that statement to the case where $\Xi(x_{1},x_{2})=1_{A_{1}}(x_{1})1_{A_{2}}(x_{2})$
with Borel sets $A_{1},A_{2}$. By independence, we get that $\mathbb{P}$-a.s.
\begin{eqnarray*}
\mathbb{E}[\Xi(X_{1},X_{2})\vert\sigma(X_{2})] &= & 1_{A_{2}}(X_{2})\mathbb{E}[1_{A_{1}}(X_{1})\vert\sigma(X_{2})]\\
 &= & 1_{A_{2}}(X_{2})\mathbb{E}[1_{A_{1}}(X_{1})] = \Xi^{\sharp}(X_{2}),
\end{eqnarray*}	
finishing the proof. 
\end{proof}

\begin{lemma}\label{dense}
Let $k \geq 1$. Assume that $f_n \to f$, $n\to \infty$ $\mu$-a.s. where $\mu$ is a probability measure on $\mathcal{B}(\mathbb{R}^k)$. 
Then, there exists a dense subset $D$ of 
$\mathrm{supp}(\mu)$, where $f_n(x) \to f(x)$, $n\to \infty$, for all $x\in D$. 
\end{lemma}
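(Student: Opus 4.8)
The plan is to show that the set $N$ where $f_n$ fails to converge to $f$ has $\mu$-measure zero, hence $\mathrm{supp}(\mu)\setminus N$ is a $\mu$-full subset of $\mathrm{supp}(\mu)$, and then argue that any $\mu$-full subset of $\mathrm{supp}(\mu)$ is dense in $\mathrm{supp}(\mu)$. The last point is the crux: it is a general fact that if $\mu$ is a Borel probability measure on a metric (or second countable) space and $\mu(A)=1$, then $A$ is dense in $\mathrm{supp}(\mu)$.

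\begin{proof}
Let $N:=\{x\in\mathbb{R}^{k}: f_{n}(x)\not\to f(x)\}$; by hypothesis $\mu(N)=0$. Set $D:=\mathrm{supp}(\mu)\setminus N$, so that $f_{n}(x)\to f(x)$ for every $x\in D$, and $\mu(D)=\mu(\mathrm{supp}(\mu))=1$ since $\mu(N)=0$ and $\mu(\mathrm{supp}(\mu))=1$ (recall \eqref{defd1} and the discussion preceding it, applied with $k=(T-1)m$; the same argument gives $\mu(\mathrm{supp}(\mu))=1$ for a general Borel probability $\mu$ on $\mathbb{R}^{k}$).

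It remains to prove that $D$ is dense in $\mathrm{supp}(\mu)$. Let $x\in\mathrm{supp}(\mu)$ and let $r>0$; we must show $D\cap B(x,r)\neq\emptyset$. By definition of the support, every open ball centred at a support point has strictly positive measure, so $\mu(B(x,r))>0$. If we had $D\cap B(x,r)=\emptyset$, then $B(x,r)\subset N$, whence $\mu(B(x,r))\leq\mu(N)=0$, a contradiction. Therefore $D\cap B(x,r)\neq\emptyset$. Since $x\in\mathrm{supp}(\mu)$ and $r>0$ were arbitrary, $D$ is dense in $\mathrm{supp}(\mu)$, and $D$ has the required property.
\end{proof}

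The main (and only real) obstacle is the density statement, and the argument above resolves it by the standard fact that an open ball around a support point has positive measure, so it cannot be contained in a null set. Everything else is bookkeeping: that $\mu$ charges its support fully, and that removing a $\mu$-null set from the support preserves $\mu$-fullness.
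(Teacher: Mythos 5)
Your proof is correct and follows essentially the same route as the paper: remove the $\mu$-null non-convergence set from the support and deduce density from a standard property of the support. The only cosmetic difference is that you invoke the characterization ``every open ball around a support point has positive measure'' (which follows in one line from \eqref{defd1}: a null ball's complement would be a closed full-measure set missing a support point), whereas the paper argues dually that the closure of $D$ is a closed set of full measure and hence contains $\mathrm{supp}(\mu)$ by minimality.
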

\begin{proof}
Let $A\in \mathcal{B}(\mathbb{R}^k)$ such that for all $x\in A$, $f_n(x) \to f(x)$, $n\to \infty$ and $\mu[A]=1$. 
Notice that $\mu[\mathrm{supp}(\mu)]=1$ hence $D:=A\cap\mathrm{supp}(\mu)$ and its closure $\bar{D}$ also
satisfy $\mu[D]=\mu[\bar{D}]=1$. Then $\bar{D}\supset\mathrm{supp}(\mu)$ since the latter
is the smallest closed set of full $\mu$-measure, see \eqref{defd1}. This means precisely that $D$ is dense in $\mathrm{supp}(\mu)$. 
\end{proof}




We recall the main result of \cite{cr2007b} in a form that is convenient for the present setting.

\begin{theorem}\label{util} Let Assumptions \ref{filtra}, \ref{price1}, \ref{una} be in vigour.
For all $n \in \bar{\mathbb{N}}:=\mathbb{N} \cup \{\infty\}$,  let the random utilities 
$\mathfrak{U}_{n}:\Omega\times \mathbb{R}\to\mathbb{R}$ satisfy 
\begin{eqnarray}
\label{fini}
-\infty<\mathrm{ess.}\inf_{n\in\bar{\mathbb{N}}}\mathfrak{U}_{n}(\cdot,\infty) <+\infty  \mbox{ a.s.}  \quad \mathrm{ess.}\sup_{n\in\bar{\mathbb{N}}}|\mathfrak{U}_{n}(\cdot,x)|<\infty \mbox{ a.s. } \forall x\in\mathbb{R}.& &\\
\label{egyutt}
\lim_{x\to\infty}\sup_{n\in\bar{\mathbb{N}},\omega\in\Omega}[\mathfrak{U}_{n}(\omega,\infty)-\mathfrak{U}_{n}(\omega,x)]= 0 & &
\end{eqnarray}
Assume that each $\mathfrak{U}_{n}$ is (almost surely) strictly concave and increasing, continuously differentiable in $x$, 
with \begin{eqnarray} 
\label{zero}\mathrm{ess.}\inf_{n\in\bar{\mathbb{N}}}\mathfrak{U}_{n}'(\cdot,0)>0 \mbox{ a.s.}
\end{eqnarray}
Furthermore, assume that for each $x\in\mathbb{R}$ 
\begin{eqnarray} 
\label{conv}
\mathfrak{U}_{n}(\cdot,x)\to \mathfrak{U}_{\infty}(\cdot,x) \mbox{ a.s.,} n\to\infty.
\end{eqnarray}
Let $x_{0} \in\mathbb{R}.$ Then, for all $n \in \bar{\mathbb{N}}$, there are (a.s.) unique optimizers $\Psi(n):=\Psi(n)(\cdot,x_{0})$, $\Psi(n)\in \Phi$ satisfying
\begin{eqnarray}
\label{optil}
\mathbb{E}\Bigl[\mathfrak{U}_{n}\Bigl(\cdot, W_{T}\bigl(x_{0},\Psi(n)\bigr)\Bigr)\Bigr]={}
\sup_{\phi\in\Phi}\mathbb{E}\Bigl[\mathfrak{U}_{n}\bigl(\cdot, W_{T}(x_{0},\phi)\bigr)\Bigr].
\end{eqnarray}
and  for all $1\leq t\leq T$
\begin{equation}\label{connie}
\Psi(n)_{t}(\cdot,x_{0})\to \Psi({\infty})_{t}(\cdot,x_{0})  \mbox{ a.s. }, n\to\infty.
\end{equation}
Moreover, 
$$\lim_{n \to \infty}\mathbb{E}\Bigl[\mathfrak{U}_{n}\Bigl(\cdot, W_{T}\bigl(x_{0},\Psi(n)\bigr)\Bigr)\Bigr]=\mathbb{E}\Bigl[\mathfrak{U}_{{\infty}}\Bigl(\cdot, W_{T}\bigl(x_{0},\Psi({\infty})\bigr)\Bigr)\Bigr],$$
uniformly on compact sets. 
\end{theorem}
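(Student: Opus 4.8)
The plan is to read off Theorem \ref{util} from the main results of \cite{cr2007b}: the substance is to check that Assumptions \ref{filtra}, \ref{price1}, \ref{una}, together with the hypotheses \eqref{fini}, \eqref{egyutt}, \eqref{zero}, \eqref{conv} on the random utilities $\mathfrak{U}_{n}$, imply the structural conditions used there, and to supply the two items that are not literally contained in that reference: existence and uniqueness of each optimiser $\Psi(n)$ in \eqref{optil}, and the uniformity in $x_{0}$ of the convergence of optimal values.

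First I would extract from \eqref{fini} and \eqref{egyutt} a uniform upper envelope for the utilities. By \eqref{egyutt} one may fix $x_{\ast}$ with $\mathfrak{U}_{n}(\omega,\infty)-\mathfrak{U}_{n}(\omega,x)\le 1$ for all $n\in\bar{\mathbb{N}}$, all $\omega$, and all $x\ge x_{\ast}$; together with monotonicity and the a.s.\ finiteness of $\mathrm{ess.}\sup_{n}|\mathfrak{U}_{n}(\cdot,x_{\ast})|$ from \eqref{fini}, this shows that $\mathfrak{U}_{n}(\omega,x)$ is bounded above, uniformly in $n$ and $x\in\mathbb{R}$, by one a.s.\ finite random variable, while the first part of \eqref{fini} gives a matching a.s.\ lower bound at $+\infty$. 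Using concavity, the uniform lower bound \eqref{zero} on the marginal utility at $0$, and the uniform no-arbitrage Assumption \ref{una}, one then obtains a bound on near-optimal one-step positions that is uniform in $n$ and in the past, by exactly the estimate in the ``Boundedness of optimizer sequences'' step of the proof of Proposition \ref{recur}. These are precisely the facts on which \cite{cr2007b} runs, so no genuine translation is needed on the market side either.

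For fixed $n\in\bar{\mathbb{N}}$, existence of a maximiser $\Psi(n)\in\Phi$ in \eqref{optil} is the standard portfolio-optimisation statement for a utility bounded above, strictly concave, continuously differentiable and asymptotically satiated under the no-arbitrage Assumption \ref{una}; it can be cited from \cite{cr2007a} or reobtained by a backward recursion as in the proof of Proposition \ref{dyna} (dropping the H\"older bookkeeping, which is irrelevant here). Uniqueness follows from strict concavity of $\mathfrak{U}_{n}$ and Assumption \ref{una}: as in the computation bounding $\vb''$ in the proof of Proposition \ref{recur}, the nondegeneracy in \eqref{ep} forces the one-step objective to be strictly concave in the position, so each one-step optimiser $\hb$ is unique, hence so is $\Psi(n)$. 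The almost sure convergence \eqref{connie} of the optimisers and the convergence of the optimal values for fixed $x_{0}$ are then exactly the conclusion of the main theorem of \cite{cr2007b}, applied to the sequence $(\mathfrak{U}_{n})_{n}$, which tends pointwise to $\mathfrak{U}_{\infty}$ by \eqref{conv}; the mechanism there is that the $\Psi(n)$ are uniformly bounded by the previous step, so along any subsequence they converge a.s.\ to a feasible strategy which, by stability of the maximisation (upper semicontinuity of the objective via Fatou's lemma and the uniform upper envelope, together with dominated convergence where the relevant wealths are bounded), is optimal for $\mathfrak{U}_{\infty}$, hence equals $\Psi(\infty)$ by uniqueness, so that the whole sequence converges.

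It remains to upgrade the value convergence to uniformity on compact sets of $x_{0}$. Since $\Phi$ is convex and $(x_{0},\phi)\mapsto W_{T}(x_{0},\phi)$ is jointly affine while each $\mathfrak{U}_{n}(\omega,\cdot)$ is concave, a standard mixing-of-strategies argument makes $v_{n}(x_{0}):=\sup_{\phi\in\Phi}\mathbb{E}[\mathfrak{U}_{n}(\cdot,W_{T}(x_{0},\phi))]=\mathbb{E}[\mathfrak{U}_{n}(\cdot,W_{T}(x_{0},\Psi(n)))]$ a concave function of $x_{0}$, and it is finite by the boundedness from above; hence $v_{n}$ is continuous. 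By the classical fact that a pointwise convergent sequence of finite concave functions on $\mathbb{R}$ with finite (hence continuous) limit converges uniformly on compact subsets, $v_{n}\to v_{\infty}$ uniformly on compacts, which is the last assertion. The main obstacle I anticipate is bookkeeping rather than mathematics: one must make sure that the framework of \cite{cr2007b} accommodates $\omega$-dependent utilities, and that the uniform-in-$n$ controls \eqref{fini}--\eqref{zero}, rather than conditions on a single utility, are exactly what its convergence statement requires. If \cite{cr2007b} already covers this, the proof reduces to the hypothesis check plus the short concavity argument above; otherwise the fallback is to reprove the convergence directly along the lines of Section \ref{secune}, passing to the limit in each one-step optimiser $\hb$ via \eqref{conv}, dominated convergence/Fatou controlled by the uniform envelope, and strict concavity to identify the limit with the optimiser of $\mathfrak{U}_{\infty}$.
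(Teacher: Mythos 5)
Your overall route is the same as the paper's: reduce Theorem \ref{util} to the main result of \cite{cr2007b} and verify its hypotheses. However, there is one genuine gap. The only hypothesis of \cite{cr2007b} whose verification requires real work is its Assumption 2.3, a \emph{uniform asymptotic elasticity} condition of the form $\mathfrak{U}_{n}(\cdot,\lambda x)\leq \lambda^{\gamma}\,\mathfrak{U}_{n}(\cdot,x)$ for large $x$, uniformly in $n$ and $\omega$. You never identify this condition, and your ``uniform upper envelope'' extracted from \eqref{fini} and \eqref{egyutt} does not deliver it: asymptotic elasticity is a multiplicative statement about $x\mathfrak{U}_{n}'(\omega,x)/\mathfrak{U}_{n}(\omega,x)$, so one must first make the utilities bounded away from $0$ near $+\infty$. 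The paper does this by replacing $\mathfrak{U}_{n}$ with $\mathfrak{V}_{n}(\omega,x):=\mathfrak{U}_{n}(\omega,x)-\mathrm{ess.}\inf_{k\in\bar{\mathbb{N}}}\mathfrak{U}_{k}(\omega,\infty)+1$ (which has the same optimizers and satisfies $\mathfrak{V}_{n}(\cdot,\infty)\geq 1$), then uses concavity to get
\begin{equation*}
\frac{y}{2}\,\mathfrak{V}_{n}'(\omega,y)\leq \mathfrak{V}_{n}(\omega,\infty)-\mathfrak{V}_{n}(\omega,y/2),
\end{equation*}
so that \eqref{egyutt} forces $y\mathfrak{V}_{n}'(\omega,y)/\mathfrak{V}_{n}(\omega,y)\to 0$ uniformly in $n$ and $\omega$, and finally invokes Lemma 6.3 of \cite{ks} to obtain the elasticity bound with $\gamma=1/2$. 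Without this step (or an equivalent), the citation of \cite{cr2007b} is not justified, and your ``fallback'' of redoing the one-step limit passage would in effect have to rebuild the machinery of that paper. A second, smaller omission: you wave at the ``market side'' being fine, whereas the paper explicitly checks Hypothesis (R) of \cite{cr2007b} by showing that Assumption \ref{una} forces the support of the law of $f_{t}(e^{t-1},\varepsilon_{t})$ to contain points on both sides of $0$, so its affine hull is all of $\mathbb{R}$.

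On the positive side, your concavity argument for the last assertion is correct and self-contained: $v_{n}(x_{0}):=\sup_{\phi\in\Phi}\mathbb{E}[\mathfrak{U}_{n}(\cdot,W_{T}(x_{0},\phi))]$ is indeed concave in $x_{0}$ (mix initial capitals and strategies, use joint affinity of $W_{T}$ and concavity of $\mathfrak{U}_{n}$), and pointwise convergence of finite concave functions on $\mathbb{R}$ is locally uniform. The paper instead obtains the uniform-on-compacts convergence directly from Theorem 2.1 and Remark 2.5 of \cite{cr2007b}, so your argument is a legitimate alternative for that one clause — but it presupposes the pointwise convergence of values, which again rests on the unverified elasticity hypothesis.
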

\begin{proof} 
Hypothesis (R) of \cite{cr2007b} is
automatic from Assumption \ref{una}. Indeed, let $\mu$ denote the law of $f_{t}(e^{t-1},\varepsilon_{t})$ under 
$\mathbb{P}$. 
There exists $x_+\in\mathrm{Supp}(\mu)$ with $x_+\ge\alpha$. Else 
$[\alpha,+\infty)$ would be disjoint from the support and therefore
$
\mu\big[[\alpha,+\infty)\big]=0,
$
contradicting $\alpha>0$ in \eqref{ep}.  Similarly there exists $x_-\in\mathrm{Supp}(\mu)$ with $x_-\le-\alpha$. 
As  the support contains at least two distinct points, 
the affine hull 
is the whole real line.  

By Remark \ref{conpr}, Assumption 2.1 of \cite{cr2007b} also holds. 
Let $\bar{ \Omega}$ be the full measure set where all the assumptions of Theorem \ref{util} hold.  Set  for all $\omega \in \bar{ \Omega}$,
\begin{eqnarray*}
\iota(\omega) & := &  \mathrm{ess.}\inf_{n\in\bar{\mathbb{N}}}\mathfrak{U}_{n}(\omega,\infty) -1 \in \mathbb{R}\\
\mathfrak{V}_{n}(\omega,x) & := & \mathfrak{U}_{n}(\omega,x) - \iota(\omega), \, x\in \mathbb{R}, \;n\in\bar{\mathbb{N}}.
\end{eqnarray*}
Then, it is clear that \eqref{optil} for $\mathfrak{U}_{n}$ and $\mathfrak{V}_{n}$ have the same optimizers. 
Moreover, $\mathfrak{V}_{n}$ statisfies \eqref{fini},  \eqref{egyutt}, \eqref{zero} and \eqref{conv}. 
So, our conditions for $\mathfrak{V}_{n}$  also imply Assumption 2.2 and those
in Remark 2.5 of \cite{cr2007b}.  It remains to show that Assumption 2.3 of \cite{cr2007b} is also true with,
say, $\gamma=1/2$, that is, there exists $\tilde{x}>0$, such that for all $x\geq \tilde{x},$ $n\in\bar{\mathbb{N}}$
\begin{equation}\label{ae}
\mathfrak{V}_{n}(\cdot, \lambda x)\leq \lambda^{1/2}\mathfrak{V}_{n}(\cdot,x) \mbox{ a.s. }
\end{equation}
We remark that one could verify that assumption for arbitrary $\gamma>0$. \eqref{ae}
is a condition on \emph{asymptotic elasticity}, see Section 6 of \cite{ks} for
a detailed discussion of this notion.

Fix $n\in\bar{\mathbb{N}}$. To show \eqref{ae}, notice first that for all $y>0$, all $\omega \in \bar{ \Omega}$, 
\begin{eqnarray*}
\frac{y}{2}\mathfrak{V}_{n}'(\omega,y)\leq \int_{y/2}^{y}\mathfrak{V}_{n}'(\omega,t)\, dt\leq \mathfrak{V}_{n}(\omega,\infty)-\mathfrak{V}_{n}(\omega,y/2).	
\end{eqnarray*}
Choose $\tilde{y}$ so large that 
$\sup_{n\in\bar{\mathbb{N}},\omega\in\Omega}[\mathfrak{V}_{n}(\omega,\infty)-\mathfrak{V}_{n}(\omega,\tilde{y})]\leq 1/2$, this is
possible by \eqref{egyutt}. Then, since $\mathfrak{V}_{n}(\omega,\infty)\geq 1$, we have
$\mathfrak{V}_{n}(\omega,y)\geq 1/2$ for $y\geq \tilde{y}$, so
$$
0 \leq \frac{y \mathfrak{V}_{n}'(\omega,y)}{\mathfrak{V}_{n}(\omega,y)}\leq 
4[\mathfrak{V}_{n}(\omega,\infty)-\mathfrak{V}_{n}(\omega,y/2)] \leq 4 
\sup_{n\in\bar{\mathbb{N}},\omega\in\Omega}[\mathfrak{V}_{n}(\omega,\infty)-\mathfrak{V}_{n}(\omega,y/2)].$$
Hence $\frac{y \mathfrak{V}_{n}'(\omega,y)}{\mathfrak{V}_{n}(\omega,y)}$ tends to $0$
as $y\to\infty$, uniformly in $n \in\bar{\mathbb{N}}$ and $\omega\in\Omega$, by \eqref{egyutt}. 
We obtain that there is $\bar{y}\geq \tilde{y}$
such that for $y\geq \bar{y}$, $n \in\bar{\mathbb{N}}$
$$y\mathfrak{V}_{n}'(\omega ,y)<\frac12\mathfrak{V}_{ n}(\omega ,y).$$ 
Now applying the argument of $(ii)\Rightarrow (i)$ in Lemma 6.3 of \cite{ks} with the choice $\gamma=1/2$,{}
it follows that for some $\tilde{x}\geq \bar{y}$, \eqref{ae} holds.   
Now Theorem 2.1 and Remark 2.5 of \cite{cr2007b} imply the statements of our theorem.	
\end{proof}

An immediate corollary of the Ascoli theorem is noted next.

\begin{proposition}\label{arzela-ascoli}
Let $\mathcal{S}$ be a compact subset in a Euclidean space $\mathbb{R}^{N}$ with norm $|\cdot|$. Let $\Psi\subset C(\mathcal{S})$
be such that, 
$$
\sup_{\psi\in\Psi}|\psi(x)|<\infty, \, x\in\mathcal{S}
$$ 	
and, for some $\theta,A>0$, 
\begin{equation}\label{equi}
\sup_{\psi\in\Psi}|\psi(x)-\psi(y)|\leq A|x-y|^{\theta},\ x,y\in\mathcal{S}.
\end{equation}
Then $\Psi$ is relatively compact in the Banach space $C(\mathcal{S})$.
\end{proposition}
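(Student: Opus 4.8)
The plan is to obtain the statement as a direct corollary of the classical Arzel\`a--Ascoli theorem, which asserts that a subset of $C(\mathcal{S})$, with $\mathcal{S}$ a compact metric space, has compact closure precisely when it is uniformly bounded and (uniformly) equicontinuous. Thus the entire task reduces to verifying these two properties for $\Psi$ and then invoking the theorem (see \cite{AB}).

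First I would check equicontinuity, which is immediate from \eqref{equi}: given $\varepsilon>0$, the choice $\delta:=(\varepsilon/A)^{1/\theta}$ works simultaneously for all $\psi\in\Psi$, since $|x-y|<\delta$ forces $|\psi(x)-\psi(y)|\le A|x-y|^{\theta}<\varepsilon$. Next I would upgrade the assumed \emph{pointwise} bound $\sup_{\psi\in\Psi}|\psi(x)|<\infty$ to a \emph{uniform} bound. Here one uses that $\mathcal{S}$, being compact, is bounded, so $\mathrm{diam}(\mathcal{S})<\infty$; fixing any $x_{0}\in\mathcal{S}$ and setting $M_{0}:=\sup_{\psi\in\Psi}|\psi(x_{0})|<\infty$, \eqref{equi} gives, for every $\psi\in\Psi$ and every $x\in\mathcal{S}$,
\[
|\psi(x)|\le |\psi(x_{0})|+A|x-x_{0}|^{\theta}\le M_{0}+A\,\mathrm{diam}(\mathcal{S})^{\theta},
\]
so that $\sup_{\psi\in\Psi}\Vert\psi\Vert_{\infty}<\infty$.

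With uniform boundedness and equicontinuity in hand, the Arzel\`a--Ascoli theorem yields that the closure of $\Psi$ in $C(\mathcal{S})$ is compact, i.e.\ $\Psi$ is relatively compact, which is the claim. I do not expect a genuine obstacle here: the only point that is not entirely automatic is that the hypothesis provides only pointwise boundedness, so one must perform the pointwise-to-uniform upgrade above before Ascoli becomes applicable; everything else is routine. The vector-valued nature of $C(\mathcal{S})$, its elements being $\mathbb{R}^{T}$-valued in our setting, causes no difficulty, the Ascoli theorem being valid for maps into any finite-dimensional space.
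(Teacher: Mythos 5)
Your proof is correct and follows essentially the same route as the paper, which also just verifies the hypotheses of the classical Arzel\`a--Ascoli theorem and notes that equicontinuity is immediate from \eqref{equi}. The only difference is your explicit upgrade from pointwise to uniform boundedness, which is sound but not strictly needed for the version of Ascoli's theorem the paper cites (Rudin's Theorem A5 requires only pointwise boundedness together with equicontinuity).
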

\begin{proof}
By Theorem A5 and its corollary in \cite{rudin} (see also Theorem A4), we only need to check equicontinuity of the elements of $\Psi$, which is trivial from \eqref{equi}.	
\end{proof}

Finally, we recall the celebrated theorem of Schauder, see \cite[Theorem 5.28]{rudin}.

\begin{theorem}\label{schauder} Let $\mathbb{B}$ be a Banach space, 
$H\subset \mathbb{B}$ a nonempty, compact, convex subset. If $\upsilon:H\to H$ is a continuous mapping
then there is $p\in H$ with $\upsilon(p)=p$.\hfill $\square$
\end{theorem}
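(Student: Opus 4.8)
This statement is the classical Schauder fixed point theorem, which the paper only \emph{recalls} (citing \cite[Theorem 5.28]{rudin}); nevertheless, here is the route I would take to prove it, reducing to Brouwer's fixed point theorem via a finite-dimensional approximation. The plan is: (i) approximate $\upsilon$ by maps with finite-dimensional range, (ii) apply Brouwer to each approximation, (iii) pass to the limit using compactness of $H$ and continuity of $\upsilon$.

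First I would use compactness of $H$: for each $\varepsilon>0$ pick finitely many points $x_1,\dots,x_n\in H$ with $H\subseteq\bigcup_{i=1}^n B(x_i,\varepsilon)$, and introduce the Schauder projection $P_\varepsilon:H\to\mathbb{B}$,
$$
P_\varepsilon(x):=\frac{\sum_{i=1}^n\lambda_i(x)\,x_i}{\sum_{i=1}^n\lambda_i(x)},\qquad \lambda_i(x):=\max\{0,\ \varepsilon-\|x-x_i\|\}.
$$
The denominator is strictly positive on $H$ (every $x$ lies in some $B(x_i,\varepsilon)$) and the $\lambda_i$ are continuous, so $P_\varepsilon$ is continuous. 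Since $P_\varepsilon(x)$ is a convex combination of those $x_i$ with $\|x-x_i\|<\varepsilon$, one has $\|P_\varepsilon(x)-x\|<\varepsilon$ for all $x\in H$; and because $H$ is convex and contains each $x_i$, $P_\varepsilon$ takes values in $H_\varepsilon:=\mathrm{conv}\{x_1,\dots,x_n\}\subseteq H$, a nonempty compact convex subset of the finite-dimensional subspace $\mathrm{span}\{x_1,\dots,x_n\}$.

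Next I would apply Brouwer's theorem to $\upsilon_\varepsilon:=P_\varepsilon\circ\upsilon$, which maps the finite-dimensional compact convex set $H_\varepsilon$ continuously into itself (as $\upsilon(H_\varepsilon)\subseteq H$ and $P_\varepsilon(H)\subseteq H_\varepsilon$); this gives $p_\varepsilon\in H_\varepsilon$ with $\upsilon_\varepsilon(p_\varepsilon)=p_\varepsilon$, hence $\|p_\varepsilon-\upsilon(p_\varepsilon)\|=\|P_\varepsilon(\upsilon(p_\varepsilon))-\upsilon(p_\varepsilon)\|<\varepsilon$. Taking $\varepsilon=1/k$ and writing $p_k:=p_{1/k}\in H$, compactness of $H$ yields a subsequence $p_{k_j}\to p\in H$; continuity of $\upsilon$ gives $\upsilon(p_{k_j})\to\upsilon(p)$, so $\|p-\upsilon(p)\|=\lim_j\|p_{k_j}-\upsilon(p_{k_j})\|=0$, i.e.\ $p$ is the desired fixed point.

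The main obstacle is that this scheme rests entirely on Brouwer's fixed point theorem in finite dimensions, whose proof is genuinely topological (via degree theory, Sperner's lemma, or the no-retraction theorem) and cannot be circumvented; once Brouwer is granted, the only substantive work is the construction and continuity of $P_\varepsilon$ together with the elementary estimate $\|P_\varepsilon(x)-x\|<\varepsilon$, all of which is routine. Since the result is invoked only as a black box in the proof of Theorem \ref{main}, citing \cite{rudin} is of course entirely sufficient here.
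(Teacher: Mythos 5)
Your proof is correct: it is the standard reduction of Schauder's theorem to Brouwer's theorem via the Schauder projection $P_\varepsilon$ onto the convex hull of a finite $\varepsilon$-net, and every step (positivity of the denominator, the estimate $\|P_\varepsilon(x)-x\|<\varepsilon$, the application of Brouwer on $H_\varepsilon$, and the limiting argument using compactness of $H$ and continuity of $\upsilon$) is sound. The paper itself gives no proof of this classical result, only the citation to \cite{rudin}, so there is nothing to compare against; your argument is precisely the textbook one, and, as you note, relying on the citation is entirely adequate for the paper's purposes.
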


\end{document}